\newcommand{\N}{\mathbb{N}}
\newcommand{\E}{\mathbb{E}}
\newcommand{\RV}{\operatorname{RV}}
\newcommand{\IV}{\operatorname{IV}}
\newcommand{\rIV}{\operatorname{rIV}}
\newcommand{\IQ}{\operatorname{IQ}}
\newcommand{\plim}{\operatorname{plim}}
\newcommand{\Prob}{\mathbb{P}}
\newcommand{\proofpart}[1]{{\bf{#1.}}}
\newcommand{\Fil}{\mathbb{F}}
\newcommand{\Filint}{\Fil^{\lambda,\varsigma}}
\newcommand{\FilintN}{\Fil^{\lambda,\varsigma, N}}
\newcommand{\F}{\mathcal{F}}
\newcommand{\Fint}{\mathcal{F}^{\lambda,\varsigma}}
\newcommand{\FintN}{\mathcal{F}^{\lambda,\varsigma, N}}
\newcommand{\Stil}{\int_0^T\varsigma^2(r)dN(r)}
\newcommand{\Sno}{\int_0^T\varsigma^2(r)\lambda(r)dr}
\newcommand{\btau}{\boldsymbol{\tau}}
\newcommand{\Ui}{U_i}
\newcommand{\edit}[1]{\textcolor{black}{#1}}
\theoremstyle{definition} 
\newtheorem{thm}{Theorem}
\newtheorem{lem}[thm]{Lemma}
\newtheorem{cor}[thm]{Corollary}
\newtheorem{prop}[thm]{Proposition}
\theoremstyle{definition}
\newtheorem{defn}[thm]{Definition}
\newtheorem{rmk}[thm]{Remark}
\newtheoremstyle{assumption}
{3pt}
{3pt}
{}
{}
{\bf}
{.}
{.5em}
{\thmname{#1} (\thmnote{#3}\thmnumber{#2})}
\theoremstyle{assumption}
\newtheorem{ass}{Assumption}
\newcommand{\Comments}{1}
\newcommand{\mynote}[2]{\ifnum\Comments=1\textcolor{#1}{#2}\fi}
\newcommand{\mytodo}[2]{\ifnum\Comments=1%
	\todo[linecolor=#1!80!black,backgroundcolor=#1,bordercolor=#1!80!black]{#2}\fi}
\begin{document}

\title{\textsc{Efficient Sampling for Realized Variance Estimation in Time-Changed Diffusion Models}}

\author{
	{\normalsize 
	Timo \textsc{Dimitriadis}${}^{1}$, 
	Roxana \textsc{Halbleib}${}^{2}$,
	Jeannine \textsc{Polivka}${}^{3}$,}
	\\
	{\normalsize
	Jasper \textsc{Rennspies}${}^{4}$,
	Sina \textsc{Streicher}${}^{5}$ and
	Axel Friedrich \textsc{Wolter}${}^{6}$}
}
 
\normalsize

\date{\today}

\maketitle

\begin{abstract}
 	This paper analyzes the benefits of sampling intraday returns in intrinsic time for the realized variance (RV) estimator.
	We theoretically show in finite samples that depending on the permitted sampling information, the RV estimator is most efficient under either hitting time sampling that samples whenever the price changes by a pre-determined threshold, or under the new concept of realized business time that samples according to a combination of observed trades and estimated tick variance.
	The analysis builds on the assumption that asset prices follow a diffusion that is time-changed with a jump process that separately models the transaction times.
	This provides a flexible model that allows for leverage specifications and Hawkes-type jump processes and separately captures the empirically varying trading intensity and tick variance processes, which are particularly relevant for disentangling the driving forces of the sampling schemes.
	Extensive simulations confirm our theoretical results and show that for low levels of noise, hitting time sampling remains superior while for increasing noise levels, realized business time becomes the empirically most efficient sampling scheme.
	An application to stock data provides empirical evidence for the benefits of using these intrinsic sampling schemes to construct more efficient RV estimators as well as for an improved forecast performance.\\[0.1cm]

	\noindent \textit{Keywords}: Business time, Efficient estimation, High-frequency data, Hitting time, Pure jump process, Realized variance, Time-changed diffusion model\\[0.1cm]

	\noindent \textit{JEL classification}: C22, C32, C51, C58, C83
\end{abstract}

\vspace{3cm} 
 
\footnotetext[1]{Corresponding author.  Faculty of Economics and Business, Goethe University Frankfurt, 60629 Frankfurt am Main, Germany, \href{mailto:dimitriadis@econ.uni-frankfurt.de}{dimitriadis@econ.uni-frankfurt.de}.}
\footnotetext[2]{Institute of Economics, University of Freiburg, Germany; email: \href{mailto:roxana.halbleib@vwl.uni-freiburg.de}{roxana.halbleib@vwl.uni-freiburg.de}}
\footnotetext[3]{University of St.~Gallen, Switzerland} 
\footnotetext[4]{Institute of Economics, University of Freiburg, Germany; email: \href{mailto:jasper.rennspies@vwl.uni-freiburg.de}{jasper.rennspies@vwl.uni-freiburg.de}}
\footnotetext[5]{KOF Swiss Economic Institute, ETH Zürich, 
Switzerland; email: \href{mailto:streicher.sina@gmail.com}{streicher.sina@gmail.com}}
\footnotetext[6]{Department of Computer and Information Science, University of Konstanz, Germany; email: \href{mailto:axel-friedrich.wolter@uni-konstanz.de}{axel-friedrich.wolter@uni-konstanz.de}}

\pagebreak
\onehalfspacing

\section{Introduction}

The estimation and forecasting of the variance of daily stock returns plays a major role in risk management, portfolio optimization and asset pricing. Accurate estimates of the daily variation of asset prices are commonly obtained by using intraday information as in the realized variance (RV) estimator introduced by \citet*{andersen1998a} and \citet*{andersen2001a, andersen2001b}. Together with \citet*{barndorff2002a} and \citet*{meddahi2002}, they show that under the assumption that the logarithmic price process follows a standard continuous-time diffusion model, \edit{RV is an unbiased and consistent estimator of the quadratic variation, which coincides with the integrated variance (IV) in the absence of jumps \citep*{barndorff2008, Barndorff2011SubsamplingRK, Andersen2012}.}

Despite the theoretically appealing approaches of subsampling \citep{zhang2005}, realised kernels \citep{barndorff2008} and pre-averaging \citep{PodolskijVetter2009} for robustifying the RV estimator to market mircrostructure noise (MMN), the standard RV estimator at low frequencies such as sampling every five minutes is still regularly employed in empirical work, see e.g.\ \citet{LiuPattonSheppard2015, bollerslev2018risk, bollerslev2020good, bollerslev2022zero, bates2019crashes, bucci2020realized, reisenhofer2022harnet, alfelt2023singular, Patton2023Bespoke} among many others.\footnote{More fundamentally, the bibliographic review of \citet{Hussain2023} analyses 2920 papers and summarizes that ``5-minute interval data appear to be the most favored choices in terms of high-frequency data usage.''}
Reasons for the standard RV's ongoing popularity include its simple and intuitive implementation, the fact that low(er) frequencies can be used at which MMN is not a major concern, that its convergence rate is substantially faster compared to the previously mentioned approaches,
and that it still performs comparably well in empirical studies \citep{LiuPattonSheppard2015}.

While most of the literature focuses on sampling returns \emph{equidistantly in calendar time} such as every five minutes, financial markets do not tick in calendar time. 
Instead, their intraday trading activity and tick variance (price variance of adjacent transactions or quotes) is time-varying, which might provide important information about the market's pulse and especially its riskiness. 
In this paper, we study the theoretical and empirical benefits of using intraday returns sampled in intrinsic time scales to  efficiently estimate the daily IV through the RV estimator.
These time scales accelerate the clock time when the trading or price variations are intense, and they slow the time down when the markets are calm. 
In particular, we differentiate between the time scale driven by the trading activity (Transaction Time Sampling - TTS), the intraday price volatility (Business Time Sampling - BTS), and observed absolute price changes (Hitting Time Sampling - HTS).
For TTS and BTS, we distinguish their implementation into \emph{intensity} and \emph{realized/jump-based} versions, where the latter use the observed amount of trades on a given day whereas the former rely on estimated intensities.
In contrast to e.g., \citet[Section 4]{bandi2008} who derive an optimal sampling frequency given equidistant sampling points, we focus on the ``inverse'' question of how to optimally allocate the sampling points under a given frequency. 
\edit{By \emph{optimal} or \emph{efficient}, we mean a sampling scheme that, among a class of unbiased schemes, attains the smallest mean squared error (MSE), which in this case equals its estimation variance.}

Summarizing our main contributions, we show that using HTS, which samples such that the absolute returns are (approximately) equal, provides a theoretical lower bound for the efficiency of the RV estimator in finite samples in terms of its \edit{MSE}. 
Furthermore, the newly introduced realized BTS (rBTS) scheme, which samples according to a combination of the observed ticks and the (estimated) variance at these ticks, arises as most efficient when restricting attention to sampling schemes that do not use the  observed high-frequency prices for the construction of the sampling points.
This restriction is motivated by the empirical presence of MMN, which has a particularly severe impact on HTS as its sampling times are obtained directly from the noise-contaminated price observations.
In our simulations and the empirical application, both HTS and rBTS exhibit an excellent and overall comparable performance, and clearly dominate the classically used sampling in calendar or tick time.
While HTS dominates for very low frequencies where MMN is (almost) absent, rBTS arises as most efficient when the sampling frequency exceeds the 5 minute level.

Our theory builds on the assumption of a price process that follows a stochastic diffusion that is time-changed with a  jump (e.g., doubly stochastic Poisson or Hawkes) process.
We call this the \emph{tick-time stochastic volatility (TTSV) model}.
It is a joint stochastic model for the  asset prices together with their transaction (or quote) arrival times.
The prices in this model follow a pure jump process that accommodates the time-varying \emph{trading intensity} and \emph{tick variance} processes within its diffusive component.
The spot variance becomes the product of these two time-varying components that behave empirically different for stock markets as portrayed in Figures \ref{fig:TTM_snippet} and  \ref{fig:EstimatedIntensities} below.

The TTSV model is a simple and transparent framework to study statistical (finite sample) properties of the RV estimator with respect to various choices of sampling schemes.
This is achieved by having the trading intensity and tick variance as two separately evolving processes that jointly govern the price variability.
The separate modeling of trading intensity and tick variance particularly allows for a comparative theoretical analysis of sampling according to calendar time, tick time in the sense of observed ticks or trading intensity, business time as measured by (realized) intraday volatility and hitting time by homogenizing absolute price changes.

A theoretical alternative is to work under \emph{discretized} diffusion models as e.g.\ employed in \citet{Jacod2017, Jacod2019, Jacod2018, DaXiu2021, Li2022remedi}, where a continuous diffusion process is augmented with a process separately modeling the arrivals of the transactions.
Similar to the TTSV model, the resulting price process is a pure jump process with price changes at the explicitly modeled arrivals of the transactions.
We illustrate in Appendix \ref{sec:ComparisonDiscretization} that these discretized diffusions are closely related to the TTSV model.
While similar (finite sample or asymptotic) efficiency results might be derived by relying on discretized diffusions, the TTSV model is attractive due to its simplicity and transparency in distinguishing between trading intensity and tick variance.
Some of our results (in particular, Theorem~\ref{thm:EfficientSampling} (b) and (c)), however, require strong independence conditions on the underlying TTSV processes, which could possibly be weakened when working with discretized diffusions.
The TTSV model, however, also allows for the analysis of the price-dependent HTS scheme (opposed to e.g., \citet[Assumption O (2c)]{Li2022remedi}, \citet[Assumption O (ii)]{Jacod2017} and \citet[Assumption A on page 302]{ait2014high}) and it yields the novel realized BTS scheme due to the explicit modeling of the trading times, hence refining the (asymptotic) efficiency results of \citet{Barndorff2011SubsamplingRK}.

Although the idea of intrinsic time sampling is not new to the literature, especially with regard to its empirical benefits \citep*{clark1973,oomen2005,oomen2006,hansen2006,Andersenal2007,Andersenal2010,aitsahalia2011}, its theoretical advantages over the classical calendar time sampling (CTS) scheme are still largely unexplored, especially in finite samples.
Exceptions are \citet*{oomen2005,oomen2006}, who study the statistical properties of RV under intrinsic time sampling schemes, however, based on a compound Poisson price assumption \citep*{press1967}, whose volatility pattern is solely driven by the trading intensity (see also \citet*{griffin2008}).
Hence, this model misses a substantial source of daily return variation, i.e., the one due to the tick variance, as illustrated in Figures \ref{fig:TTM_snippet} and  \ref{fig:EstimatedIntensities} below.
Furthermore, \citet[Corollary 2]{Barndorff2011SubsamplingRK} show that (intensity) BTS arises as an asymptotically efficient \emph{deterministic} sampling scheme for (subsampled) realized kernel estimators.
Our results however also apply to finite sampling frequencies and allow for sampling based on observed ticks and prices (instead of being deterministic) and can hence accommodate the HTS and realized BTS schemes.
\cite{Fukasawa2010RV} analyses the asymptotic MSE of the RV estimator under endogenous sampling schemes, assuming a continuous semi-martingale for the price process that is observed whenever the price changes by a fixed quantity.
\cite{Fukasawa2010RV} shows that, asymptotically, HTS is most efficient. 
In this light, Theorem~\ref{thm:EfficientSampling} (a) can be interpreted as a finite-sample analogue of his result, albeit established in a different setting.
\citet{FukasawaRosenbaum2012}, \citet{robert2012volatility} and \citet{Vetter2017} provide further asymptotic results under endogenous sampling times.

Pure jump processes, as the TTSV model, have already proven to be valuable alternatives to continuous diffusion models to describe financial prices, as they not only capture empirically observed random trading times and price discontinuities, but also offer a flexible framework to address MMN contamination or to price derivatives; see e.g., \citet*{press1967}, \citet*{carr2004}, \citet*{engle2005}, \citet*{oomen2005,oomen2006}, \citet*{Liesenfeld2006} and \citet*{ShephardYang2017}. 
These processes can be further framed and generalised within stochastic time-changed structures, which are mathematically and empirically very effective, but have received so far only moderate attention in the financial econometrics literature \citep*{clark1973,carr2004}. 

The decomposition of spot variance in trading intensity and tick variance has already been  addressed by \citet*{jones1994}, \citet*{ane2000},  \citet*{plerou2001}, \citet*{gabaix2003}, \citet*{Dahlhaus2014}, \citet*{dahlhaus2016}, among others, when studying the intraday trading behaviour in relation to the intraday clock volatility pattern in order to measure spot variance or to test for normality of intraday returns sampled in transaction time scales. They find that, while the intraday trading is highly correlated with the intraday spot variance, the tick variance affects the spot variance as well, although it has a flatter intraday shape. 
Our empirical observation on stock markets complements these findings and reveals that the intraday tick variance and the trading intensity follow mirrored ``J'' patterns (also see \citet*{Admati1988}, \citet*{oomen2006} or \citet*{dong2014}), which jointly result in the well known ``U'' shape of the intraday spot variance, as documented by \citet{Harris1986}, \citet{Wood1985}, \citet{andersen1997} and \citet{Bauwens2001}.

We validate our theoretical results in extensive simulations, where we also examine the impact of a leverage effect through an asymmetric Hawkes-type process and different specifications of MMN on the bias and the MSE of the RV estimator.
Our empirical results show that, as predicted by our theory, the HTS scheme provides the most efficient RV estimates in the absence of noise.
However, the HTS scheme is most sensitive to noise as its sampling times directly rely on absolute changes of the noisy price process.
In contrast, the rBTS scheme is more robust to noise and is superior for the typical sampling frequencies between 1 and 5 minutes under noisy price processes.
The rBTS scheme also clearly dominates a classical implementation of (intensity) BTS, different implementations of TTS and the baseline case of CTS.  

The empirical application considers 27 liquid stocks traded at the New York Stock Exchange (NYSE).
It provides clear empirical evidence for the benefits of using HTS and realized BTS for increasing the statistical quality of the RV estimator in terms of MSE and QLIKE loss in both an in-sample estimation and out-of-sample forecast environment based on the Heterogeneous AutoRegressive (HAR) model of \citet{Corsi2009}.
For the in-sample evaluation, we follow the method of \citet*{Patton2011RV} that facilitates the empirical comparison of competing RV estimators, in our case computed from the different sampling schemes.
The empirical results particularly stress the practical relevance of the HTS and the realized BTS scheme by showing their superiority in a model-free environment.

The remainder of the paper is structured as follows. 
In Section \ref{theory}, we introduce the TTSV model and derive theoretical efficiency results for finite sampling frequencies for the RV estimator.
Section \ref{sec:Simulation} presents a comprehensive simulation study that analyses the performance of RV under different sampling schemes and Section \ref{sec:Application} provides an empirical application to real data. 
We conclude in Section \ref{sec:Conclusions}.
\edit{Appendix~\ref{sec:ProofsMain} provides proofs for our main results.}

The supplemental material contains a comparison to discretized diffusions in Appendix \ref{sec:ComparisonDiscretization}, additional finite sample theory in a setting where sampling can use information from the end of the trading day in Appendix~\ref{sec:EfficientSamplingUptoT}, and a specific comparison to the results of \citet{oomen2006} in Appendix~\ref{sec:ComparisonOomen2006}.
\edit{All proofs---other than those in Appendix~\ref{sec:ProofsMain}---are collected in Appendix~\ref{app:proofs}.} 
Appendix~\ref{sec:RemainderTerms} discusses generalizations of some theoretical results to mildly dependent processes and  Appendix~\ref{sec:AdditionalResults} contains additional empirical results.

\section{Theory} 
\label{theory}

This section introduces some preliminaries in Section \ref{concepts} and presents the TTSV model in Section \ref{sec:TTSV_model}.
Sections \ref{sec:EfficiencyFiniteSample} and \ref{sec:SamplingSchemes} establish finite sample efficiency results for the RV estimator, \edit{which is complemented by additional theory in Appendix~\ref{sec:EfficientSamplingUptoT} that allows for employing information from the entire trading day.}

\subsection{Preliminaries}
\label{concepts}
	
Throughout the paper, all random objects are defined on a filtered probability space $\left(\Omega,\F, \Fil,\Prob\right)$ with filtration $\Fil = \{\mathcal{F}_t\}_{t\geq0}$ that we specify in Assumption \eqref{ass:filtration} below.
If not stated otherwise, all (in)equalities of random variables are understood to hold almost surely.
Let $\{P(t)\}_{t\geq0}$ denote the stochastic process representing the logarithmic price process of an asset, which we assume to be a continuous-time stochastic process that is right-continuous with left limits.
We sometimes abuse notation and simply write $P(t)$, which we also do for other stochastic processes.
\edit{We denote the quadratic variation of the process $P(t)$ over  $[0,T]$ by $[P]_T$.}

For  $0 \le s \le t$, we define the logarithmic return over the interval $[s,t]$ by 
\begin{equation*} 
	\label{eq:return} 
	r(s,t) := P(t)-P(s).
\end{equation*} 
Then, the (model free) \textit{spot (or instantaneous) variance} of the logarithmic price $P$ at time $t$ is\footnote{We consider spot variance in \emph{calendar time} (instead of some intrinsic time) as this conveniently allows to link it to the trading intensity and tick variance as later formalized in Proposition \ref{prop:vola_decomposition}.}
\begin{equation}
	\label{eqn:DefSpotVolatility}
	\sigma^2(t):=\lim_{\delta \downarrow 0} \frac{1}{\delta} \E \left[  r^2 (t, t+\delta) \; \middle| \;  \mathcal{F}_t\right].
\end{equation}

In this paper, we are interested in estimating the \edit{price variability} within a given time period $[0,T]$, where we focus on the case of $T$ being one trading day, i.e., the daily return is given by $r_{\mbox{\scriptsize daily}}:=r\left(0,T\right)=P\left(T\right)-P\left(0\right)$.
Here, \edit{this price variability} is measured by the \textit{integrated variance} (IV) associated with the logarithmic price process $P(t)$ over the interval $[0,T]$ \citep{barndorff2002a, andersen2006}.
\edit{Formally, the IV is defined as}
\begin{equation}
	\label{eqn:GenIVDef}
	\IV\left(0,T\right) := \int_0^T \sigma^2(r)dr.
\end{equation}
Proposition \ref{prop:IVvola} below provides a more formal justification for the IV as our object of interest given that \edit{in expectation}, it equals the variance of the daily asset return.

We primarily focus on the specific choice of a \emph{sampling scheme} for sparsely sampled intraday returns for estimating IV.
Given a filtration $\mathbb{G} = \{\mathcal{G}_t\}_{t\geq0}$ with $\mathcal{G}_t \subset \mathcal{F}_t$, a $\mathbb{G}$-adapted stopping time sampling scheme $\boldsymbol{\tau}$ is a sequence of \edit{increasing} $\mathbb{G}$-adapted stopping times on $[0,T]$,
\begin{equation}
   	\label{eqn:sampling_scheme}
	\boldsymbol{\tau} = \{\tau_0, \tau_1, ...\} \subseteq [0,T],
\end{equation}
such that $\tau_{j-1} \le \tau_j$ for all $j\in\mathbb{N}$. We require $\tau_0=0$ and that for almost all $\omega \in \Omega$ there exists an $n(\omega)\in \mathbb{N}$ such that $\tau_{n(\omega)}(\omega) = T$ \edit{and that $\tau_{j-1} < \tau_j$ for all $j \le n(\omega)$.}
We give specific examples how $\btau$ can be chosen in Section~\ref{sec:SamplingSchemes}. 

Given the sampling times $\btau$, the corresponding intraday returns are
\begin{equation} 
	\label{eq:return_intraday} 
	r_{j} := r(\tau_{j-1},\tau_j) = P(\tau_j ) - P(\tau_{j-1}),    \qquad j=1,\ldots,M,
\end{equation} 
where we associate to a sampling scheme $\btau$ the (random) number of intraday returns $M = M(\btau) = \inf\{n:\tau_n=T\}$. 
Based on the $M \in \N$ intraday returns $r_{j}$ from the grid $\btau$, we follow \cite{andersen1998a}, among many others, and define the \textit{realized variance} (RV) estimator as
\begin{equation}
	\label{eq:RVDef}
	\RV (\btau) := \sum_{j=1}^{M} r_{j}^2,
\end{equation}
where we stress the dependence on the employed sampling scheme with the argument $\btau$.

\subsection{The Tick-Time Stochastic Volatility Model}
\label{sec:TTSV_model}

We model the ticks and log-prices based on a diffusion $B$ with stochastic tick variance $\varsigma$, where $B$ is time-changed by a jump process $N$ (e.g., Poisson- or Hawkes-type) that models the individual ticks.
We refer to this as the Tick-Time Stochastic Volatility (TTSV) model,
\begin{equation}
	\label{eq:TTSV_model}
    \edit{
	P(t) = P(0) + \int_0^t \varsigma(r) \, dU(r),}
\end{equation}
for $t\in [0,T]$, \edit{where $U(r) = B(N(r))$}.
Formally, we build the model on the following assumption:

\begin{ass}
	\label{ass:filtration}
	We assume that there exists a filtered probability space $(\Omega, \mathcal{F},\Fil,\mathbb{P})$, where the filtration\footnote{The minimal filtration that satisfies Assumption~\eqref{ass:filtration} is \edit{the completed right-continuous version of} $\Fil^*=\big\{ \sigma(N(s),\lambda(s),\varsigma(s),B(N(s)), \; 0\leq s\leq t) \big\}_{t \in [0,T]}$.}
    $\Fil=\{\mathcal{F}_t\}_{t \in [0,T]}$ satisfies the usual assumptions (completeness and right-continuity), and there exist:
	\begin{enumerate}[label=(\alph*)]
		\item 
		a counting process $\{N(t)\}_{t \in [0,T]}$, which is an $\Fil$-adapted jump process with a scalar, positive and $\Fil$-predictable intensity process $\{\lambda(t)\}_{t \in [0,T]}$ that is left-continuous with right-hand limits and $\int_0^t\lambda(r)dr<\infty$ a.s. for all $t \in [0,T]$;
	
		\item 
		a tick volatility process $\{\varsigma(t)\}_{t \in [0,T]}$ that is a positive, $\Fil$-predictable and left-continuous process with right-hand limits; 
	
		\item 
		and a (not necessarily $\Fil$-adapted) Brownian motion $\{B(s)\}_{s\geq0}$ such that $\{B(N(t))\}_{t \in [0,T]}$ is $\Fil$-adapted and such that for any jump point $t_i=\inf\{t\geq0,N(t)=i\}$, $i\in\mathbb{N}$, the increment of the Brownian motion $U_i:=B(N(t_i))-B(N(t_{i-1}))$ is independent of $\mathcal{F}_{t_i-}$, i.e., $U_i|\mathcal{F}_{t_i-}\sim\mathcal{N}(0,1)$.

        \item 
        Moreover, the moments $\E \left[ \left( \int_0^T \varsigma^2(r)dN(r) \right) ^2 \right]$
    	and $\E\left[ [P]_T^2 \right]$ are finite, where the quadratic variation of a pure jump process is the sum of the squared increments, $[P]_t := \sum_{0\le t_i \le t} (\Delta P_{t_i})^2$.
        
    \end{enumerate}
\end{ass}

The TTSV model provides a joint model for the tick arrivals $N(t)$ together with the log-price process $P(t)$ that can capture both, time-varying, stochastic trading intensity and tick variance patterns.
At the same time, $P(t)$ is a semi-martingale as a time-changed diffusion model \citep{Monroe1987, Liptser2012}. 
In fact, Proposition \ref{prop:martingale_price} shows that $P$ is an actual martingale, complying with the regularly imposed assumption of efficient markets \citep{DelbaenSchachermayer1994}.
\begin{prop}
	\label{prop:martingale_price}
	Under Assumption~\eqref{ass:filtration}, the TTSV price process $P$, as defined in~\eqref{eq:TTSV_model}, is an  $\Fil$-martingale.
\end{prop}

In the TTSV model, we assume to observe the jump times $N(t)$ together with the logarithmic prices at these times.
We treat the jump times $N(t)$ as transaction times, whereas they could also be other measures of interest such as quote arrivals, volume-related quantities or aggregates of these measures.
The ``intensity'' processes $\lambda(t)$ and $\varsigma(t)$ are latent, and can for example be modeled as standard It\^{o} diffusions, or Hawkes process type intensities; see \citet[Example 2.3]{dahlhaus2016} for a range of possible specifications.\footnote{The price process in (\ref{eq:TTSV_model}) could further be augmented with a finite-variation predictable mean component \citep*{andersen2003}. However, we follow \citet{oomen2006} (see also  \citet{hansen2006}, \citet{aitsahalia2011}, among others) and set it to zero for simplicity.}

In the general form of Assumption \eqref{ass:filtration}, the transaction times $N(t)$ can follow a general jump process with intensity $\lambda(t)$, which implies that $\E \big[ N(t)-N(s) \mid \mathcal{F}_s \big] = \E \big[ \int_s^t \lambda(r) dr \mid \mathcal{F}_s \big]$ holds a.s.\ for all $0 \le s \le t \le T$, i.e., the expected number of arrivals in the period $[s,t]$ is characterized by the accumulated intensity $\int_s^t \lambda(r) dr$; see \citet{bauwens2009} for details.
Besides doubly stochastic (and non-homogeneous) Poisson processes that are characterized by independent arrivals,  Assumption \eqref{ass:filtration} also allows more general intensity-based models such as autoregressive intensity processes \citep{hamilton2002} or self-exciting Hawkes processes \citep{hawkes1971}, which can additionally capture the observed dependence and memory of the trade arrivals on financial markets.

Assumption \eqref{ass:filtration} also allows for capturing ``leverage effects'' as the jump intensity $\lambda$ and the tick-volatility $\varsigma$ can depend on (the sign of) past price changes.
Part (c) of Assumption~\eqref{ass:filtration} governs the price changes at the observed jumps. 
It essentially rules out anticipative dependence of the calendar-time processes $\lambda$ or $\varsigma$ on $B$, in the sense that the path of the intensities following a jump point is independent of the next increment of the Brownian motion. 
Assumption~\eqref{ass:filtration} further contains moment conditions, which ensure that the IV and the integrated quarticity (IQ) in Theorem \ref{thm:MSE_IV} below are finite.

\begin{figure}[tb] 
	\centering
	\includegraphics[width=1\textwidth]{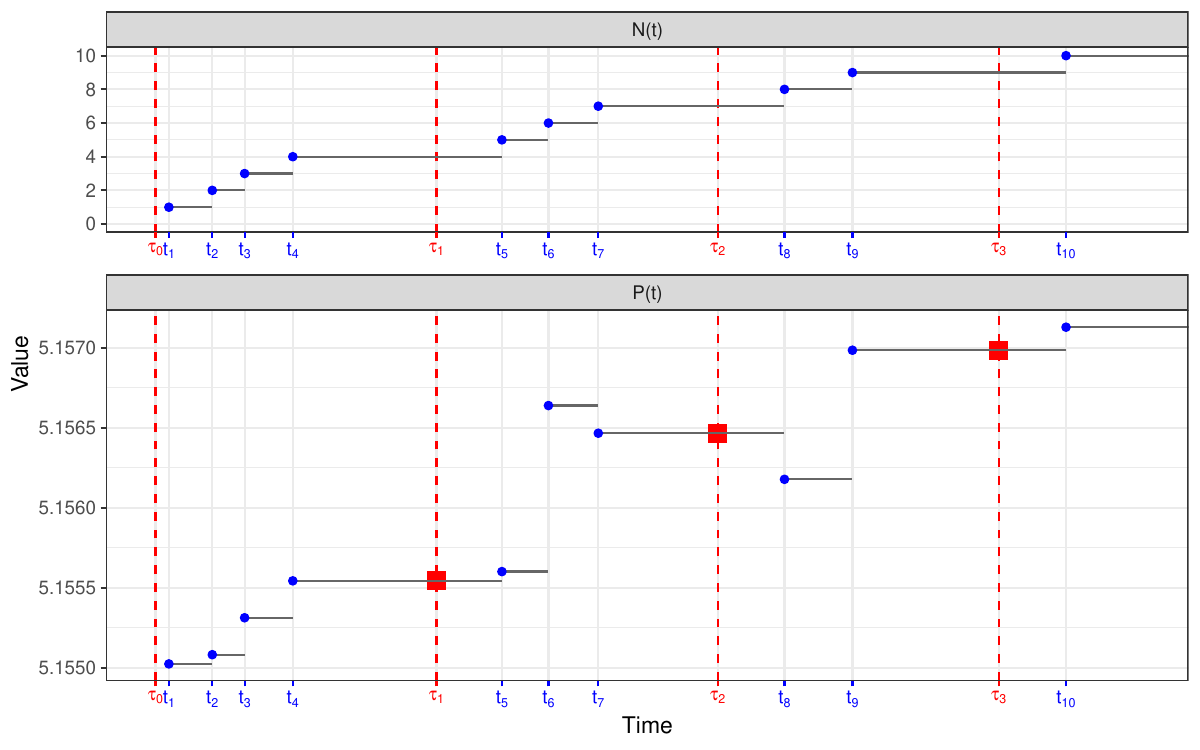} 
	\caption{Illustration of the arrival and sampling times in the TTSV model: 
		The upper panel shows the evolution of the jump process $N(t)$ generating the ticks (arrival times) $t_i$.
		The lower panel shows the log-price process $P(t)$, which exhibits price jumps at the ticks $t_i$ of $N(t)$ and is constant in between.
		The vertical red lines represent the sampling times of an exemplary sampling scheme $\btau$ (that does not have to be equidistant in calendar time), and the red squares show the resampled prices based on the previous tick method.
	}
	\label{fig:TTSVModelIllustration}
\end{figure}

\begin{figure}[tb] 
	\begin{center}
		\includegraphics[width=1\textwidth]{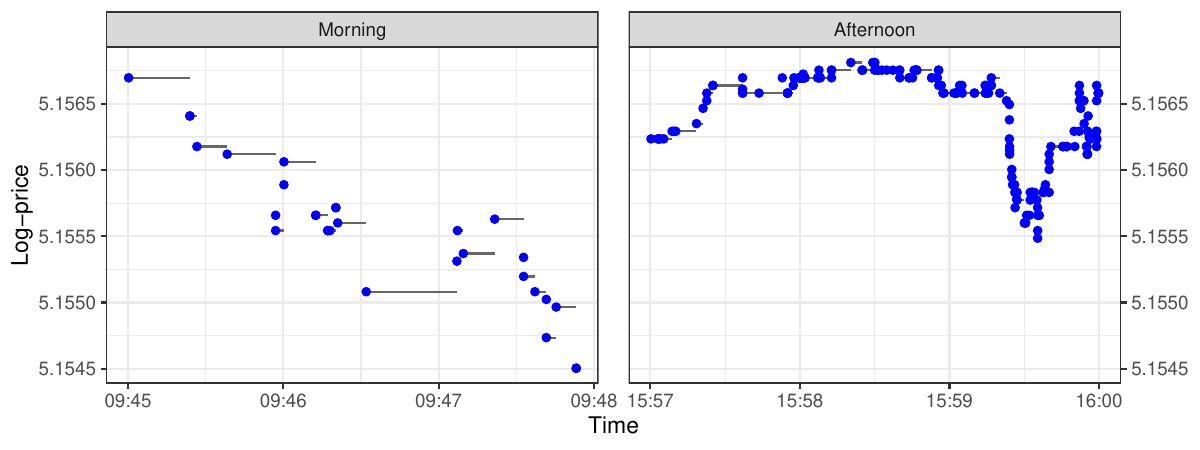}
	\end{center} 
	\caption[$~~$]{IBM transaction log-prices on May 1, 2015 for three minutes in the morning between 9:45am and 9:48am and in the \edit{afternoon} between 15:57pm and 16:00pm. 
	We observe a clear pattern of much more ticks in the \edit{afternoon} and a much higher ``tick-by-tick'' variance in the morning that is typical for stocks traded at the NYSE.}
	\label{fig:TTM_snippet}
\end{figure}

\begin{figure}[tb] 
	\centering
	\includegraphics[width=1\textwidth]{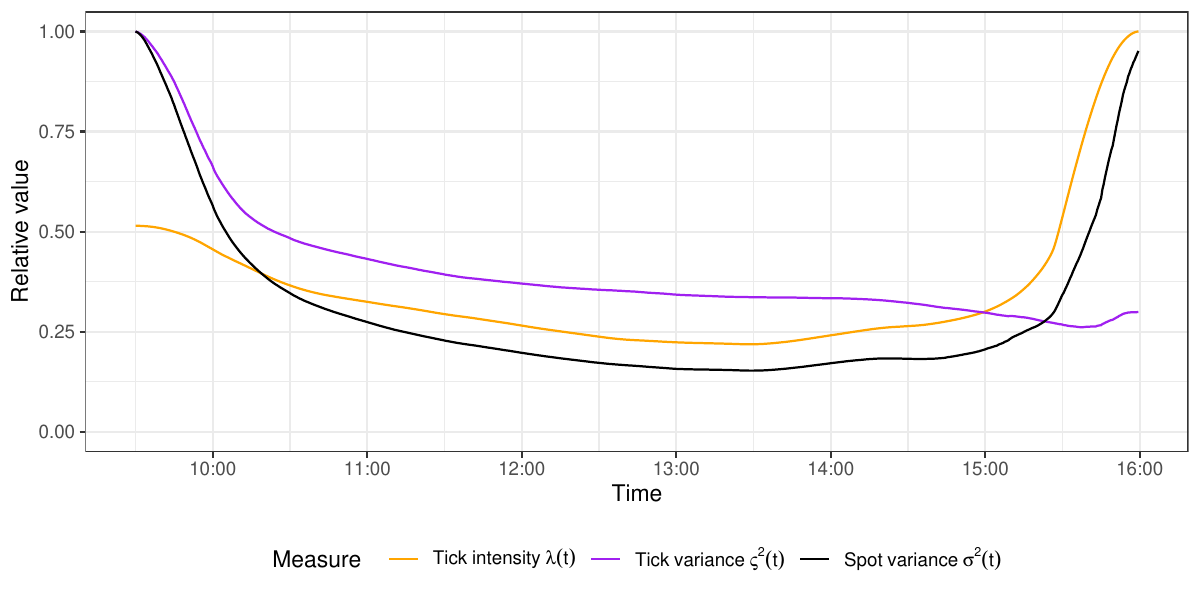}
	\caption{Estimates of the trading intensity $\lambda(t)$, tick variance $\varsigma^2(t)$ and spot variance $\sigma^2(t)$, averaged over all trading days in the year 2018.
	We use the nonparametric kernel estimators for $\lambda(t)$ and $\varsigma^2(t)$ of \cite{dahlhaus2016}, that we augment with a ``mirror image'' bias correction of \citet{DiggleMarron1988}, similar to  \citet[equation (17)]{oomen2006}.
	Following Proposition \ref{prop:vola_decomposition}, the estimate of the spot variance $\sigma^2(t)$ is obtained as the product of the estimated $\lambda(t)$ and $\varsigma^2(t)$.
	}
	\label{fig:EstimatedIntensities}
\end{figure}

In the following, we provide a detailed empirical motivation of the TTSV model:
The jump process $N(t)$ models the ticks (i.e., the transaction or quote times) through its \emph{arrival times} $t_i, i \ge 0$, that satisfy $t_i\in[0,\infty)$ and $t_i<t_{i+1}$ for all $i=1,\dots,N(T)$.
As illustrated by the blue points and black lines in the upper panel of Figure \ref{fig:TTSVModelIllustration}, the sample path of $N(t)$ is a right-continuous step function with jumps of magnitude one at the arrival times $t_i$ such that $N(t) = i$ for $t \in [t_i,t_{i+1})$.
The stochastic intensity process $\lambda(t)$ of $N(t)$ is motivated by the empirical observation that the amount of trading varies drastically throughout the day.
E.g., at the NYSE, there is a much higher trading activity just before market closure than throughout the rest of the day. 
Figure \ref{fig:TTM_snippet} shows the log-prices of the IBM stock traded on the NYSE on May 1, 2015 between 9:45am and 9:48am and between 15:57pm and 16:00pm. 
We see that there are drastically more trades in the \edit{afternoon} than in the morning, which is caused by many traders closing their position due to various reasons, including settlement rules of exchange markets \citep{Admati1988}.
Figure \ref{fig:EstimatedIntensities} shows a non-parametric estimate of the trading intensity $\lambda(t)$ for the IBM stock (details are provided in the figure caption), which confirms this finding.

As $N(t)$ is piecewise constant between its arrival times $t_i$, it holds for all $0 \le s < t \le T$ that
\begin{equation} 
	\label{eq:representation}
	P(t) = P(s) + \sum_{s< t_i\leq t} \varsigma(t_i) U_i,
    \qquad \text{where} \qquad 
	\Ui = B(N(t_i))-B(N(t_{i-1})),
\end{equation}
where the index $i$ in $U_i$ corresponds to the $i$'th observed tick $t_i$.
As graphically illustrated with the blue dots and black lines in the lower panel of Figure \ref{fig:TTSVModelIllustration}, this implies that the log-price $P(t)$ exhibits jumps of magnitude $\varsigma(t_i) \Ui$ at the arrivals of $N(t)$, and it is constant in between.

The stochastic tick volatility $\varsigma(t)$ is essential for the model as one  observes empirically varying tick volatility patterns throughout the day on financial markets.
E.g., Figure \ref{fig:TTM_snippet} shows that at the NYSE, the tick variance of the log-price changes is much higher in the morning than in the \edit{afternoon}, which is  illustrated more formally by the nonparametric estimate of the tick variance  $\varsigma^2(t)$ in Figure \ref{fig:EstimatedIntensities}.
This finding is mainly caused by traders who trade overnight information in the beginning of the day, which triggers large oscillations in the transaction prices and thus, a high tick volatility that calms down until lunch time \citep{Dahlhaus2014}.

Conditionally on an arrival $t_i$, the price change $\varsigma(t_i) \Ui$ is normally distributed with mean zero and variance $\varsigma^2(t_i)$, hence justifying the term \emph{tick variance}. 
Generalizing the conditional Gaussianity of $\varsigma(t_i) \Ui$  in \eqref{eq:representation} might be an interesting avenue for future research.
Nevertheless, due to the stochastic nature of the processes $N(t)$, $\lambda(t)$ and $\varsigma(t)$, the unconditional distribution of the log-prices in the TTSV model is much more general than Gaussian.

\begin{prop}
	\label{prop:vola_decomposition}
	Let Assumption \eqref{ass:filtration} hold and assume that for each $t\in[0,T]$ there exists an $\epsilon>0$ such that $\varsigma^2(r)\lambda(r)$ is bounded for all $r \in [t, t+\epsilon]$ by a random variable $Z(t)$ with $E[Z(t)]<\infty$. 
    Then, the spot variance as given in \eqref{eqn:DefSpotVolatility} satisfies the following decomposition,
	\begin{equation}
		\sigma^2(t) = \varsigma^2(t+) \lambda(t+),
	\end{equation}
	where, for any process $X$, we denote the right-limit as $X(t+):= \lim_{\delta \downarrow 0} X(t+\delta)$.
\end{prop}

Proposition \ref{prop:vola_decomposition}, which is similarly stated in \citet{dahlhaus2016}, shows that in the TTSV model, the spot variance at time $t$ conveniently decomposes into the (right-hand limits of the) trading intensity $\lambda(t)$ and the tick variance of the price jumps $\varsigma^2(t)$, hence combining the two different sources of intraday variation as illustrated, for example, in Figure \ref{fig:EstimatedIntensities}.

Together with the general definition of IV in \eqref{eqn:GenIVDef}, Proposition \ref{prop:vola_decomposition} shows that the IV of the log-price following the TTSV model is given by
\begin{equation}
	\label{eq:IV_TTSV_Definition}
	\IV(0,T) = \int_{0}^T\sigma^2(r) dr = \int_{0}^T \varsigma^2(r+) \lambda (r+) dr = \int_{0}^T \varsigma^2(r) \lambda (r) dr.
\end{equation}
The use of IV as the measure of (daily) return variability in the TTSV model is further motivated by the following result.

\begin{prop}
	\label{prop:IVvola}
	Under Assumption \ref{ass:filtration}, it holds that
	\begin{align*} 
		\E \left[ r^2_{\mbox{\scriptsize daily}}  - \IV \left(0,T\right)\right] = 0.
	\end{align*}
\end{prop}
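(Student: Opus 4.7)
The plan is to compute $\mathbb{E}[r_{\mbox{\scriptsize daily}}^2]$ by conditioning on the information generated by the counting process and the tick volatility, and then show that the resulting expression coincides with $\mathbb{E}[\IV(0,T)]$ via the compensator formula for the counting process $N(t)$.

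First, using representation \eqref{eq:representation} with $s=0$ and $t=T$, I would write
\begin{equation*}
r_{\mbox{\scriptsize daily}} = P(T)-P(0) = \sum_{i=1}^{N(T)} \varsigma(t_i)\, U_i,
\end{equation*}
where $U_i = B(N(t_i)) - B(N(t_{i-1}))$. Let $\mathcal{G} := \sigma\bigl(N(t), \varsigma(t): t \ge 0\bigr)$. By construction $N(t_i) - N(t_{i-1}) = 1$, and by Assumption \eqref{ass:Independence} the Brownian motion $B$ is independent of $N$ and $\varsigma$. Therefore, conditional on $\mathcal{G}$, the random variables $U_1, U_2, \ldots$ are independent standard normals, and $\{\varsigma(t_i)\}_i$ become $\mathcal{G}$-measurable deterministic weights. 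Squaring, expanding, and using $\mathbb{E}[U_i U_j \mid \mathcal{G}] = \mathbbm{1}\{i=j\}$, the cross terms vanish and
\begin{equation*}
\mathbb{E}\bigl[r_{\mbox{\scriptsize daily}}^2 \,\big|\, \mathcal{G}\bigr] \;=\; \sum_{i=1}^{N(T)} \varsigma^2(t_i) \;=\; \int_0^T \varsigma^2(r)\, dN(r),
\end{equation*}
where the last equality rewrites the sum over jump times as an integral with respect to the counting measure of $N$.

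Taking unconditional expectations and applying the tower property, it remains to show
\begin{equation*}
\mathbb{E}\!\left[\int_0^T \varsigma^2(r)\, dN(r)\right] = \mathbb{E}\!\left[\int_0^T \varsigma^2(r)\,\lambda(r)\, dr\right] = \mathbb{E}[\IV(0,T)],
\end{equation*}
where the right-hand equality is just the definition of IV in \eqref{eq:IV_TTSV_Definition}. The first equality is the content of the compensator/Watanabe formula: since by Assumption \eqref{ass:TTSV_model}(a) the process $N(t) - \int_0^t \lambda(s)\,ds$ is an $\mathcal{F}_t$-local martingale and $\varsigma^2$ is positive, continuous and $\mathcal{F}_t$-predictable, the stochastic integral $\int_0^{\cdot}\varsigma^2(r)\bigl(dN(r)-\lambda(r)dr\bigr)$ is a local martingale, which the integrability of $\mathbb{E}\!\left[\int_0^T\varsigma^2(r)\lambda(r)dr\right]$ in Assumption \eqref{ass:Moments} upgrades to a true martingale with mean zero.

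The main obstacle is the rigorous justification of the compensator step under the generality of Assumption \eqref{ass:TTSV_model}(a), which permits intensity-based jump processes beyond doubly stochastic Poisson ones (e.g.\ Hawkes processes). This is where predictability of $\varsigma^2$ and the finiteness of $\mathbb{E}\!\left[\int_0^T\varsigma^2(r)\lambda(r)dr\right]$ from Assumption \eqref{ass:Moments} together with standard results on stochastic integration with respect to the compensated counting process (see, e.g., \citet{bremaud1981}) do the work; the rest of the argument is essentially a routine Gaussian conditioning calculation.
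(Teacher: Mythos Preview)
Your proposal is correct and follows essentially the same approach as the paper. The paper's proof is a one-line reference to Theorem~\ref{prop:bias} applied to the trivial grid $\btau^T=(0,T)$, but unpacking that reference yields precisely your two steps: first the Gaussian conditioning on $\FintN_T$ to obtain $\E[r_{\mbox{\scriptsize daily}}^2\mid\FintN_T]=\int_0^T\varsigma^2(r)\,dN(r)$ (the paper's proof of Theorem~\ref{prop:bias}(a)), and then the compensator identity $\E[\int_0^T\varsigma^2(r)\,dN(r)]=\E[\int_0^T\varsigma^2(r)\lambda(r)\,dr]$ via Br\'emaud's martingale result (the paper's Lemma~\ref{lem:adj_ito}, Part~3).
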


Hence, under the TTSV model, the variance of the daily return equals the expected IV, which shows that (estimates of) the IV can be interpreted as a measure of daily return variation, similar to classical diffusion processes \citep[Corollary 1 and Theorem 2]{andersen2003}.

For our purposes of analyzing the efficiency of alternative sampling schemes, the TTSV model is particularly useful as it disentangles the time-varying trading activity via the \emph{trading intensity} $\lambda(t)$, and the time-varying \emph{tick variance} through $\varsigma^2(t)$.
As their intraday dynamics differ markedly in empirical data as shown in Figures \ref{fig:TTM_snippet} and \ref{fig:EstimatedIntensities}, the separate model components for $\lambda(t)$ and $\varsigma(t)$ are crucial for some of the results of this paper.

The TTSV model is closely related to many classical models.
For deterministic arrival times $t_1, \dots, t_N$ and a constant tick volatility $\varsigma(t)$, it nests a simple Gaussian random walk in transaction time.
Furthermore, the compound Poisson process used by \citet*{oomen2005,oomen2006} arises when $N(t)$ follows a doubly stochastic Poisson process and when $\varsigma(t)$ is constant.
While this setup allows for modeling tick arrivals as a separate component, it models all time variation in volatility through fluctuations in the arrival intensity. This restriction to a constant tick volatility is a clear limitation.

Lastly, a standard modelling choice is the continuous-time diffusion \citep{barndorff2002a} (without drift and jump terms)
\begin{align}
	\label{eq:diffusion}
	dP(t) = \sigma_\text{diff}(t) \, dB(t), \qquad t\in[0,T],
\end{align}
which is, compared to the TTSV model, not based on a time-change.
In order to explicitly model the stochastic tick arrivals within these diffusion models, \citet{Fukasawa2010RV, Jacod2018, Jacod2017, Jacod2019} apply discretization schemes, where the tick arrivals (or alternatively, the sampling points) are modeled as random times at which one observes (a possibly generalized version of) the diffusion in \eqref{eq:diffusion}.
Similar to the TTSV model, the observed prices are then modeled as a pure jump process with random arrival times, however, with the conceptual difference that the former applies a \emph{time-change} with a jump process while the latter uses \emph{discretization}.

We provide a detailed comparison of the TTSV model to these discretization schemes in Appendix \ref{sec:ComparisonDiscretization}.
While both modeling approaches have their individual merits and limitations, we use the TTSV model in this paper for the following reasons:
First, the TTSV model offers an inherent and transparent decomposition of the spot variance into the empirically relevant components of sampling intensity and tick variance, which directly enables the derivation of particularly insightful results for classically used sampling schemes.
Second, the simplicity of the TTSV model facilitates the derivation of finite sample MSE results---albeit partly under strong independence assumptions.
While such results may also be attainable with discretized diffusion models, we conjecture that doing so would be considerably more involved.
Third, as illustrated in Appendix \ref{sec:ComparisonDiscretization}, the novel realized BTS scheme does not arise as naturally within the discretized diffusion framework.

\subsection{Efficient Sampling}
\label{sec:EfficiencyFiniteSample}

In this section, we derive the bias and MSE of the RV estimator based on general sampling schemes $\btau$ with a \edit{fixed (expected) amount of intraday returns}.
\edit{Our main target is to find an optimal sampling scheme that is efficient in the sense of attaining the smallest MSE among a class of unbiased sampling schemes.}

\begin{thm}
	\label{thm:unbiasedness}
	Under Assumption \eqref{ass:filtration} and for any $\Fil$-adapted sampling scheme $\btau$, the RV estimator in \eqref{eq:RVDef} is an unbiased estimator for the IV:
	\begin{equation}
		\E \big[ \RV(\btau) \big] = \E \big[ \IV(0,T) \big].
	\end{equation}
\end{thm}

As the RV estimator is unbiased for \emph{any} $\Fil$-adapted sampling scheme, there is no theoretical distinction between different sampling schemes $\btau$ in terms of a bias.
We, however, continue by showing that the choice of $\btau$ entails a difference in the estimation efficiency.
To this end, we derive a closed-form expression for the finite-sample MSE of the RV estimator depending on the sampling grid $\btau$.

\begin{thm} 
	\label{thm:MSE_IV}
	Under Assumption \eqref{ass:filtration} and for any $\Fil$-adapted sampling scheme $\boldsymbol{\tau}$, the MSE of the RV estimator is given by
	\begin{align} 
		\label{eq:MSE}
		\E \left[ \big( \RV(\btau) - \IV(0,T) \big)^2 \right] 
		= \frac{2}{3} \E\left[ \sum_{j=1}^M r^4(\tau_{j-1}, \tau_j) \right] + \E\left[ \IQ(0,T) \right],
	\end{align}
	where $\IQ(0,T) = \int_0^T\varsigma^4(r)\lambda(r)dr$ is the integrated quarticity (IQ) of the TTSV model.\footnote{We call $\IQ(s,t) = \int_{s}^{t}\varsigma^4(r)\lambda(r)dr$ the integrated quarticity of the TTSV model as its definition is specific for the TTSV model. If, instead, the integrated quarticity would be defined based on the spot variance as $\int_{s}^{t} \sigma^4(r) dr$, this would result in a slightly different notion of $\int_{s}^{t}\varsigma^4(r)\lambda^2(r)dr$ by using Proposition \ref{prop:vola_decomposition}.}
\end{thm}

Theorem~\ref{thm:MSE_IV} provides a finite sample result for the MSE of any $\Fil$-adapted sampling scheme $\boldsymbol{\tau}$ under general dependence assumptions that for example, allow for Hawkes-type processes including a leverage effect; see the discussion after Assumption~\eqref{ass:filtration}.
In \eqref{eq:MSE}, the MSE is bounded from below by $\E[\IQ(0,T)]$, which merely depends on the underlying process but is invariant to the employed sampling scheme.
Most important for our purposes is the term
$\frac{2}{3} \E\left[ \sum_{j=1}^M r^4(\tau_{j-1}, \tau_j) \right]$,  which depends on the fourth power of the returns, sampled according to $\btau$.
\edit{
    By applying the Cauchy-Schwarz inequality, this term is minimized by a sampling scheme that aims at homogenizing the absolute values of the intraday returns---as e.g., HTS.
    As the MSE expression~\eqref{eq:MSE} in Theorem~\ref{thm:MSE_IV} is only shown to hold for any $\Fil$-adapted sampling scheme $\boldsymbol{\tau}$, it is unclear how a feasible and $\Fil$-adapted scheme could be set up in practice that minimizes \eqref{eq:MSE} \emph{exactly}, especially as the TTSV price process is discontinuous.\footnote{\edit{A trivial---but clearly not $\Fil$-adapted---approach to minimizing~\eqref{eq:MSE} for a given $M$ would be to allocate $\btau$ among all observed tick times so as to minimize the sum of the fourth power of the resulting returns. However, such a sampling scheme would presumably not yield an unbiased RV estimator, rendering the MSE expression~\eqref{eq:MSE} inapplicable. Moreover, it would be computationally very demanding, particularly on days with many ticks and for large values of $M$.}}
    We will later consider feasible and $\Fil$-adapted sampling schemes that aim at making intraday returns as homogeneous as possible---either in terms of their magnitude or in quantities related to their second moment---depending on the setting.
    }

Theorem \ref{thm:MSE_IV} applies to a very general class of sampling schemes that can access the history of all the processes driving the prices in the TTSV model.
In the following, we also consider subclasses of sampling schemes that use less information about the price process and, in particular, are \emph{not} allowed to depend directly on the observed prices.
The intuitive reason is that the actual price observations are affected by MMN, which distorts the MSE result in Theorem~\ref{thm:MSE_IV}.
As we will see in our simulations, this distortion is particularly severe for sampling schemes as HTS that directly rely on the observed high-frequency prices.

Therefore, we define the following two restricted filtrations that determine the precise information that (alternative) sampling schemes can use:
\begin{align*}
    \FilintN :=  \{\FintN_t\}_{t\in[0,T]},
	\qquad \text{ and }  \qquad
    \Filint& := \{\Fint_t\}_{t\in[0,T]},
\end{align*}
where  $\FintN_t = \sigma \big( \lambda(s), \varsigma(s), N(s); \; 0 \le s \le t \big)$ and $\Fint_t = \sigma \big( \lambda(s), \varsigma(s); \; 0 \le s \le t \big)$. 
By considering sampling schemes adapted to the filtrations $\FilintN$ or $\Filint$, we ensure that the possibly noisy price observations do not directly determine the sampling times.
In the \edit{$\FilintN$-adapted} case, we allow for a dependence of the sampling times on the realized tick pattern of the particular day. We refer to the case of $\FilintN$-adapted sampling as ``realized'' or ``jump-based'' sampling and to the case of $\Filint$-adapted as ``intensity-based'' sampling.

We continue to investigate the MSE for the specific classes of sampling schemes introduced above. For this, we first state the two following corollaries, which express the MSE for sampling schemes $\btau$ that are adapted to the reduced filtrations $\FilintN$ and $\Filint$. The first corollary states that the MSE depends on the realized IV (rIV), which we define as
\begin{align}
	\label{eq:rIVdef}
	\rIV(s,t) := \int_s^t\varsigma^2(r)dN(r) = \sum_{s \leq t_i \leq t} \varsigma^2(t_i),
\end{align}
and interpret as a jump-process based and hence ``realized'' version of the classical IV given in \eqref{eqn:GenIVDef} and \eqref{eq:IV_TTSV_Definition}.  

\begin{cor} 
	\label{cor:MSE_jump_based}
	Under Assumption \eqref{ass:filtration}, and given that $U_i^2$ is independent of the paths of $\lambda$, $\varsigma$, and $N$,
    the MSE of the RV estimator for any $\FilintN$-adapted sampling scheme $\boldsymbol{\tau}$ is
	\begin{align} 
		\label{eq:MSE_realized_sampling}
		\E \left[ (\RV(\btau) - \IV(0,T))^2 \right] & = 2 \E \left[\sum_{j=1}^M \rIV(\tau_{j-1}, \tau_{j})^2 \right] + \E \left[ \IQ(0,T) \right] + \E[R(\btau)],
	\end{align}
	where  
	\begin{equation}
		\label{eq:remainder_1}
		R(\btau) := 4\sum_{j=1}^M \left((P_{\tau_{j}}-P_{\tau_{j-1}})^2-([P]_{\tau_{j}}-[P]_{\tau_{j-1}})\right) \rIV(\tau_{j-1},\tau_{j}).
	\end{equation}
\end{cor}

The MSE formula from Corollary~\ref{cor:MSE_jump_based} provides intuition on the relative efficiency of $\FilintN$-adapted  sampling schemes:
Invoking $\E[R(\btau)]=0$, a condition that holds under independence assumptions that are formalized in Theorem~\ref{thm:EfficientSampling} below, the Cauchy-Schwarz inequality directly implies that the MSE can be minimized by specifying $\btau$ such that $\rIV(\tau_{j-1}, \tau_{j})$ is as homogeneous as possible (in expectation).
Notice that the additional requirement in Corollaries \ref{cor:MSE_jump_based} and \ref{cor:MSE_intensity_based} that the $U_i^2$ are independent of the entire paths of $\lambda$, $\varsigma$ and $N$ still allows for leverage effects, as the jump process and the tick variance can depend on the past sign of $U_i$.
In Appendix~\ref{sec:RemainderTerms}, we provide informal theoretical arguments that, under process dependencies that decay fast enough over time (as in Hawkes processes), the remainder term $\E[R(\btau)]$ is approximately equal for all sampling schemes, given that sparse sampling is employed.
We note here already that our simulations confirm this finding.

\begin{cor} 
	\label{cor:MSE_intensity_based}
	Under Assumption \eqref{ass:filtration}, and given that $U_i^2$ is independent of the paths of $\lambda$, $\varsigma$, and $N$, the MSE of the RV estimator for any $\Filint$-adapted sampling scheme $\btau$ is
	\begin{align} 
		 \label{eq:MSE_intensity}
		\E \left[ (\RV(\btau) - \IV(0,T) )^2 \right] 
        &= 2 \E \left[ \sum_{j=1}^M \IV(\tau_{j-1}, \tau_j)^2 \right] + 3 \E \left[ \IQ(0,T)  \right] + \E \left[ R(\btau) \right] + \E \big[ \widetilde{R}(\btau)\big],
	\end{align}
	where 
    $R(\boldsymbol{\tau})$ is as in \eqref{eq:remainder_1} and for $\widetilde{N} := \left\{ N(t)-\int_0^t \lambda(r)dr \right\}_{t\in[0,T]}$, we define
	\begin{equation}
		\label{eq:remainder_2}
		\widetilde{R}(\boldsymbol{\tau}) := 4 \sum_{j=1}^M \IV (\tau_{j-1}, \tau_j) \E\left[ \int_{\tau_{j-1}}^{\tau_j} \varsigma^2(r) d\widetilde{N}(r) \Bigg| \mathcal{F}_{\tau_j}^{\lambda, \varsigma} \right].
	\end{equation}
\end{cor}

Corollary~\ref{cor:MSE_intensity_based} shows that restricting attention to $\Filint$-adapted sampling schemes $\btau$ leads to a similar formula as in Corollary~\ref{cor:MSE_jump_based}.
However, efficiency is now characterized by homogeneity of $\IV(\tau_{j-1}, \tau_j)$ (opposed to the \emph{realized} IV in Corollary~\ref{cor:MSE_jump_based}), and the result is subject to the further remainder term $\widetilde{R}(\boldsymbol{\tau})$.

The following theorem summarizes these results by imposing conditions under which the remainder terms ${R}(\boldsymbol{\tau})$ and $\widetilde{R}(\boldsymbol{\tau})$ vanish in expectation.
\begin{thm}
	\label{thm:EfficientSampling}
	For a given constant $\overline{M} = \E[M(\btau)] \in \mathbb{N}$, we consider sampling schemes $\btau$ with respect to different filtrations. 
	Under Assumption \eqref{ass:filtration}, the MSE of the RV estimator is minimized
	\begin{enumerate}[label=(\alph*)]
		\item 
		among all $\Fil$-adapted sampling schemes, by a sampling scheme such that 
		 $| r(\tau_{j-1}, \tau_j) | = \sqrt{\E[\IV(0,T)] \big/ \overline{M}}$; 
		
		\item  
		\label{thm:MSE_minimization_realized}
		among all $\FilintN$-adapted sampling schemes, by a sampling scheme such that $\rIV(\tau_{j-1}, \tau_j) = \E[\IV(0,T)] \big/ \overline{M}$  under the additional assumption that $B$ is independent from $\lambda$, $\varsigma$ and $N$;
        
		\item 
		among all $\Filint$-adapted sampling schemes, by a sampling scheme such that $\IV (\tau_{j-1}, \tau_j) = \E[\IV(0,T)] \big/ \overline{M}$ under the additional assumptions that $B$ is independent from $\lambda$, $\varsigma$ and $N$ and that $N$ is a doubly stochastic Poisson process with intensity $\lambda$.
	\end{enumerate}
\end{thm}

Roughly speaking, all three parts of Theorem~\ref{thm:EfficientSampling} suggest \emph{homogenizing} the sampled returns.
These parts mainly differ by the quantity that is homogenized, which will naturally be contained in the filtration the sampling schemes are adapted to.
\edit{It is important to note that in all three parts of Theorem~\ref{thm:EfficientSampling}, adaptiveness to a certain filtration is required. This makes it unclear how the condition of homogenizing returns can be satisfied \emph{exactly} in practice, rendering these lower bounds infeasible in implementation. In Section~\ref{sec:SamplingSchemes}, we therefore consider feasible sampling schemes that satisfy the homogeneity conditions approximately.}

Theorem~\ref{thm:EfficientSampling} (a) establishes that the most general finite sample efficiency is achieved when sampling times are chosen such that the absolute return values coincide throughout a trading day, hence pertaining to the HTS scheme.
Parts (b) and (c) examine settings where the price information is not used for the construction of the sampling times.
These restricted settings are practically relevant, as the observed high-frequency returns are regularly contaminated by MMN, which can make their use in constructing the sampling times problematic as will be illustrated in our simulations.

On a technical level, the additional independence assumptions in parts (b) and (c) ensure that the remainder terms $R(\btau)$ and $\widetilde{R}(\btau)$ from Corollaries~\ref{cor:MSE_jump_based} and \ref{cor:MSE_intensity_based} vanish in expectation. 
As exemplified in Appendix~\ref{sec:RemainderTerms}, we conjecture that these remainder terms have a minor dependence on the employed sampling schemes, suggesting that the efficiency results of parts (b) and (c) also continue to hold for processes with mild dependencies, as reflected in our simulations.

While Theorem~\ref{thm:EfficientSampling} describes idealized conditions for efficient sampling, the following Section~\ref{sec:SamplingSchemes} discusses their practical implementation.

\subsection{Sampling Schemes}
\label{sec:SamplingSchemes}

Most practically relevant sampling schemes $\btau$ that aim to homogenize a certain quantity, as formalized through Theorem~\ref{thm:EfficientSampling}, can be specified based on a (weakly) increasing and possibly stochastic \emph{accumulated sampling intensity} process $\{\Phi(t)\}_{t \in [0,T]}$. 
For example, for the classical CTS scheme, $\Phi(t) = t$ equals the identity.
In contrast, different variants of transaction- and business-time sampling are based on combinations of the accumulated trading intensity, tick variance and the observed tick arrivals.
If $\Phi$ is differentiable on $(0,T)$, its derivative is denoted by $\phi$ and has the interpretation of a sampling intensity.

Given an accumulated sampling intensity process $\Phi$, the sampling times $\tau_j$, $j=0,\dots, M$ are chosen as the generalized inverse of $\Phi$,
\begin{align}
	\label{eq:ChoosingTau_j}
	\tau_j = \inf \big\{ t \in [0,T]: \; \Phi(t) \ge j \cdot \delta \big\},
\end{align}	
for some possibly stochastic threshold $\delta > 0$.
This ensures that we sample \emph{equidistantly in the accumulated sampling intensity} with $\tau_0 = 0$ and $\tau_{M} = T$.\footnote{If $\Phi(t)$ is continuous, \eqref{eq:ChoosingTau_j} implies that $\Phi(\tau_j) - \Phi(\tau_{j-1}) = j \delta -  (j-1)\delta = \delta$ is constant for all $j=1,\dots, M$.
For the discontinuous versions of $\Phi(t)$ (such as sampling every $K \in \mathbb{N}$ transactions), this only holds approximately.}
We then obtain the prices at sampling times $\tau_j$ with the ``previous tick method'' that is consistent with the TTSV modeling assumption, as illustrated with the red squares in the lower panel of Figure \ref{fig:TTSVModelIllustration}.

In this paper, we focus on the following common sampling schemes that arise by choosing different measures for the sampling intensity:

\begin{enumerate}
    \item 
	\textbf{Calendar Time Sampling (CTS)}, for which $\Phi^{\mbox{\tiny CTS}}(t) = t$, such that we have a constant sampling intensity $\phi^{\mbox{\tiny CTS}}(t) = 1$.
    CTS returns homogenize calendar time between sampling points $\tau_j^{\mbox{\tiny CTS}}=j {T}/{M}$ for $j=0,\ldots,M$, and its simple implementation makes it the most widespread sampling scheme in finance.
    It, however, neglects any information on intraday trading and volatility patterns. 
    
	\item 
	\textbf{Intensity Transaction Time Sampling (iTTS)}, for which the data is sampled equidistantly in the \emph{trading intensity} $\phi^{\mbox{\tiny iTTS}}(t) = \lambda(t)$ of the TTSV model,
    i.e., $\Phi^{\mbox{\tiny iTTS}}(t) = \Lambda(0,t)$, where $\Lambda(s,t) := \int_s^t \lambda(r) dr$. 
    Sampling according to iTTS homogenizes the returns according to the trading intensity.
    
	\item
	\textbf{Realized Transaction Time Sampling (rTTS)}, for which the data is sampled equidistantly in the \emph{observed number of transactions}, such that $\Phi^{\mbox{\tiny rTTS}}(t) = N(t)$.
	This implies that we sample every
	$N(\tau^{\mbox{\tiny rTTS}}_{j}) - N(\tau^{\mbox{\tiny rTTS}}_{j-1}) = \delta$ observed ticks (given that $\delta$ is integer-valued) such that rTTS homogenizes returns with respect to the observed transactions.
    
	\item 
	\textbf{Intensity Business Time Sampling (iBTS)}, for which the data is sampled equidistantly in integrated \emph{spot variance} $\phi^{\mbox{\tiny iBTS}}(t) = \sigma^2(t) = \varsigma^2(t) \lambda(t)$, i.e., we choose $\Phi^{\mbox{\tiny iBTS}}(t) = \IV(0,t)$. Hence, iBTS homogenizes the returns according to the spot variance.

	\item 
	\textbf{Realized Business Time Sampling (rBTS)}, where the data is sampled equidistantly in the \emph{tick variance-weighted observed number of transactions}.
	In particular, we choose $\Phi^{\mbox{\tiny rBTS}}(t) = \sum_{t_i\leq t} \varsigma^2(t_i) = \int_0^t\varsigma^2(r)dN(r)$,
	such that the returns are (approximately) homogenized with respect to \emph{realized} IV.
    
\end{enumerate}

While CTS is deterministic, iTTS and iBTS are $\Filint$-adapted, and rTTS and rBTS are $\FilintN$-adapted, at least given that a deterministic threshold $\delta$ is used.
For a practical implementation of iTTS, iBTS, and rBTS, we have to estimate the intensity processes $\lambda$ and/or $\varsigma$, which we do by averaging over past trading days. 

The above sampling schemes $\btau$ result in $M = M(\btau) = \Phi(T)/\delta$ sampled returns per day, which is in general a stochastic quantity.
In practice, it is, however, often desirable to fix $M$ for the following reasons:
First, fixing $M$ allows for a convenient comparison across sampling schemes. 
We will do this later on in simulations and the empirical application. 
Second, as argued in \cite{zhang2005}, among many others, the value of $M$ is the main driver of the bias of the RV estimator in the presence of MMN.
By fixing $M$, we particularly ``stabilize'' the effect of noise on the RV estimator,
as this prevents the RV from being more affected by noise on higher volatility days than on lower volatility days.

In empirical work, one often deviates from the stopping time assumption and fixes $M$ by choosing $\delta = \Phi(T)/M$.
In practice, when estimating RV at the end of a trading day, the information $\Phi(T)$ is observable or can be estimated.
Formally, the sampling schemes are no longer adapted to the filtrations $\Filint$ or $\FilintN$, but rather to their enlargements by $\sigma(\Phi(T))$, where $\Phi(T)$ corresponds to the given sampling scheme.
While the theoretical results of Section \ref{sec:EfficiencyFiniteSample} do not formally apply to that setting, we show in simulations (see Figure~\ref{fig:RMSE_Stopping}) that the effect is negligible.
Moreover, Appendix \ref{sec:EfficientSamplingUptoT} derives finite sample theory with results analogous to cases (b) and (c) of Theorem~\ref{thm:EfficientSampling}, where the sampling times are allowed to depend on information up to time $T$.

We finally describe the HTS scheme that is already analyzed in \citet{Fukasawa2010RV, Vetter2017, FukasawaRosenbaum2012}, and which is \emph{not} based on an accumulated intensity process:
\begin{enumerate}
	\setcounter{enumi}{5}
	\item 	 
	\textbf{Hitting Time Sampling (HTS)}, where the data is sampled whenever the observed price change exceeds a fixed threshold $\delta \in \mathbb{R}_+$, i.e, $\tau_0 = 0$ and, given some $\tau_{j-1} \in [0,T]$ for $j \ge 1$, we set
	\begin{align}
        \label{eq:HTS_Implementation}
		\tau_j = \inf \big\{ t \in [0,T]: \quad \vert P(t) - P(\tau_{j-1})\vert \ge \delta \big\}.
	\end{align}
	This results in a random number $M = M_\delta$ of samples per day, and we set $\tau_{M} = T$. 
    HTS homogenizes the absolute return values, at least approximately for the TTSV model, as the discontinuity of the price process does in general not allow to find times where $\vert P(\tau_{j}) - P(\tau_{j-1}) \vert = \delta$ holds exactly; see Figure~\ref{fig:HTS_Overshooting}. HTS is model-free and does not require estimation of any underlying intensity processes.
\end{enumerate}

Reconsidering our main result, Theorem~\ref{thm:EfficientSampling}, we see that HTS is tailored to the most general case (a), where the absolute return values should coincide.
Similarly, rBTS aims at homogenizing rIV, which is the most efficient among the  $\FilintN$-adapted sampling schemes, and iBTS homogenizes IV, which is the most efficient among the $\Filint$-adapted sampling schemes.

It is important to note that Theorem~\ref{thm:EfficientSampling} suggests \emph{idealized} sampling schemes, which are, however, not necessarily feasible due to the discontinuity of the underlying processes in the TTSV model as well as in practice.
For HTS, this leads to a common ``overshooting'' effect, where the absolute returns are only guaranteed to be larger than $\delta$.
This overshooting effect is particularly pronounced for small values of $\delta$ and for days with little trading activity; see Figure~\ref{fig:HTS_Overshooting}.
\edit{Although other $\Fil$-adapted schemes---such as sampling whenever the price process crosses an equidistant grid, ignoring repeated crossings of the same grid level---could also homogenize absolute returns, we find their performance similar to HTS and therefore do not pursue them further.}

For HTS, it is unfortunately not possible to fix the number of samples $M$, which is often desirable, as argued above.\footnote{Even with a large number of values for $\delta$ and trial and error, it might be impossible to obtain certain values of $M$ given an observed price path.}
Through Theorem~\ref{thm:EfficientSampling}, it is only feasible to fix the expected number of samples $\overline{M}$ by choosing $\delta^2 = \E[\IV(0,T)] \big/ \overline{M}$, at least in the absence of MMN, by ignoring the overshooting effect, and by estimating $\E[\IV(0,T)]$, e.g., by a standard RV estimator based on CTS returns.

\begin{figure}[tb] 
	\centering
	\includegraphics[width=1\textwidth]{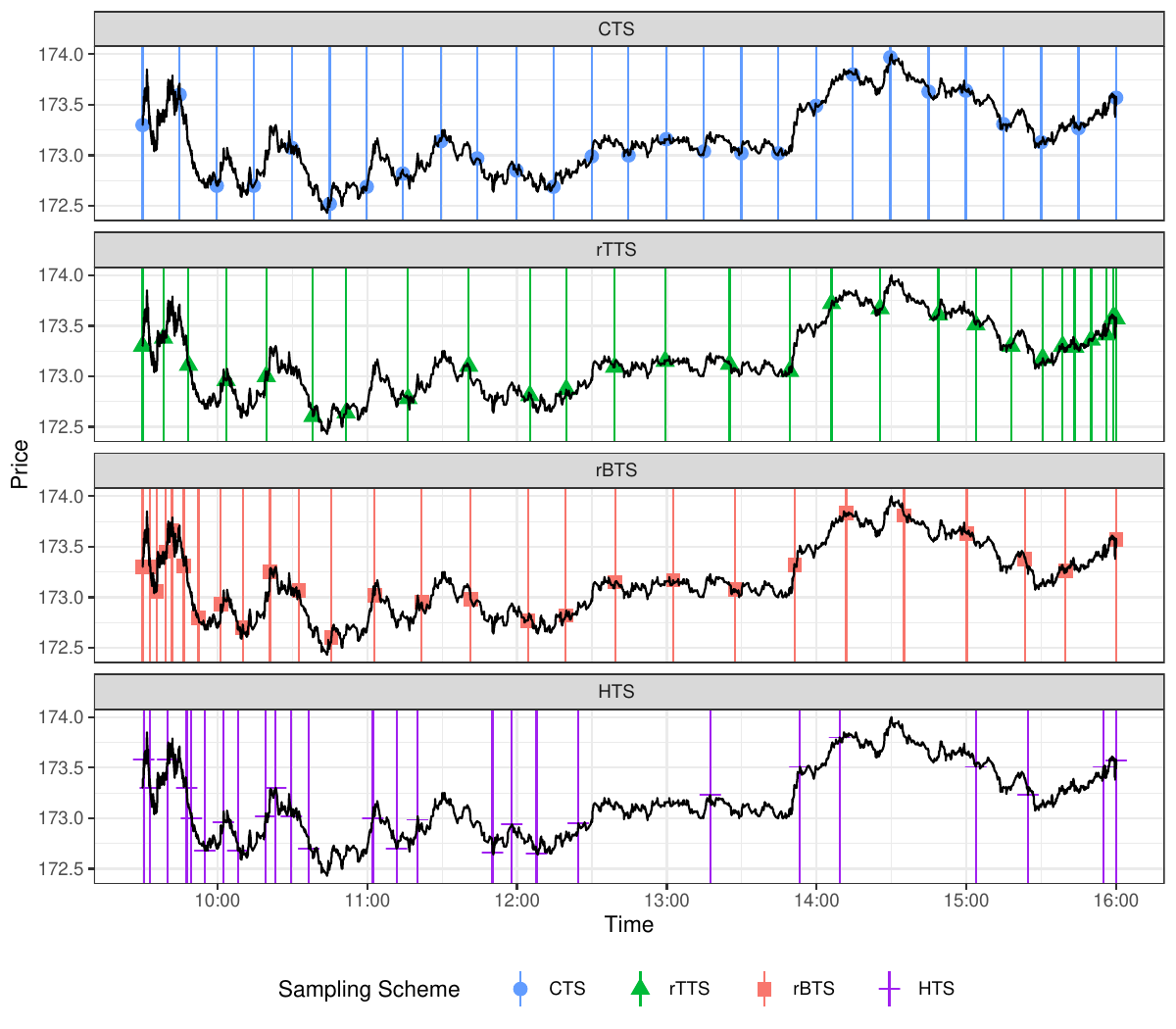}
	\caption{IBM log-price on May 1, 2015 together with the CTS, rTTS, rBTS and HTS sampling schemes for $M=26$, i.e., corresponding to intrinsic time 15 minute returns.
		For the rBTS scheme, we estimate the tick variance $\varsigma^2(\cdot)$ as the average of the estimates over the past 50 days using the estimator of \citet{dahlhaus2016}.
		For HTS, we choose the threshold $\delta=0.00158$ that happens to result in exactly 26 sampled observations on the given day.}
	\label{fig:sampling_comparison_IBM}
\end{figure}

Figure \ref{fig:sampling_comparison_IBM} shows the price path of IBM on May 1, 2015, 
with estimates of the sampling times $\btau$ with $M=26$ under the four sampling schemes CTS, rTTS, rBTS and HTS, presented in the four panels.
The figure reveals a substantial variation of the sampling times across the sampling schemes:
While the sampling points are equidistant in time for CTS, we sample more often in the afternoon with rTTS, but more often in the morning with rBTS and HTS.
In particular, the empirically observed difference between rTTS and rBTS highlights the importance and necessity of a refined price model, such as the TTSV model, that can separately accommodate the different intraday patterns of the trading intensity and tick variance.

\begin{rmk} 
    The efficiency results of Theorem~\ref{thm:EfficientSampling} (b) and (c) extend the theoretical findings of \citet[Proposition~1]{oomen2006}, who considers sampling based on observed and expected transactions in a restricted version of the TTSV price process based on a doubly stochastic Poisson process with a constant tick variance.
    Disregarding whether sampling schemes are allowed to use the information $\Phi(T)$ (also see Appendix~\ref{sec:EfficientSamplingUptoT}), the sampling schemes of  \citet{oomen2006}  are closely related to our $\FilintN$-adapted sampling.
    In summary, \cite{oomen2006} finds that in his model, sampling with respect to the observed transactions (i.e., rTTS $\mathrel{\widehat=}$ rBTS) is more efficient than sampling with respect to the sampling intensity that represents the expected number of transactions (i.e., iTTS $\mathrel{\widehat=}$ iBTS).
    This finding is consistent with the results of our Theorem~\ref{thm:EfficientSampling} (b) and furthermore, with Theorem \ref{prop:MSE_tick} and Corollary~\ref{cor:EfficientSamplingAppendix} in Appendix~\ref{sec:EfficientSamplingUptoT}, where we thoroughly illustrate the comparison for the setting where information on $\Phi(T)$ is used for sampling.\footnote{
    	The past literature on sampling schemes often uses inconsistent terminologies, which requires special care when comparing the results among different papers.
    	E.g., \cite{oomen2006} refers to BTS as sampling with respect to the ``expected number of transactions'' and to TTS as sampling with respect to the ``realized number of transactions'', which matches our definitions of iTTS and rTTS, respectively.
    	Furthermore, \cite{griffin2008} differentiate between the tick and transaction time sampling, where the former samples with respect to transactions with non-zero price changes.}
\end{rmk}

\section{Simulation Study}
\label{sec:Simulation}

We now compare the statistical properties of the RV estimator in \eqref{eq:RVDef} based on different sampling schemes in simulations under general (leverage-type) process and noise specifications.
In addition to validating our theoretical derivations, the aim of the simulation study is to analyze the impact of MMN on the sampling schemes and to quantify the efficiency gains of \emph{intrinsic time} sampling.

We simulate $D=5000$ days with $T=23400$ (seconds) from the TTSV price process
\begin{equation}
    \label{eq:SimPrice}
	dP(t) = \varsigma(t) dB\left({N(t)}\right), \qquad t \in [0,T],
\end{equation} 
where we distinguish the following two settings.

In the first specification, which we denote as the ``independent TTSV process'', $N(t)$ is a doubly stochastic Poisson process independent of $B$.
For the underlying intensities, we use the diffusive specifications,
\begin{align}
	\label{eq:SimLambda}
	\lambda(t) &= \lambda_\text{det}(t) c_\lambda \exp \big(0.01 \lambda^{*}(t) - \bar \lambda^{*}\big), 
	\quad \text{where} \quad
	d\lambda^{*}\left(t\right) = -0.0002 \lambda^{*}(t) dt + dB_1(t), \\
	\label{eq:SimVarsigma}
	\varsigma(t) &= \varsigma_\text{det}(t) c_\lambda^{-1/2} \exp \big(0.005 \varsigma^{*}(t)- \bar \varsigma^{*} \big), 
	\quad \text{where} \quad
	d\varsigma^{*}\left(t\right) = -0.0002 \varsigma^{*}(t) dt + dB_2(t),
\end{align}
for $t \in [0,T]$, where $B_1$ and $B_2$ (and $B$) are independent Brownian motions.
The processes $\lambda(t)$ and $\varsigma(t)$ in \eqref{eq:SimLambda}--\eqref{eq:SimVarsigma} consist of deterministic components $\lambda_\text{det}(t)$ and $\varsigma_\text{det}(t)$ that are the same for every simulated day and give the processes a common characteristic shape, and the multiplicative stochastic diffusions $\lambda^{*}(t)$ and $\varsigma^{*}(t)$ that add some day-by-day randomness.
We obtain the deterministic components $\lambda_\text{det}(t)$ and $\varsigma_\text{det}(t)$ as averages of their estimates using the estimators of \cite{dahlhaus2016}, computed over all trading days of the IBM stock in the year 2018.
The factor $c_\lambda \in \{2000, 8000, 32000\} / \int_0^T \lambda_\text{det}(t) \mathrm{d}t$ in \eqref{eq:SimLambda} allows to control the amount of expected ticks per day to equal $\{2000, 8000, 32000\}$, while its inclusion in \eqref{eq:SimVarsigma} preserves the expected IV, making it invariant to the choice of $c_\lambda$.

The components $\lambda^{*}(t)$ and $\varsigma^{*}(t)$ are Ornstein-Uhlenbeck processes driven by independent Brownian motions $B_i(t)$, $i=1,2$.
Their exponential transformations ensure the positivity of $\lambda(t)$ and $\varsigma(t)$, and the coefficients $\bar \lambda^{*}$ and $\bar \varsigma^{*}$ are the daily averages (over all $t \in [0,T]$) of $\exp (0.01 \lambda^{*}(t))$  and $\exp(0.005 \varsigma^{*}(t))$, respectively, such that the exponential functions have unit mean and serve as multiplicative noise.
We use Euler discretizations with 23400 steps to simulate the diffusions in \eqref{eq:SimPrice}--\eqref{eq:SimVarsigma}.

For the second specification, which we denote as the ``Hawkes-type TTSV process'', $N(t)$ is a Hawkes process with intensity $\lambda(t)$, which, along with the tick variance, is defined as follows
\begin{align}
	\lambda(t) &=  \lambda_\text{det}(t) \widetilde{c}_\lambda \exp \big(0.005\lambda^{*}(t) - \bar \lambda^{*}\big) + \sum_{t_k < t} \nu_\lambda(t-t_k), \label{eqn:HawkesLambda} \\
	\varsigma(t) &= \varsigma_\text{det}(t) \widetilde{c}_\varsigma \widetilde{c}_\lambda^{-1/2} \exp \big(0.0025 \varsigma^{*}(t)- \bar \varsigma^{*} \big) + \sum_{t_k < t} \nu_\varsigma(t-t_k). \label{eqn:HawkesVarsigma}
\end{align}
These intensities extend the specifications in \eqref{eq:SimLambda}--\eqref{eq:SimVarsigma} by incorporating dependent Brownian motions $B_1$ and $B_2$ with a correlation of $0.3$ and, more importantly, by including summands corresponding to self-exciting Hawkes-type intensities with an additional leverage specification \citep{hawkes2018hawkes, laub2021elements}.
For the sequence of jump time $t_1,t_2,\dots$ of the process $N$, and $\Delta P(t_{k}) = P(t_{k}) - P(t_{k-1})$, we set
\begin{align*}
	 \nu_\lambda(t-t_k) &= 
	 \begin{cases} 
	 	0.05 \bar{\lambda}_\text{det} \exp(-0.25 \bar{\lambda}_\text{det} (t-t_k)) \quad &\text{if} \quad \Delta P(t_{k}) > 0, \\
	 	0.1 \bar{\lambda}_\text{det} \exp(-0.25 \bar{\lambda}_\text{det} (t-t_k)) \quad &\text{if} \quad \Delta P(t_{k}) \le 0,
	 \end{cases}
	\\
	\nu_\varsigma(t-t_k) &= 
	\begin{cases} 
		0 \quad &\text{if} \quad \Delta P(t_{k}) > 0, \\
		0.1 \bar{\varsigma}_\text{det} \exp(-0.5 (t-t_k)) \quad &\text{if} \quad \Delta P(t_{k}) \le 0,
	\end{cases}
\end{align*}
where $\bar{\lambda}_\text{det}$ and $\bar{\varsigma}_\text{det}$ are the daily averages (over all $t \in [0,T]$) of  $\lambda_\text{det}(t)$  and  $\varsigma_\text{det}(t)$, respectively.
Here, past price changes have a self-exciting effect on the intensities that declines exponentially with the time elapsed since that observation, $t-t_k$.
Consistent with the classical leverage effect, positive price changes $\Delta P(t_{k}) > 0$ at the previous ticks $t_k$ have a different (weaker) impact than negative price changes $\Delta P(t_{k}) \le 0$.

As above, the constant $\widetilde{c}_\lambda \in \{2000, 8000, 32000\} \cdot (1-\eta) / \int_0^T \lambda_\text{det}(t) \mathrm{d}t$, with $\eta = 0.5 (0.05 \bar{\lambda}_\text{det} + 0.1 \bar{\lambda}_\text{det})/(0.25 \bar{\lambda}_\text{det})$, controls the expected number of ticks per day; see \citet[Eq.~(3.6)]{laub2021elements} for details.
As we are not aware of a closed-form formula for the expected $\varsigma(t)$ to account for the self-exciting effect stemming from the latter sum in \eqref{eqn:HawkesVarsigma}, we choose $\widetilde{c}_\varsigma \approx 0.855, 0.837, 0.741$ for the settings of $2000, 8000$, and $32000$ expected ticks, respectively. 
These choices ensure that all simulation processes have approximately the same expected IV while maintaining control over the expected number of ticks.
For the Hawkes-type intensities in \eqref{eqn:HawkesLambda}--\eqref{eqn:HawkesVarsigma}, we employ the simulation method described in \citet[Algorithm~3.1]{dassios2013exact}.

\begin{figure}[tb] 
	\centering
    \includegraphics[width=1\textwidth]{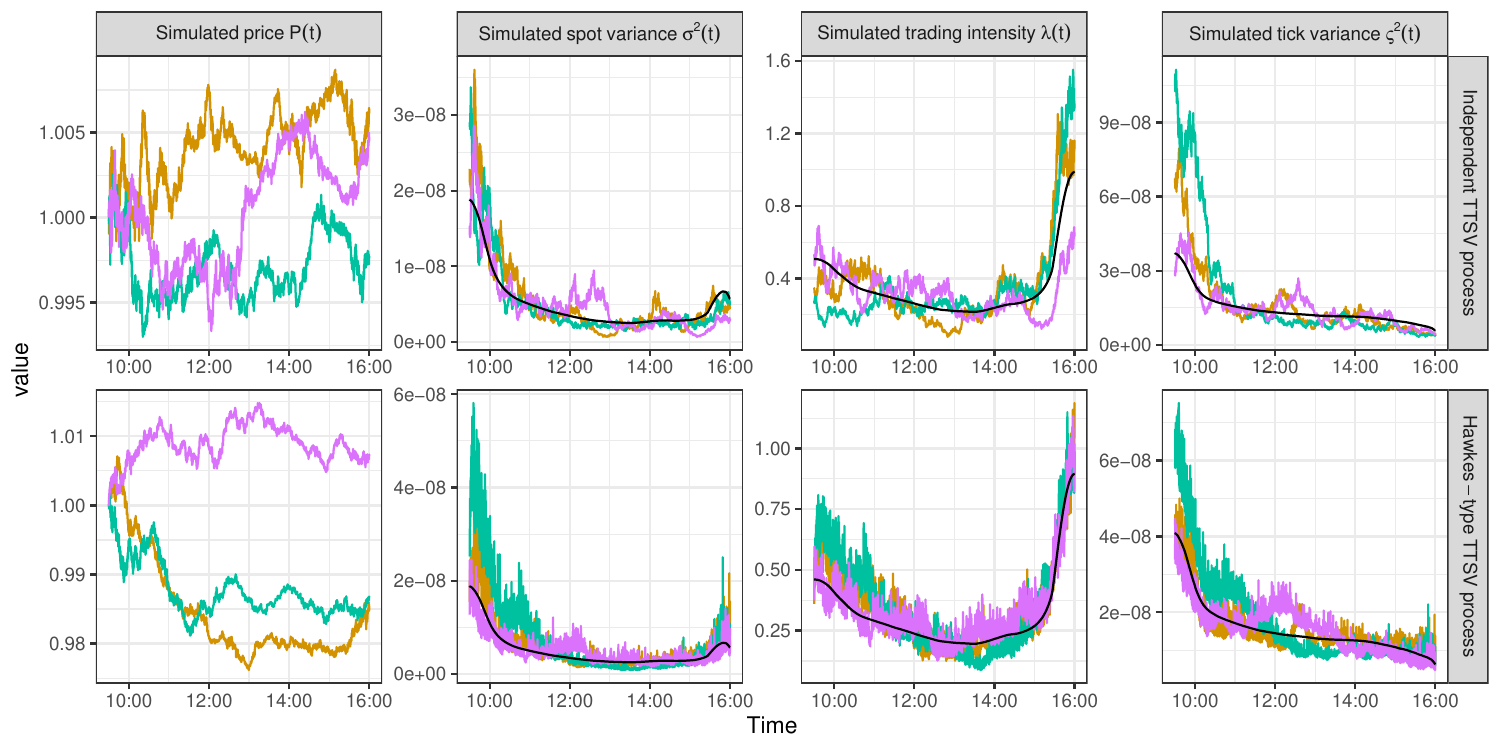}
	\caption{Simulated paths of the asset price as described in Section~\ref{sec:Simulation}, the spot variance $\sigma^2(t)$, the trading intensity $\lambda(t)$, and the tick variance $\varsigma^2(t)$ for three exemplary days in green, orange and pink.
		The black lines show the (appropriately rescaled according to the expected behavior of the Hawkes processes) deterministic components $\lambda_\text{det}(t)$, $\varsigma^2_\text{det}(t)$ and the resulting $\sigma^2_\text{det}(t) = \lambda_\text{det}(t) \, \varsigma^2_\text{det}(t)$ of our simulation setup that are obtained as the estimates from the IBM stock averaged over all tradings days in the year 2018.
	}
	\label{fig:SimExamples}
\end{figure}

The parameters of the two simulation processes above are chosen to mimic real financial data, while also providing sufficient daily variation (across different days) in the simulated intensities $\lambda(t)$ and $\varsigma(t)$, as can be seen from the three exemplary sample paths of $\lambda(t)$, $\varsigma^2(t)$, $\sigma^2(t)$ and $P(t)$ for both processes shown in Figure \ref{fig:SimExamples}.

For both simulation processes, we contaminate the log-price process with either i.i.d.\ or ARMA(1,1) noise with and without a diurnal heteroskedasticity component.
Given the randomly simulated trading times $t_1, \dots, t_{N(T)}$, we set
\begin{align}
	\label{eq:SimulateNoise}
	\widetilde{P}(t_i) = P(t_i) + v_i,
\end{align}
where $v_i$ is independent of all other processes.
For the  i.i.d.\  noise, we let $v_i \stackrel{i.i.d.}{\sim} \mathcal{N}(0, \sigma^2_v)$ for $i=1,\dots, N(T)$, where $\sigma_v = c_{N} \cdot 1.2 \cdot 10^{-4}$.
Here, the factor $1.2 \cdot 10^{-4}$ corresponds to the magnitude of the average tick standard deviation (for the standard setting of 8000 expected ticks per day), and the pre-factor $c_{N} \in \{0, 0.25, 0.5, 1\}$ governs the relative noise level ranging from no noise $c_{N} = 0$ to a high noise setting $c_{N} = 1$, where the noise variance equals the average tick variance.
In the results below, we refer to the factor $c_N$ by writing ``$100 \cdot c_N \%$ noise''.
We emphasize that our ``$100 \%$ noise'' setting is consistent with the findings and simulation setups of \citet{Jacod2017} and \citet{Li2022remedi}.\footnote{In more detail, our $100 \%$ i.i.d.\ noise setting employs a noise standard deviation of $\sigma_v = 1.2 \cdot 10^{-4}$ for values of $\sqrt{\IV} \approx 1.1 \cdot 10^{-2}$.
In contrast, \citet[Section~4.1]{Jacod2017} use the much higher estimated noise standard deviation from their Figure~9 of approximately $5.6 \cdot 10^{-4}$ for Citigroup data in the year 2011 in relation to values of $\sqrt{\IV}$ of around $10^{-2}$.
Moreover, \citet[Figure~5]{Li2022remedi} obtain noise standard deviation estimates of approximately $\{0.7, 1.1\} \cdot 10^{-4}$ (obtained as the square root of the autocovariance function at lag 0) for the Coca-Cola stock in the year 2018, where the pre-factors $\{0.7, 1.1\}$ refer to two different noise estimators.}

For the ARMA noise process, we let $v_i =\varepsilon_{i} +  0.5 v_{i-1}  + 0.5 \varepsilon_{i-1}$, where $\varepsilon_{i} \sim \mathcal{N}(0, \sigma^2_{\varepsilon,i})$, and $\sigma^2_{\varepsilon,i}$ is either constant or follows a diurnal V-shaped piecewise linear function.
The latter assigns double the variance at market opening and closing compared to the middle of the trading day, following \citet{Kalnina2008} and \citet{Jacod2017}.
For each of the five choices in $c_{N} $, we specify $\sigma^2_{\varepsilon,i}$ such that the average standard deviation of $v_i$ over the day equals $c_{N} \cdot 1.2 \cdot 10^{-4}$ to make it comparable in magnitude to the i.i.d.\ noise setting.

For all sampling schemes except HTS, we fix the value of $M$ by using information on the respective accumulated intensity $\Phi(T)$ at the end of each trading day in \eqref{eq:ChoosingTau_j}.
While this formally violates the stopping-time condition \eqref{eqn:sampling_scheme} in Theorems~\ref{thm:MSE_IV} and \ref{thm:EfficientSampling}, we illustrate in Figure~\ref{fig:RMSE_Stopping} that the results are invariant to this violation.
As fixing $M$ is not possible for the HTS scheme, we fix $\delta$, for which we choose a sequence of 17 values ranging from approximately $0.00022$ to $0.0054$.
These values yield reasonable sampling frequencies allowing for a comparison with the other sampling schemes.
Note that for HTS and a fixed $\delta$, the number of samples per days is random and can vary substantially across trading days.

While the CTS and rTTS schemes can be implemented straightforwardly, the iTTS, iBTS and rBTS schemes require the intensities $\lambda(t)$, $\varsigma^2(t) \lambda(t)$, and  $\varsigma^2(t)$, respectively.
For this, we use rolling averages over the past $50$ trading days of the nonparametric estimators $\widehat{\lambda}(t)$, $\widehat{\lambda}(t) \,  \hat{\varsigma}^2(t)$ and $\hat{\varsigma}^2(t)$, respectively, which are proposed in \citet{dahlhaus2016}, who also show consistency of these estimators under i.i.d.\ noise.

\begin{figure}[tb]
	\centering
	\includegraphics[width=1\textwidth]{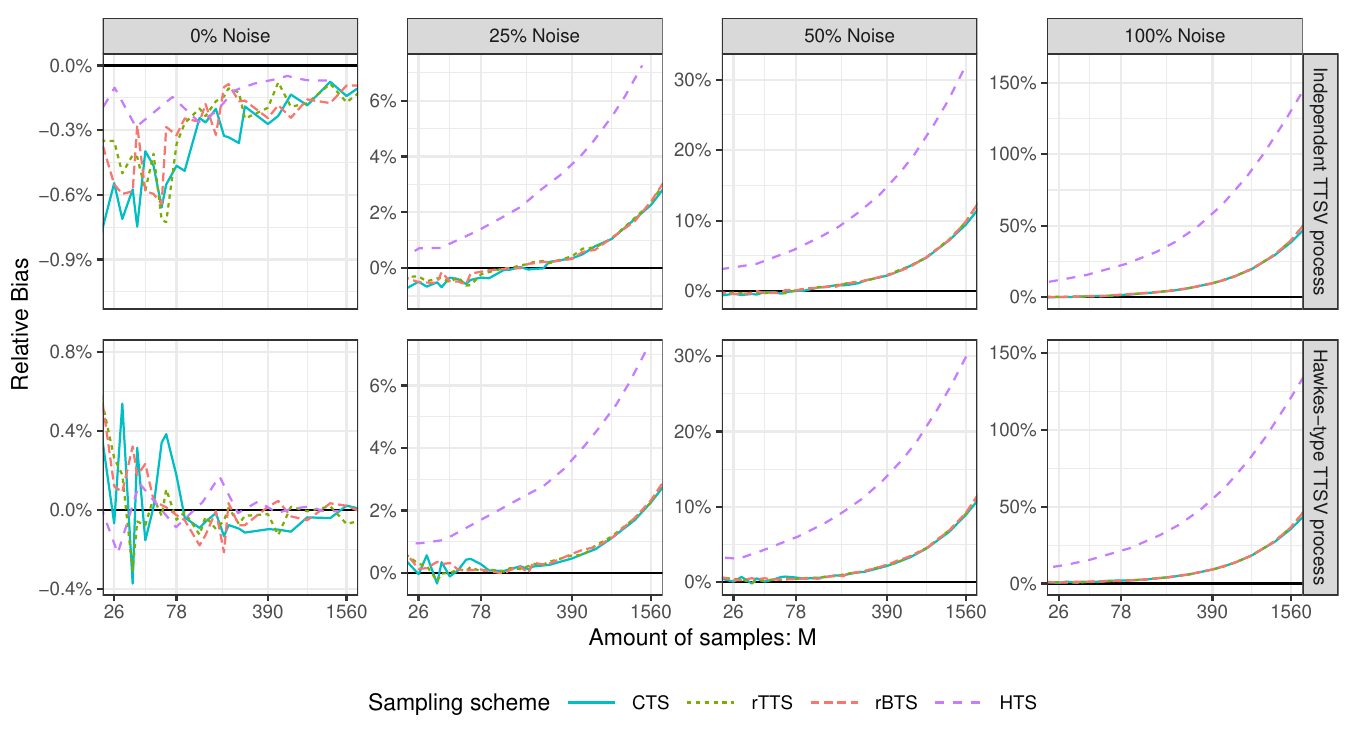}
	\caption{
		Relative bias (in percent) of the RV estimator using different sampling schemes in color plotted against the (for HTS average) sampling frequencies $M$ on the horizontal axis.
        The plot columns refer to the noise magnitude described below \eqref{eq:SimulateNoise} and the plot rows refer to the two process specifications described after \eqref{eq:SimPrice}.
        }
	\label{fig:Bias_Processes}
\end{figure}

Figure \ref{fig:Bias_Processes} shows the relative bias, i.e., the bias standardized by the respective daily value of IV, of the RV estimator for the considered sampling schemes, a range of $M$ values, and for the two process specifications\footnote{For the Hawkes-type TTSV-process, we compare the estimated RV values against the \emph{realized} IV, which can easily be computed as $\rIV(0,T) = \int_0^T\varsigma^2(r)dN(r) = \sum_{0 \leq t_i \leq T} \varsigma^2(t_i)$. In contrast, $\IV(0,T) = \int_0^T \varsigma^2(r) \lambda(r) dr$ is much more difficult to approximate in our simulations due to the combination of a continuous time diffusion with the Hawkes-type jumps with exponential decays defined in \eqref{eqn:HawkesLambda}--\eqref{eqn:HawkesVarsigma}. Note that $\E[\rIV(0,T)] = \E[\IV(0,T)]$.} described above.
Results are shown for four magnitudes of i.i.d.\ noise and values of $c_\lambda$ and $\widetilde{c}_\lambda$ that yield 8000 expected ticks per day.

For the specification without noise, we can confirm the unbiasedness of the RV estimator of Theorem~\ref{thm:unbiasedness} for all sampling schemes and both process specifications.
For an increasing amount of noise, the RV estimator exhibits the usual positive bias that grows with the sampling frequency.
Notably, the HTS sampling scheme reacts more strongly to increasing noise levels, even for the lowest considered sampling frequencies, where the other sampling schemes are (almost) unbiased.
Importantly, the results hold equivalently for both the independent and the Hawkes-type TTSV processes, thereby illustrating the broad applicability of Theorem~\ref{thm:unbiasedness}.

We continue to shed light on the increased bias under noise of the HTS scheme:
Using the notation $r(s,t) = P(t) - P(s)$ and $\widetilde{r}(s,t) = \widetilde{P}(t) - \widetilde{P}(s)$, heuristic arguments for the RV estimator under noise, $\widetilde{\RV}(\btau)$, yield
\begin{align}
    \widetilde{\RV}(\btau) &= \sum_{j=1}^M \widetilde{r}(\tau_{j-1}, \tau_j)^2 \nonumber \\ 
    &= \sum_{j=1}^M r(\tau_{j-1}, \tau_j)^2 + \sum_{j=1}^M (v_{N(\tau_j)} - v_{N(\tau_{j-1})})^2 + 2 \sum_{j=1}^M r(\tau_{j-1}, \tau_j) (v_{N(\tau_j)} - v_{N(\tau_{j-1})}) \nonumber \\
    & = \mathrm{IV}(0,T) + \mathcal{O}_P\big(M^{-1/2}\big) \nonumber \\
    &\qquad \quad + \sum_{j=1}^M (v_{N(\tau_j)} - v_{N(\tau_{j-1})})^2 + 2 \sum_{j=1}^M r(\tau_{j-1}, \tau_j) (v_{N(\tau_j)} - v_{N(\tau_{j-1})}).
    \label{eqn:BiasApproxTerms}
\end{align}
In the following, we ignore the asymptotically vanishing $\mathcal{O}_P\big(M^{-1/2}\big)$ term arising from a standard central limit theorem for the (noise-free) RV estimator.
Then, \eqref{eqn:BiasApproxTerms} indicates that the bias is driven by two terms: the \emph{variance} of the noise differences at the sampling points and the \emph{covariance} between the sampled (noise-free, efficient) returns and the noise differences.

\begin{figure}[tb]
	\centering
	\includegraphics[width=1\textwidth]{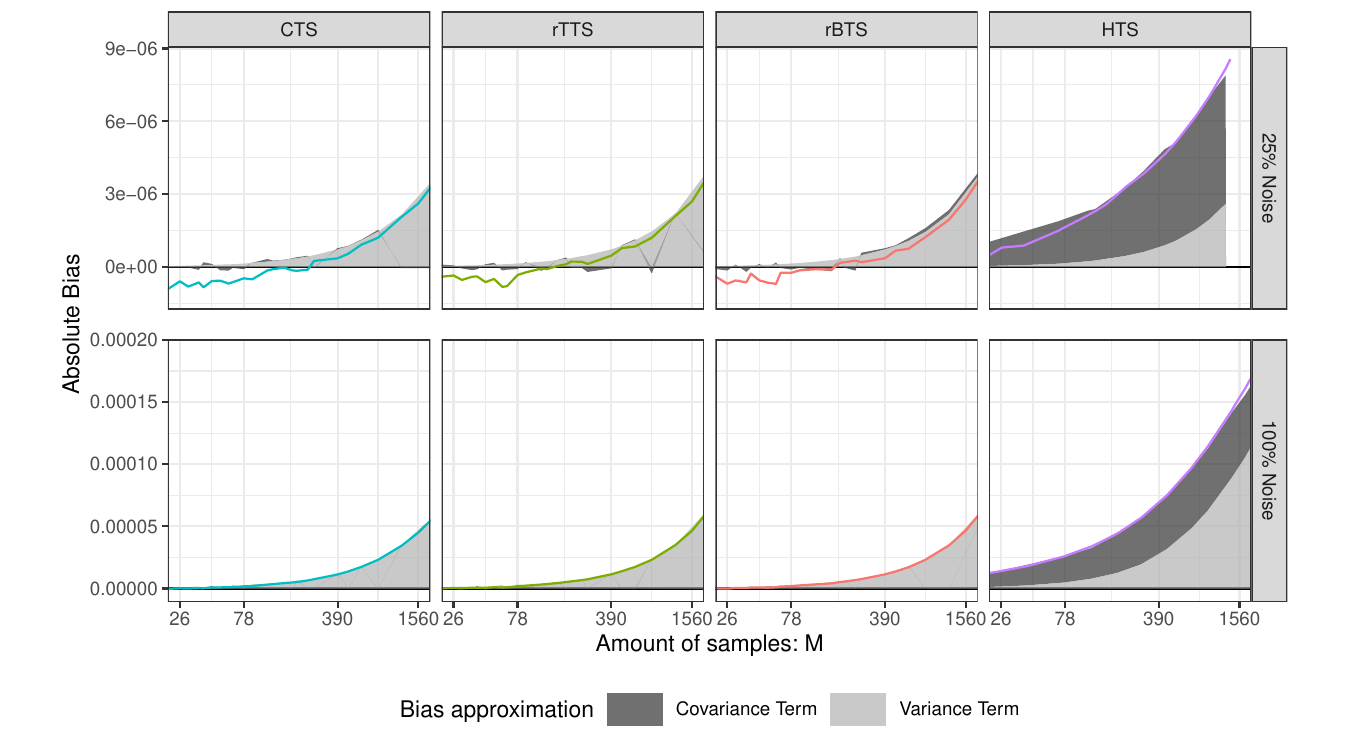}
    \caption{
    	Bias of the RV estimator in the independent TTSV process using different sampling schemes in the plot columns (and in color) plotted against the (for HTS average) sampling frequencies $M$ on the horizontal axis.
        The gray areas depict the ``variance'' and ``covariance'' terms from the bias approximation in \eqref{eqn:BiasApproxTerms}, estimated from corresponding simulations.
        The plot rows refer to two different noise magnitudes described below \eqref{eq:SimulateNoise}.
        }
	\label{fig:Bias_Approximation}
\end{figure}

Figure~\ref{fig:Bias_Approximation} displays the bias for the four sampling schemes under the independent TTSV process with 8000 expected ticks per day and 25\% or 100\% i.i.d.\ noise. 
\edit{The colored lines represent the empirical bias obtained from the simulations, i.e., these lines match the respective lines from the second and fourth plot in the upper panel of Figure~\ref{fig:Bias_Processes}.}
The shaded gray areas correspond to the two approximation terms from \eqref{eqn:BiasApproxTerms}, which help explain the sampling-scheme-dependent differences in bias.
We estimate these terms from the simulated data according to the formulas in \eqref{eqn:BiasApproxTerms}.
While the variance term is of a similar magnitude for all sampling schemes, the HTS scheme stands out \edit{as the only scheme} with a notably large positive covariance term---the main cause of HTS’s elevated bias, as we explain in the following.

For CTS, rTTS, and rBTS, the efficient returns $r(\tau_{j-1}, \tau_j)$ are independent of the noise terms as the sampling points do not depend on the noise on the given day.
\edit{In contrast, HTS determines the next sampling time $\tau_j$ as the first time point $t \ge \tau_{j-1}$, where the absolute noisy price change, $\big| \widetilde{r}(\tau_{j-1}, t) \big| = \big| r(\tau_{j-1}, t) + (v_{N(t)} - v_{N(\tau_{j-1})}) \big|$ exceeds $\delta$.}\footnote{As our price process in \eqref{eq:TTSV_model} (such as real prices at financial markets) generates discrete price paths that are only observed at the realizations of $N$, the absolute values of the HTS returns slightly overshoots the threshold $\delta$ as can be seen in Figure~\ref{fig:HTS_Overshooting}. 
As shown in Theorem~\ref{thm:unbiasedness} that applies to arbitrary $\Fil$-adapted sampling schemes, this should not be the underlying reason for the increased bias of HTS.}
\edit{
    Hence, given a fixed previous sampling point $\tau_{j-1}$, HTS is particularly likely to sample at time points $\tau_{j}$ for which the two quantities $r(\tau_{j-1}, \tau_j)$ and $(v_{N(\tau_j)} - v_{N(\tau_{j-1})})$ share the same sign, and hence accumulate in the noisy return $\widetilde{r}(\tau_{j-1}, \tau_j)$.
    This behavior results in a positive covariance term in \eqref{eqn:BiasApproxTerms} and in our simulations, we observe associated correlations ranging between $0.15$ and $0.5$ for HTS.
}

\begin{figure}[tb]
	\centering
	\includegraphics[width=1\textwidth]{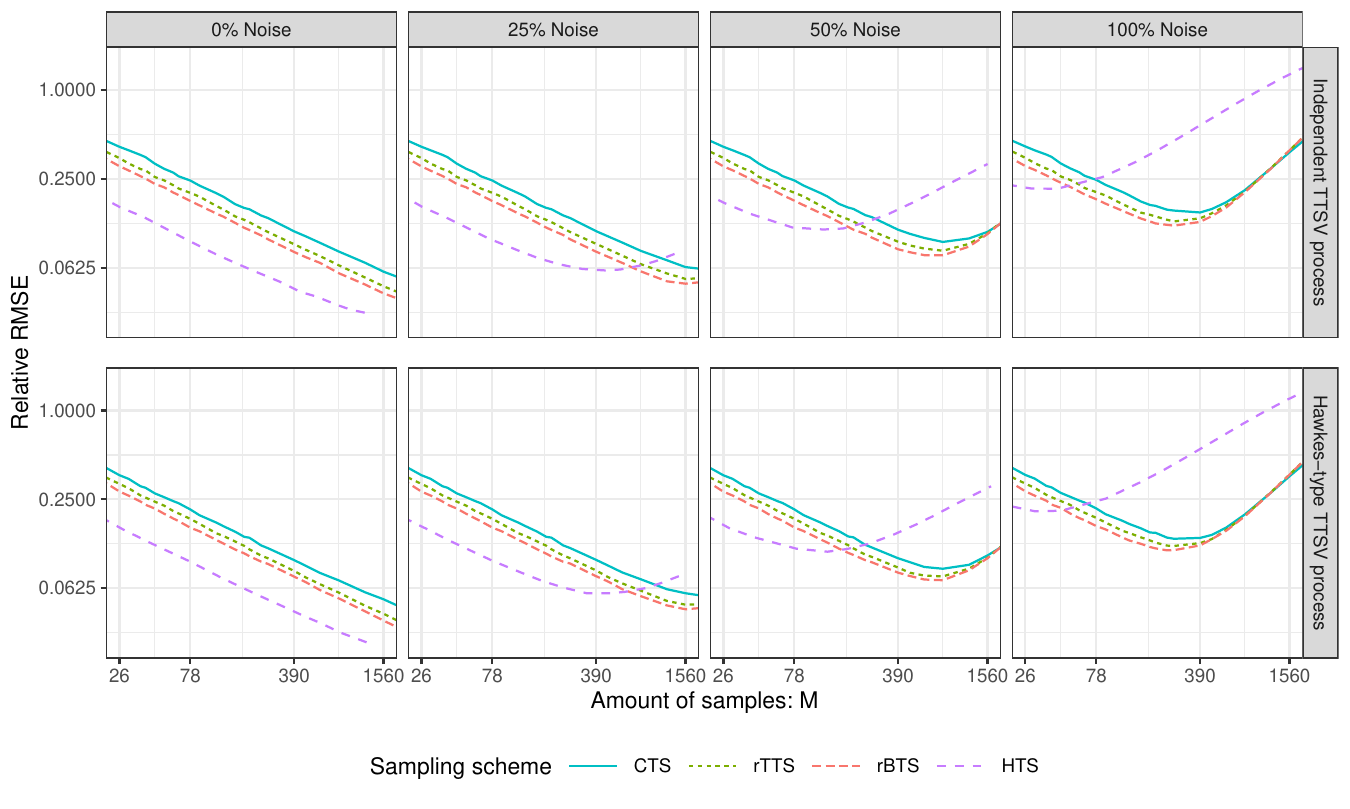}
	\caption{
		Relative RMSE of the RV estimator using different sampling schemes in color plotted against the (for HTS average) sampling frequencies $M$ on the horizontal axis.
        The plot columns refer to the noise magnitude described below \eqref{eq:SimulateNoise} and the plot rows refer to the two process specifications described after \eqref{eq:SimPrice}.}
	\label{fig:RMSE_Processes}
\end{figure}

Figure \ref{fig:RMSE_Processes} presents the relative RMSE of the RV estimator\footnote{
	The relative RMSE over the trading days $d=1,\dots,D$ is formally given as
	\begin{align*}
		\frac{\sqrt{\sum_{d=1}^D \big( \RV_d(\btau) - \IV_d(0,T) \big)^2}}{\sum_{d=1}^D  \IV_d(0,T)},
	\end{align*}
	ensuring that the square root and the normalization are taken ``outside'' of the MSE.
    This way, the plots indeed analyze the MSE while presenting results in a conveniently interpretable scale.}
for  the different sampling schemes.
As in Figure \ref{fig:Bias_Processes}, we show results for both simulation processes and four noise levels in the subplots.
In the absence of noise, HTS clearly yields the lowest RMSE across all sampling frequencies and both process specifications, as implied by Theorem~\ref{thm:EfficientSampling}.
Furthermore, rBTS and rTTS also improve upon the classically used CTS scheme, in line with part (b) of Theorems~\ref{thm:EfficientSampling}.
As the noise level increases, the RMSE rises across all sampling schemes and frequencies, reflecting the growing bias illustrated in Figures \ref{fig:Bias_Processes}--\ref{fig:Bias_Approximation}.

The pronounced bias for the HTS scheme leads to the finding that, as the sampling frequency increases, rBTS yields RV estimates with lower RMSE than HTS.
The crossing point at which rBTS becomes more efficient than HTS primarily depends on the noise magnitude and ranges from $M \approx 780$ to $M \approx 39$, corresponding to sampling frequencies between 30 seconds and 10 minutes.
Similar to the bias, the MSE results are very similar for the independent and the Hawkes-type TTSV processes, hence illustrating the broad applicability of Theorem~\ref{thm:EfficientSampling}.
This observation also supports the insight from Appendix~\ref{sec:RemainderTerms} that the remainder terms in Corollaries~\ref{cor:MSE_jump_based} and \ref{cor:MSE_intensity_based} are approximately equal across sampling schemes, even under mild dependence.

Appendix~\ref{sec:AdditionalResults} contains additional simulation results summarized as follows:
First, Figure~\ref{fig:RMSE_TickPerDay} analyzes the effects of a varying expected number of $\{2000, 8000, 32000\}$ trades per day while keeping the expected IV unchanged.
Under noise, HTS performs worse as the number of ticks increases, which is mainly explained by an increased relationship of the noise relative to $\varsigma(t)$:
More ticks are generated through a higher level of $\lambda(t)$, which results in a lower $\varsigma(t)$ as the expected IV is held constant.
Second, Figure~\ref{fig:RMSE_NoiseProcess} illustrates that our results are robust to the standard and diurnal ARMA noise specifications.
Third, Figure~\ref{fig:RMSE_Intensity} confirms parts (b) and (c) of Theorem~\ref{thm:EfficientSampling}, i.e., that the \emph{realized} TTS and BTS sampling variants outperform the \emph{intensity} variants, and that using the true (oracle) intensities yields slightly better RV estimation performance than using their estimated counterparts.
Fourth, Figure~\ref{fig:RMSE_Stopping} shows that employing stopping-times for rTTS and rBTS, as opposed to fixing $M$ (see Section~\ref{sec:SamplingSchemes}), produces essentially the same RMSE results.

\section{Empirical Applications}
\label{sec:Application}

We start to illustrate the gains in estimation accuracy that HTS and rBTS entail for the RV estimator in Section \ref{sec:appl_Estimation}, and continue to analyze different sampling schemes in a forecasting environment in Section \ref{sec:appl_Forecasting}.

\subsection{Comparing Estimation Accuracy}
\label{sec:appl_Estimation}

In this application, we assess the estimation accuracy of the RV estimator for the different sampling schemes using data on 27 liquid stocks from the NYSE TAQ database.\footnote{We use the 27 stocks with the ticker symbols  AA, AXP, BA, BAC, CAT, DIS, GE, GS, HD, HON, HPQ, IBM, IP, JNJ, JPM, KO, MCD, MMM, MO, MRK, NKE, PFE, PG, UTX, VZ, WMT, and XOM.}
We filter the raw prices according to \citet[Section 3]{barndorff2009}.
Based on the filtered prices, we compute the five sampling schemes CTS, rTTS, iBTS, rBTS, and HTS as described in Section \ref{sec:SamplingSchemes}.
We use all trading days from January 1, 2012, to March 31, 2019, for evaluating the estimation accuracy and up to 50 trading days before January 1, 2012, to estimate the intensities required for the iBTS and rBTS methods.
We estimate the underlying trading intensity and tick variance with the non-parametric and noise-robust estimators of \citet{dahlhaus2016} and average the estimated intensities over the past 50 trading days in a rolling fashion.

For the above sampling schemes, we choose a fixed number of $M \in \{13,26,39,78,130,260,390\}$ log-returns per day, which correspond to intrinsic time sampling frequencies of $390/M$ minutes. 
As in the simulations, fixing $M$ is done using the information on $\Phi(T)$ available at the end of each trading day.
For HTS, however, fixing the threshold $\delta$ leads to a random number of samples $M_\delta$ per day, which can vary considerably. 
To address this variability, we proceed as follows: 
For each $M$, asset, and trading day, we select the HTS result corresponding to the threshold $\delta$ for which the realized $M_\delta$ is closest to the given $M$. 
For $\delta$, we use 29 equally spaced values for $\log_{10}(\delta)$ between $-3.7$ and  $-2.3$.
Table~\ref{tab:applSigPosNegValuesAggregation} shows that averaging $M_\delta$ over time and assets before matching to $M$ does not meaningfully change the results for HTS.\footnote{\label{fn:HTS_Matching}Table~\ref{tab:applSigPosNegValuesAggregation} also shows results when we (i) match \emph{monthly averages} by averaging $M_\delta$ over all days within each month before matching to the $M$-grid; (ii) use \emph{all-time averaging} over all trading days in the sample; and (iii) apply \emph{all-time and asset-wise averaging} across all days and assets.}

We evaluate the competing RV estimators with the data-based ranking method of \citet{Patton2011RV}, which addresses the challenge that the estimation target, IV, is not observable, even ex post.
Specifically, we use the subsequent trading day’s IV estimate as a proxy, assuming it is unbiased but noisy.
By using a future RV estimator as the proxy, the method of \cite{Patton2011RV} ``breaks'' the correlation between the estimation errors of the RV estimators under consideration and the proxy.
In practice, one should use an unbiased proxy that is unlikely to be affected by MMN.
While choosing a potentially inefficient estimator still gives an asymptotically valid test, its power might be lower \citep{LiuPattonSheppard2015, HogaDimi2022}.
To balance these points, we set the proxy to the next day`s RV computed from 5 minute CTS returns throughout our analysis.
Using different reasonable choices for the proxy such as sampling frequencies of 1, 10, or 15 minutes, \edit{or daily squared returns (see Figures~\ref{fig:appl_RV_eval_vsCTS_SqRet} and \ref{fig:appl_RV_eval_vsrBTS_SqRet}),} does not meaningfully change our results.
We test for significance of the pairwise loss differences with respect to a benchmark estimator to be specified below (which is in general different from the proxy) by using the  \cite{DieboldMariano1995} test, with inference drawn by using the stationary bootstrap of \citet{PolitisRomano1994} that is shown to be valid in this setting by \citet[Proposition 2]{Patton2011RV}.

\begin{table}[tb]
	\centering  
	\begin{tabular}{llrrlrrllrrlrrrr}
		\toprule
        \multicolumn{7}{c}{Sampling vs. CTS} & $\qquad$ &  \multicolumn{7}{c}{Sampling vs. rBTS}  \\
		\cmidrule{1-7}  	\cmidrule{9-15} 
		&& \multicolumn{2}{c}{MSE} &&  \multicolumn{2}{c}{QLIKE} &&&& \multicolumn{2}{c}{MSE} &&  \multicolumn{2}{c}{QLIKE} \\
		\cmidrule{3-4}  	\cmidrule{6-7}  	\cmidrule{11-12} 	\cmidrule{14-15} 	
	   Sampling &  & pos & neg  &  & pos & neg& & Sampling & &pos & neg &  & pos & neg \\ 
		\midrule
        &&&&&&&& CTS &  & 0 & 56 &  & 2 & 90 \\  
        rTTS &  & 46 & 0 &  & 64 & 8 && rTTS &  & 3 & 42 &  & 0 & 89 \\ 
        iBTS &  & 43 & 1 &  & 95 & 0 && iBTS &  & 4 & 29 &  & 14 & 27 \\ 
        rBTS &  & 56 & 0 &  & 90 & 2 \\ 
        HTS &  & 56 & 3 &  & 86 & 4  && HTS &  & 33 & 19 &  & 73 & 10 \\ 
		\bottomrule
	\end{tabular}
    \caption{Percentage values of significantly positive (``pos'') and negative (``neg'')  MSE and QLIKE loss differences between the sampling schemes mentioned in the column ``Sampling'' against the one in the title using the method of \cite{Patton2011RV}.
	The percentage values are computed over the 27 assets and the seven employed values of $M$ for the respective estimators.}
	\label{tab:applSigPosNegValues}
\end{table}

Table \ref{tab:applSigPosNegValues} summarizes the results by reporting the percentage of significantly positive and negative loss differences (at the $5\%$ level) compared to the baseline sampling schemes, aggregated across the 27 assets and the seven considered sampling frequencies.
We use CTS and rBTS as the baseline schemes for comparison in the two panels: CTS as the most commonly employed sampling method in the literature, and rBTS to enable a direct comparison to HTS, as motivated by our simulation results.
We deliberately compare estimators with the same sampling frequency \emph{across} sampling schemes as a direct comparison of sampling schemes is the main focus of the paper.
The table shows results based on both the MSE and QLIKE loss functions.

\begin{figure}[p]
	\centering
	\includegraphics[width=0.9\textwidth]{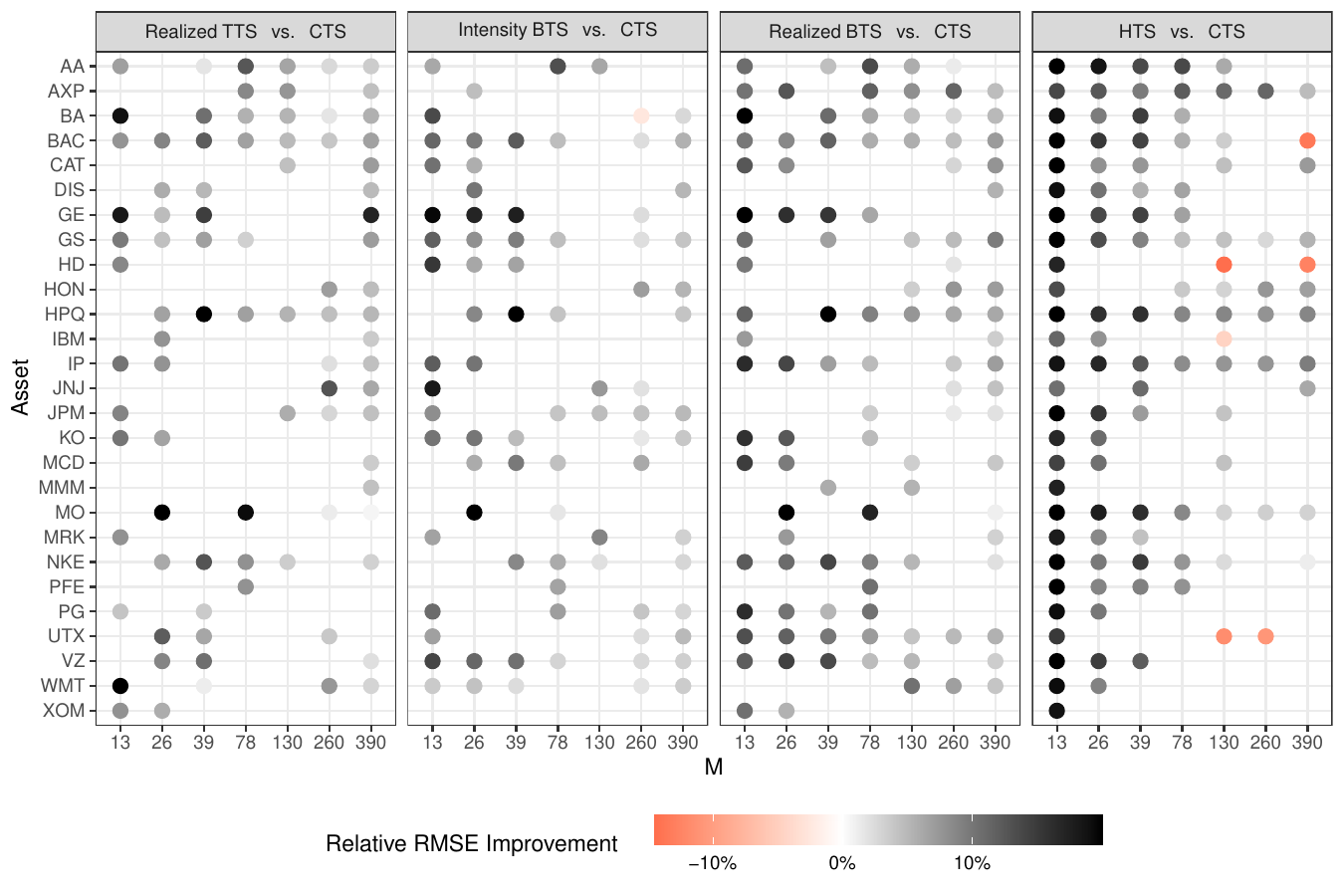}
	\includegraphics[width=0.9\textwidth]{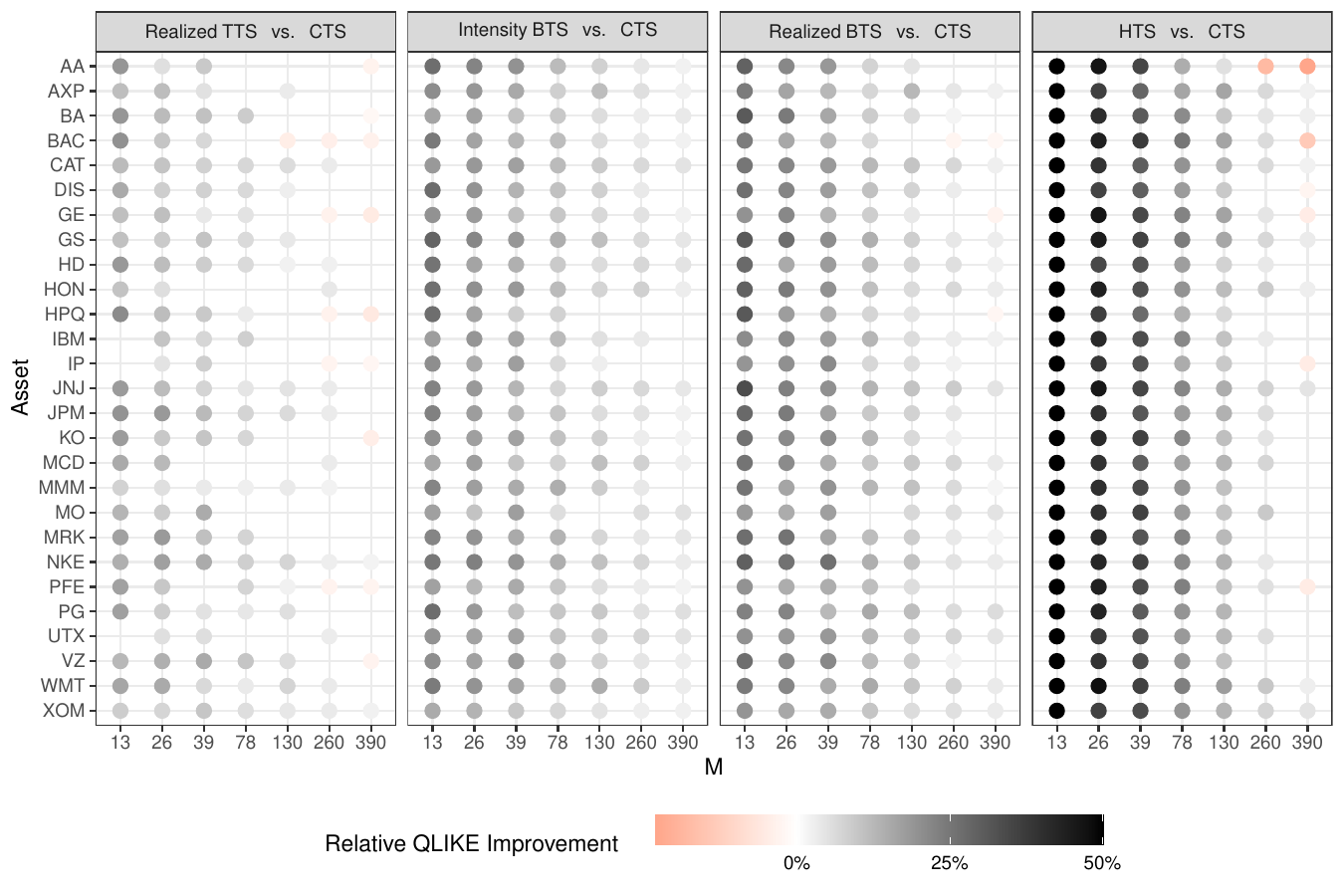}
	\caption{RMSE (top) and QLIKE (bottom) loss differences for the RV estimator based on different sampling schemes and a range of sampling frequencies $M$ for the 27 considered assets.
	Each point corresponds to a (at the $5\%$ level) significant  loss difference of the corresponding RV estimator to a \emph{benchmark CTS RV estimator} with the same sampling frequency.
	Insignificant loss differences are omitted.
	The color scale of the points shows the relative improvement in terms of RMSE/QLIKE, where black (red) colors refer to an improvement (decline).}
	\label{fig:appl_RV_eval_vsCTS}
\end{figure}

\begin{figure}[p]
	\centering
	\includegraphics[width=0.9\textwidth]{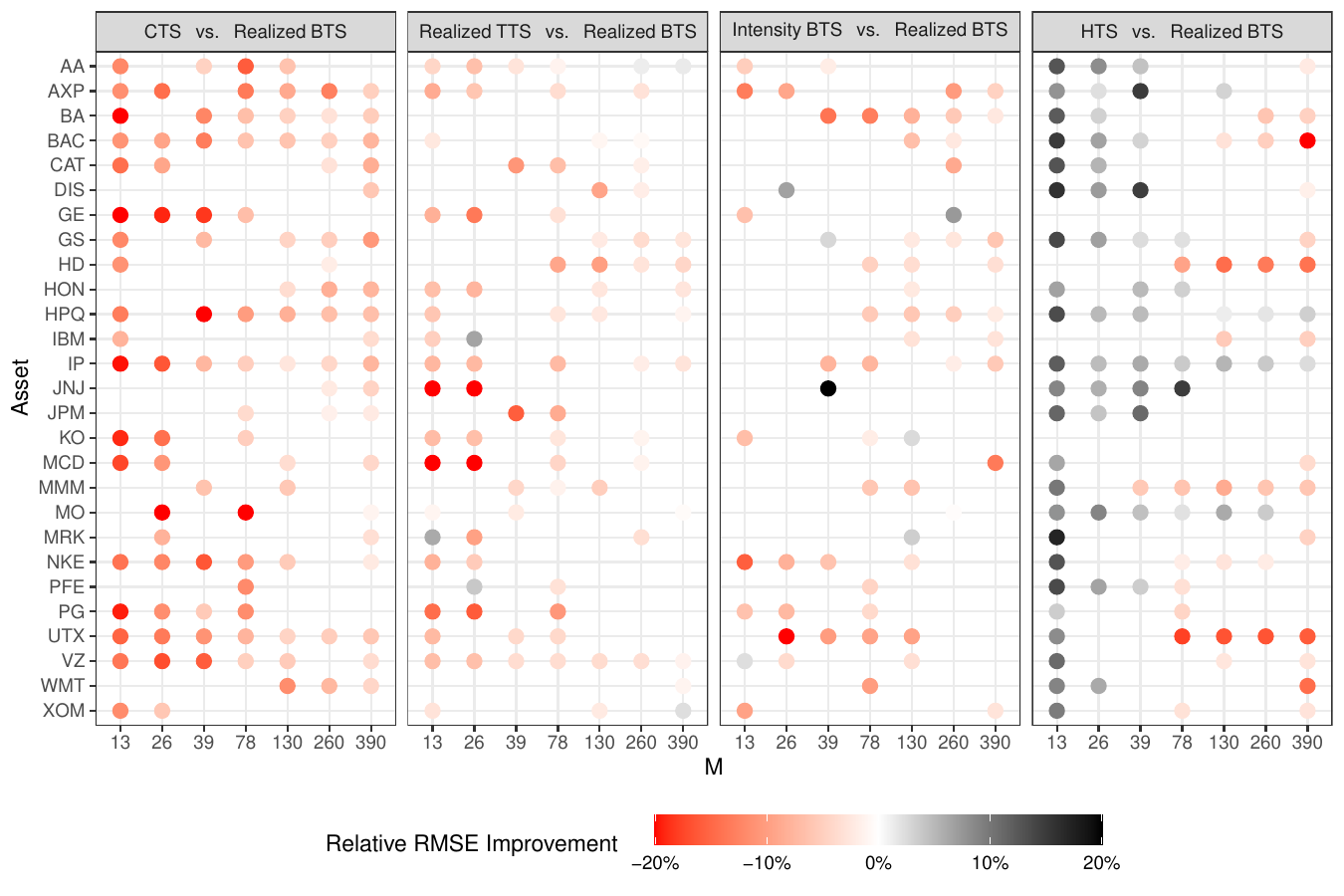}
	\includegraphics[width=0.9\textwidth]{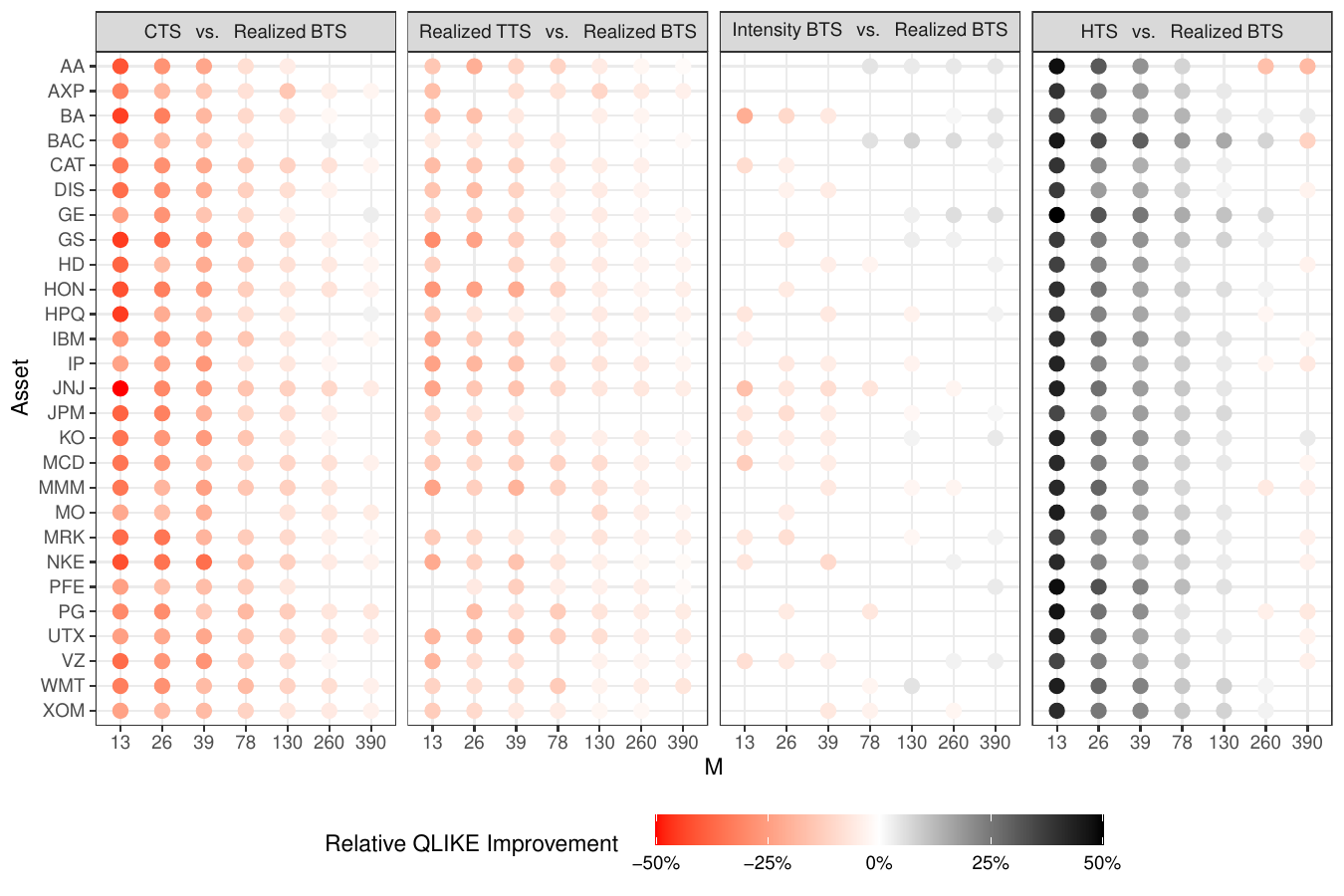}
	\caption{RMSE (top) and QLIKE (bottom) loss differences for the RV estimator based on different sampling schemes and a range of sampling frequencies $M$ for the 27 considered assets.
	Each point corresponds to a (at the $5\%$ level) significant  loss difference of the corresponding RV estimator to a \emph{benchmark rBTS RV estimator} with the same sampling frequency.
	Insignificant loss differences are omitted.
	The color scale of the points shows the relative improvement in terms of RMSE/QLIKE, where black (red) colors refer to an improvement (decline).}
	\label{fig:appl_RV_eval_vsrBTS}
\end{figure}

Detailed results for each asset and sampling frequency are given in Figures~\ref{fig:appl_RV_eval_vsCTS} and \ref{fig:appl_RV_eval_vsrBTS}, comparing to CTS and rBTS as the baseline schemes, respectively.
The upper panels show RMSE and the lower panels QLIKE results.
Black (red) points indicate that the considered estimator is significantly better (worse) than the benchmark at the 5\% level; absence of a point denotes an insignificant difference.
The color intensity indicates the magnitude of the relative improvements in RMSE (capped at $\pm 20\%$) or in QLIKE (capped at $\pm 50\%$).

When comparing the more elaborate (rTTS, iBTS, rBTS, HTS) sampling schemes against the baseline CTS scheme in Figure \ref{fig:appl_RV_eval_vsCTS} and the left panel of Table \ref{tab:applSigPosNegValues}, we observe far more significantly positive than negative loss differences.
This pattern is even more pronounced for the QLIKE loss function, relating to the known fact that evaluation results are often more stable for QLIKE than for MSE loss \citep{Patton2011}.
Figure \ref{fig:appl_RV_eval_vsCTS} further shows that the increases are particularly pronounced at lower sampling frequencies, which are still regularly used in empirical work such as in  \citet{LiuPattonSheppard2015, bollerslev2018risk, bollerslev2020good, bollerslev2022zero, bates2019crashes, bucci2020realized, reisenhofer2022harnet, alfelt2023singular, Patton2023Bespoke}.
Consistent with our simulation findings, the most frequent and substantial improvements can be observed for the HTS (at lower frequencies) and the rBTS schemes.

Figure~\ref{fig:appl_RV_eval_vsrBTS} and the right panel of Table~\ref{tab:applSigPosNegValues} show that rBTS consistently outperforms CTS, rTTS, and iBTS, with efficiency gains again being more pronounced under the QLIKE loss.
The direct comparison between rBTS and HTS reveals that, in line with our simulation results, HTS dominates rBTS at lower sampling frequencies below 5 minutes ($M \le 78$), where noise has a negligible effect.
In contrast, rBTS outperforms HTS at frequencies above 5 minutes ($M > 78$) for most of the considered stocks.

\edit{
To assess how our sparsely sampled RV estimators perform compared to a state-of-the-art noise-robust benchmark, Figure~\ref{fig:appl_RV_eval_vs_PAVG} compares them to the pre-averaging RV of \citet{Jacod2009}, computed from all tick-level data with non-zero price changes and with a bandwidth of $0.5 \sqrt{m_\text{ticks}}$, where $m_\text{ticks}$ is the daily number of ticks.
Because the pre-averaging RV is independent of the sampling frequency $M$, all sampling-based RV estimators (for different $M$) are compared to a single pre-averaging estimator in Figure~\ref{fig:appl_RV_eval_vs_PAVG}. The resulting presentation therefore differs slightly from Figures~\ref{fig:appl_RV_eval_vsCTS} and \ref{fig:appl_RV_eval_vsrBTS}.
For the evaluation proxy, we use daily squared returns, since other choices---either a sparsely sampled CTS RV in Figure~\ref{fig:appl_RV_eval_vs_PAVG_CTSProxy} or the pre-averaging RV in Figure~\ref{fig:appl_RV_eval_vs_PAVG_PAVGProxy}---can bias the results.
As noted above, using daily squared returns reduces the test’s power but avoids this undesired sensitivity.}

\edit{
Figure~\ref{fig:appl_RV_eval_vs_PAVG} shows that our sparsely sampled RV estimators slightly outperform the pre-averaging estimator, particularly the rTTS and rBTS variants at sampling frequencies between $M = 78$ and $M = 390$. Although the HTS estimator exhibits some advantages at very low frequencies ($M < 78$) over the other sampling schemes in Figures~\ref{fig:appl_RV_eval_vsCTS} and \ref{fig:appl_RV_eval_vsrBTS}, RV at these frequencies does not outperform the pre-averaging benchmark in Figure~\ref{fig:appl_RV_eval_vs_PAVG}.
Our overall findings with respect to the pre-averaging RV estimator are consistent with the empirical study of \citet{LiuPattonSheppard2015}, who find that the classical RV estimator is difficult to outperform in practice.}

\begin{figure}[tb]
	\centering
	\includegraphics[width=0.9\textwidth]{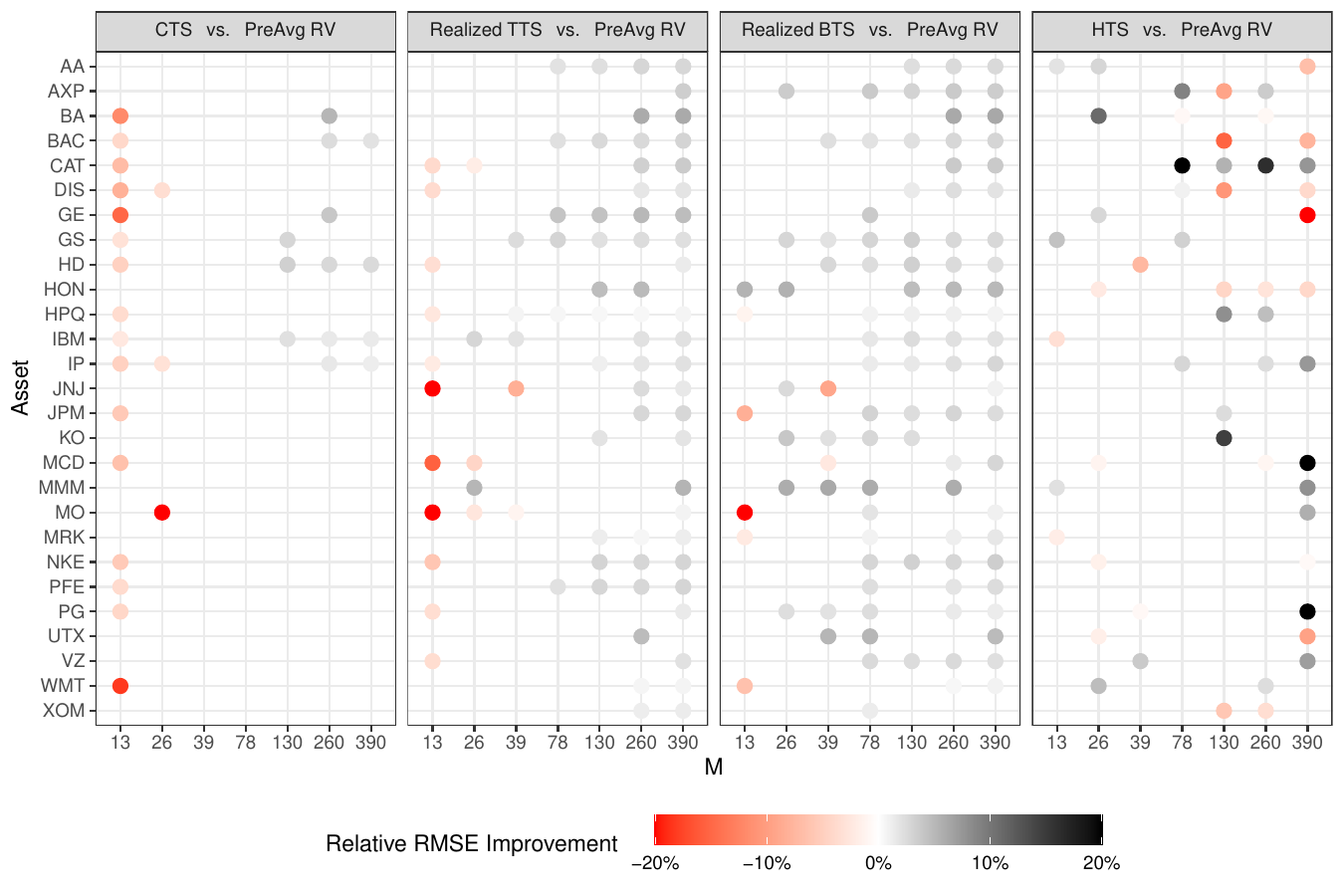}
	\caption{RMSE loss differences for the RV estimator based on different sampling schemes and a range of sampling frequencies $M$ for the 27 considered assets.
    Here, we use the (leaded) daily squared return as the proxy in the evaluation framework of \citet{Patton2011RV}.
	Each point corresponds to a (at the $5\%$ level) significant  loss difference of the corresponding RV estimator to a \emph{benchmark pre-averaging RV estimator} using all tick-level returns.
	Insignificant loss differences are omitted.
	The color scale of the points shows the relative improvement in terms of RMSE, where black (red) colors refer to an improvement (decline).}
	\label{fig:appl_RV_eval_vs_PAVG}
\end{figure}

In summary, our empirical analysis confirms our theoretical and simulation-based findings.
First, the more elaborate sampling schemes (rTTS, iBTS, rBTS, HTS) that take into account intraday variation clearly outperform CTS.
Second, rBTS and HTS perform best within this class\edit{, and can also outperform the noise robust pre-averaging estimator using all tick level data.}
Third, their relative effectiveness depends on the sampling frequency: HTS excels at (very) low frequencies, while rBTS proves to be more robust at higher ones.
The empirical superiority of the HTS and especially the rBTS schemes further underscores the practical value of the TTSV modeling framework, which enables the convenient derivation of the rBTS scheme.

\subsection{Comparing Forecast Performance}
\label{sec:appl_Forecasting}

We next assess how the gains in estimation accuracy of HTS and rBTS translate into improved forecast performance following the empirical analysis of \citet[Section 5.6]{LiuPattonSheppard2015}.
To this end, we use the Heterogeneous AutoRegressive (HAR) model of \citet{Corsi2009}, 
\begin{align}
	\label{eq:HARmodel}
	\RV_d(\btau) = \beta_0 + \beta_D \RV_{d-1}(\btau) + \beta_W \frac{1}{5} \sum_{j=1}^5  \RV_{d-j}(\btau) +  \beta_M \frac{1}{22} \sum_{j=1}^{22}  \RV_{d-j}(\btau) + \varepsilon_d, 
\end{align}
that models RV on day $d$ as a linear function of the past daily, weekly and monthly averages of RV with error term $\varepsilon_d$ and parameters  $(\beta_0, \beta_D, \beta_W, \beta_M)$ that are estimated by ordinary least squares.

For each combination of asset, sampling scheme, and sampling frequency, \edit{and for the tick-level pre-averaging RV estimator}, we use the HAR model in \eqref{eq:HARmodel} to generate one-step-ahead forecasts by estimating the parameters in \eqref{eq:HARmodel} with a rolling window consisting of 803 trading days for model estimation starting on January 1, 2012.
This results in an evaluation period of 1000 trading days ranging from March 28, 2015 to March 29, 2019.
We evaluate the resulting forecasts with the MSE and QLIKE loss functions.
\edit{As the associated estimation target, we use daily squared returns as in \cite{LiuPattonSheppard2015}, to have a fair evaluation target for all estimators.}

\begin{figure}[tb] 
	\centering
	\includegraphics[width=1\textwidth]{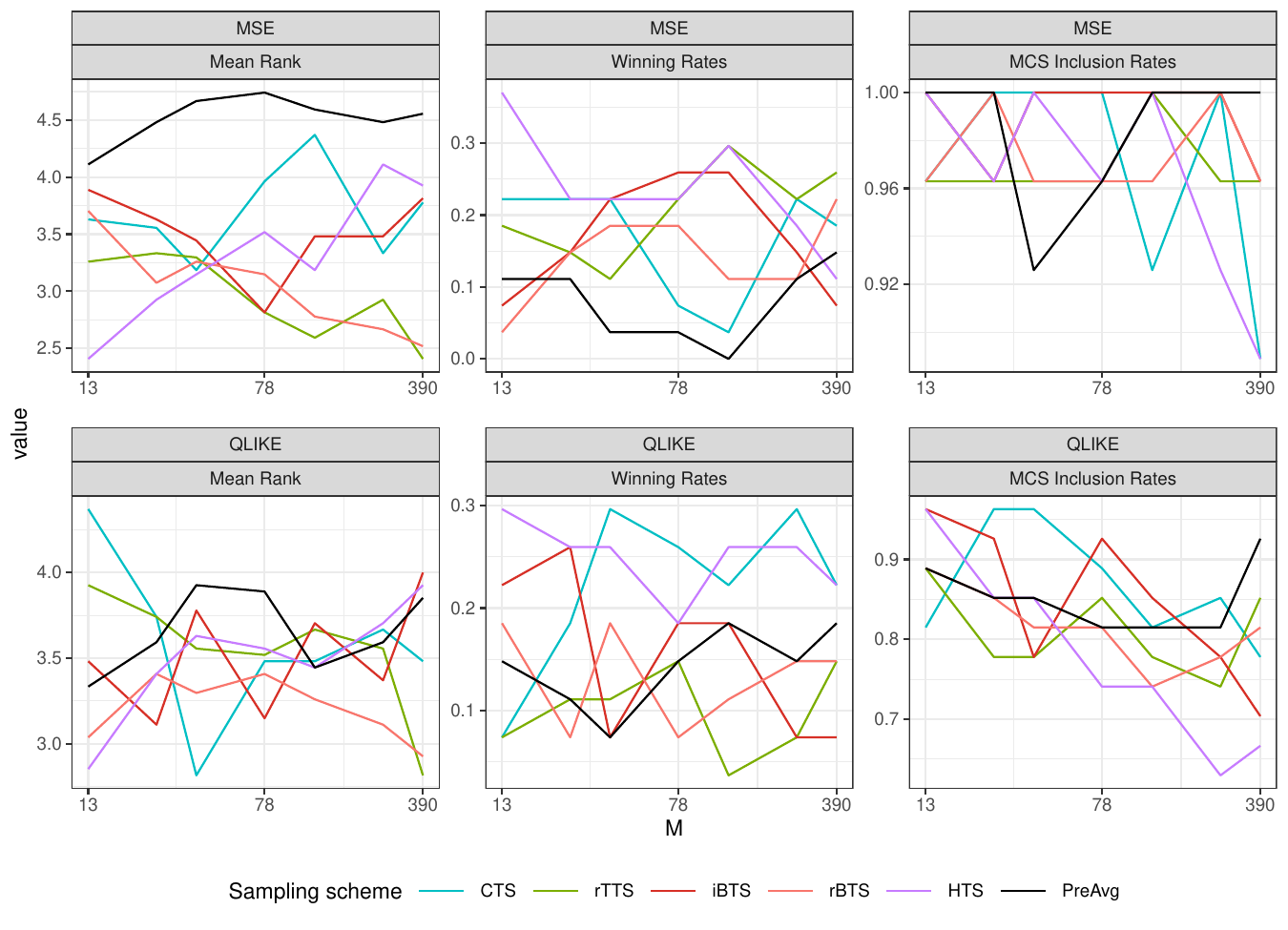}
	\caption{Average Ranks, winning rates, and MCS inclusion rates (at the $10\%$ level) of the MSE and QLIKE comparisons in the forecasting exercise plotted against the sampling frequency, individually for each considered sampling frequency (in color), and for the pre-averaging RV estimator.
    For the forecast evaluation, the daily squared return is used.}
	\label{fig:HARForecastResults_SQReturn}
\end{figure}

\edit{Figure~\ref{fig:HARForecastResults_SQReturn} reports results aggregated over time and across assets for each sampling scheme and frequency individually.
For both the MSE and QLIKE loss function, we report the average ranks of the respective sampling schemes, the proportion of comparisons where each sampling scheme is considered best, and the inclusion rates of the model confidence set (MCS) of \citet{hansen2011model} using the implementation of \citet{bernardi2018model}.}

\edit{We find that HTS performs best at very low sampling frequencies (below $M = 78$), achieving the lowest average ranks and highest winning rates.
The MCS inclusion rates are high across all sampling schemes and frequencies, which is unsurprising given the procedure’s low power, making the differences difficult to interpret.
For the higher frequencies between $M=78$ and $M=390$, no sampling scheme consistently outperforms the others, which can be explained by the substantial ``empirical noise'' that is added in such a forecasting exercise, compared to the estimation results from Section~\ref{sec:appl_Estimation}.
}

\section{Conclusions}
\label{sec:Conclusions}

In this paper we provide finite-sample theory as well as empirical results for the statistical quality of the classical RV estimator when the intraday returns are sampled in intrinsic time.
This approach accounts for intraday trading (transaction time sampling -- TTS), volatility patterns (business time sampling -- BTS), or absolute price changes (hitting time sampling -- HTS).
For BTS, we propose the novel \emph{realized} BTS variant that samples according to a combination of the observed transactions and the estimated tick variance.
The intrinsic time scales leverage the rich information content of high-frequency data by adopting a perspective that differs from traditional equidistant clock-time sampling, reflecting the irregular evolution of market activity and risk.

We find that, in the absence of market microstructure noise, the HTS scheme theoretically provides the most efficient RV estimates in finite samples.
However, the rBTS scheme emerges as most efficient in a restricted setting where sampling must occur independent of the observed intraday prices.
This restricted setting and consequently the rBTS scheme is motivated through the increased sensitivity of the HTS scheme to market microstructure noise, which we find empirically causes its performance to deteriorate rapidly when \edit{(intrinsic)} sampling frequencies exceed five minutes.
In contrast, the rBTS scheme is an attractive and robust alternative at all sampling frequencies.

The theoretical framework for our analysis builds on a joint model for the ticks (transaction or quote times) and prices, which we call the \emph{tick-time stochastic volatility} (TTSV) model:
The prices follow a continuous-time diffusion that is time-changed by a jump process that explicitly models the ticks.
As a result, prices form a pure jump process with time-varying and stochastic jump intensity capturing the empirical fact that price observations arrive randomly and at irregular intervals throughout the day.
Furthermore, the model includes a stochastic tick variance process---representing the variance of price jumps between adjacent ticks---that also varies over time and displays a mirrored intraday pattern relative to the trading intensity.

The TTSV model is particularly useful for theoretically disentangling the effects of intrinsic time sampling for several reasons.
First, it captures the natural spot variance decomposition into trading intensity and tick variance that is especially informative when comparing business and tick time sampling variants.
Second, it enables the derivation of theoretical finite-sample results in contrast to, for example, \citet{Barndorff2011SubsamplingRK} who provide asymptotic arguments in favor of the intensity version of BTS.
Third, by explicitly modeling the observed ticks through a jump process, the TTSV model naturally encompasses the novel realized BTS scheme, which performs well in our empirical application, demonstrating that its effectiveness reflects genuine practical improvements beyond the TTSV framework.

An interesting theoretical alternative is to accommodate the tick arrivals through \emph{discretization} instead of a \emph{time-change}, as recently proposed by \citet{Jacod2017, Jacod2019, DaXiu2021, Li2022remedi} among others.
\edit{While the TTSV framework enables convenient finite-sample derivations, we conjecture that the corresponding asymptotic analysis tends to be more complex and demands stronger assumptions compared to the discretization approaches in \citet{Jacod2017, Jacod2019, DaXiu2021, Li2022remedi}.}
Furthermore, advancing the theoretical analysis of noise-robust estimators such as subsampling, realized kernel, or pre-averaging RV, particularly in combination with rBTS and HTS sampling, offers promising paths for future research.

\section*{Replication Material}
Replication material is available  under \href{https://github.com/TimoDimi/replication_RVTTSV}{https://github.com/TimoDimi/replication\_RVTTSV}. \\ 
While the simulations can be fully replicated, we have to exclude the data files for the empirical application as these cannot be made publicly available.

\section*{Acknowledgements}

We would like to thank the editor, the associated editor and the two referees for very valuable and constructive comments that have substantially improved the results of the paper.
We are further thankful to Dobrislav Dobrev, Christian Gouri{\'e}roux, Andrew Patton, Davide Pirino, Winfried Pohlmeier, Angelo Ranaldo, Roberto Ren{\`o}, Richard Olsen, Philipp Sibbertsen, George Tauchen and the participants at the SoFiE Conference 2019, QFFE Conferences 2019 and 2022, and the Conference on Intrinsic Time in Finance 2022 for helpful comments.  
All remaining errors are ours.
We thank Sebastian Bayer and Christian Mücher for help in preparing the TAQ data.
T.~Dimitriadis gratefully acknowledges financial support from the German Research Foundation (DFG) through grant number 502572912.
R.~Halbleib gratefully acknowledges financial support from the DFG through the grant number 8672/1.

\addcontentsline{toc}{section}{References}

\singlespacing
\setlength{\bibsep}{5pt}

\bibliographystyle{apalike}
\bibliography{bib_RVTTS_v9}

\begin{thebibliography}{}

\bibitem[Admati and Pfleiderer, 1988]{Admati1988}
Admati, A.~R. and Pfleiderer, P. (1988).
\newblock A theory of intraday patterns: Volume and price variability.
\newblock {\em The Review of Financial Studies}, 1(1):3--40.

\bibitem[A{\"\i}t-Sahalia and Jacod, 2014]{ait2014high}
A{\"\i}t-Sahalia, Y. and Jacod, J. (2014).
\newblock {\em High-Frequency Financial Econometrics}.
\newblock Princeton University Press.

\bibitem[A{\"\i}t-Sahalia et~al., 2011]{aitsahalia2011}
A{\"\i}t-Sahalia, Y., Mykland, P., and Zhang, L. (2011).
\newblock Ultra high frequency volatility estimation with dependent
  microstructure noise.
\newblock {\em Journal of Econometrics}, 160(1):160--175.

\bibitem[Alfelt et~al., 2023]{alfelt2023singular}
Alfelt, G., Bodnar, T., Javed, F., and Tyrcha, J. (2023).
\newblock Singular conditional autoregressive wishart model for realized
  covariance matrices.
\newblock {\em Journal of Business \& Economic Statistics}, 41(3):833--845.

\bibitem[Andersen and Bollerslev, 1997]{andersen1997}
Andersen, T.~G. and Bollerslev, T. (1997).
\newblock Intraday periodicity and volatility persistence in financial markets.
\newblock {\em Journal of Empirical Finance}, 4(2-3):115--158.

\bibitem[Andersen and Bollerslev, 1998]{andersen1998a}
Andersen, T.~G. and Bollerslev, T. (1998).
\newblock Answering the skeptics: Yes, standard volatility models do provide
  accurate forecasts.
\newblock {\em International Economic Review}, 39(4):885--905.

\bibitem[Andersen et~al., 2001a]{andersen2001a}
Andersen, T.~G., Bollerslev, T., Diebold, F.~X., and Ebens, H. (2001a).
\newblock The distribution of realized stock return volatility.
\newblock {\em Journal of Financial Economics}, 61(1):43--76.

\bibitem[Andersen et~al., 2001b]{andersen2001b}
Andersen, T.~G., Bollerslev, T., Diebold, F.~X., and Labys, P. (2001b).
\newblock The distribution of realized exchange rate volatility.
\newblock {\em Journal of the American Statistical Association},
  96(453):42--55.

\bibitem[Andersen et~al., 2003]{andersen2003}
Andersen, T.~G., Bollerslev, T., Diebold, F.~X., and Labys, P. (2003).
\newblock Modeling and forecasting realized volatility.
\newblock {\em Econometrica}, 71(2):579--625.

\bibitem[Andersen et~al., 2007]{Andersenal2007}
Andersen, T.~G., Bollerslev, T., and Dobrev, D. (2007).
\newblock No-arbitrage semi-martingale restrictions for continuous-time
  volatility models subject to leverage effects, jumps and i.i.d. noise: Theory
  and testable distributional implications.
\newblock {\em Journal of Econometrics}, 138(1):125--180.

\bibitem[Andersen et~al., 2010]{Andersenal2010}
Andersen, T.~G., Bollerslev, T., Frederiksen, P., and Nielsen, M.~{\O}. (2010).
\newblock Continuous time models, realized volatilities, and testable
  distributional implications for daily stock returns.
\newblock {\em Journal of Applied Econometrics}, 25(2):233--261.

\bibitem[Andersen et~al., 2009]{andersen2006}
Andersen, T.~G., Davis, R.~A., Krei{\ss}, J.-P., and Mikosch, T.~V. (2009).
\newblock {\em Handbook of Financial Time Series}.
\newblock Springer Science \& Business Media.

\bibitem[Andersen et~al., 2012]{Andersen2012}
Andersen, T.~G., Dobrev, D., and Schaumburg, E. (2012).
\newblock Jump-robust volatility estimation using nearest neighbor truncation.
\newblock {\em Journal of Econometrics}, 169(1):75--93.

\bibitem[An{\'e} and Geman, 2000]{ane2000}
An{\'e}, T. and Geman, H. (2000).
\newblock Order flow, transaction clock, and normality of asset returns.
\newblock {\em Journal of Finance}, 55(5):2259--2284.

\bibitem[Bandi and Russell, 2008]{bandi2008}
Bandi, F.~M. and Russell, J.~R. (2008).
\newblock Microstructure noise, realized variance, and optimal sampling.
\newblock {\em Review of Economic Studies}, 75(2):339--369.

\bibitem[Barndorff-Nielsen et~al., 2008]{barndorff2008}
Barndorff-Nielsen, O.~E., Hansen, P.~R., Lunde, A., and Shephard, N. (2008).
\newblock Designing realized kernels to measure the ex post variation of equity
  prices in the presence of noise.
\newblock {\em Econometrica}, 76(6):1481--1536.

\bibitem[Barndorff-Nielsen et~al., 2009]{barndorff2009}
Barndorff-Nielsen, O.~E., Hansen, P.~R., Lunde, A., and Shephard, N. (2009).
\newblock Realized kernels in practice: Trades and quotes.
\newblock {\em The Econometrics Journal}, 12(3):C1--C32.

\bibitem[Barndorff-Nielsen et~al., 2011]{Barndorff2011SubsamplingRK}
Barndorff-Nielsen, O.~E., Hansen, P.~R., Lunde, A., and Shephard, N. (2011).
\newblock Subsampling realised kernels.
\newblock {\em Journal of Econometrics}, 160(1):204--219.

\bibitem[Barndorff-Nielsen and Shephard, 2002]{barndorff2002a}
Barndorff-Nielsen, O.~E. and Shephard, N. (2002).
\newblock Econometric analysis of realized volatility and its use in estimating
  stochastic volatility models.
\newblock {\em Journal of the Royal Statistical Society: Series B (Statistical
  Methodology)}, 64(2):253--280.

\bibitem[Bates, 2019]{bates2019crashes}
Bates, D.~S. (2019).
\newblock How crashes develop: intradaily volatility and crash evolution.
\newblock {\em The Journal of Finance}, 74(1):193--238.

\bibitem[Bauwens and Giot, 2001]{Bauwens2001}
Bauwens, L. and Giot, P. (2001).
\newblock {\em Econometric Modelling of Stock Market Intraday Activity,},
  volume~38 of {\em Advanced Studies in Theoretical and Applied Econometrics}.
\newblock Kluwer.

\bibitem[Bauwens and Hautsch, 2009]{bauwens2009}
Bauwens, L. and Hautsch, N. (2009).
\newblock Modelling financial high frequency data using point processes.
\newblock In {\em Handbook of financial time series}, pages 953--979. Springer.

\bibitem[Bernardi and Catania, 2018]{bernardi2018model}
Bernardi, M. and Catania, L. (2018).
\newblock The model confidence set package for {R}.
\newblock {\em International Journal of Computational Economics and
  Econometrics}, 8(2):144--158.

\bibitem[Bollerslev et~al., 2018]{bollerslev2018risk}
Bollerslev, T., Hood, B., Huss, J., and Pedersen, L.~H. (2018).
\newblock Risk everywhere: Modeling and managing volatility.
\newblock {\em The Review of Financial Studies}, 31(7):2729--2773.

\bibitem[Bollerslev et~al., 2020]{bollerslev2020good}
Bollerslev, T., Li, S.~Z., and Zhao, B. (2020).
\newblock Good volatility, bad volatility, and the cross section of stock
  returns.
\newblock {\em Journal of Financial and Quantitative Analysis}, 55(3):751--781.

\bibitem[Bollerslev et~al., 2022]{bollerslev2022zero}
Bollerslev, T., Medeiros, M.~C., Patton, A.~J., and Quaedvlieg, R. (2022).
\newblock From zero to hero: Realized partial (co) variances.
\newblock {\em Journal of Econometrics}, 231(2):348--360.

\bibitem[Br{é}maud, 1981]{bremaud1981}
Br{é}maud, P. (1981).
\newblock {\em Point Processes and Queues: Martingale Dynamics}.
\newblock Springer Series in Statistics.

\bibitem[Bucci, 2020]{bucci2020realized}
Bucci, A. (2020).
\newblock Realized volatility forecasting with neural networks.
\newblock {\em Journal of Financial Econometrics}, 18(3):502--531.

\bibitem[Carr and Wu, 2004]{carr2004}
Carr, P. and Wu, L. (2004).
\newblock Time-changed levy processes and option pricing.
\newblock {\em Journal of Financial Economics}, 71(1):113--114.

\bibitem[Clark, 1973]{clark1973}
Clark, P.~K. (1973).
\newblock A subordinated stochastic process model with finite variance for
  speculative prices.
\newblock {\em Econometrica}, pages 135--155.

\bibitem[Corsi, 2009]{Corsi2009}
Corsi, F. (2009).
\newblock {A simple approximate long-memory model of Realized Volatility}.
\newblock {\em Journal of Financial Econometrics}, 7(2):174--196.

\bibitem[Da and Xiu, 2021]{DaXiu2021}
Da, R. and Xiu, D. (2021).
\newblock When moving-average models meet high-frequency data: Uniform
  inference on volatility.
\newblock {\em Econometrica}, 89(6):2787--2825.

\bibitem[Dahlhaus and Neddermeyer, 2014]{Dahlhaus2014}
Dahlhaus, R. and Neddermeyer, J. (2014).
\newblock Online spot volatility-estimation and decomposition with nonlinear
  market microstructure noise models.
\newblock {\em Journal of Financial Econometrics}, 12(1):174--212.

\bibitem[Dahlhaus and Tunyavetchakit, 2016]{dahlhaus2016}
Dahlhaus, R. and Tunyavetchakit, S. (2016).
\newblock Volatility decomposition and estimation in time-changed price models.
\newblock {\em Preprint}.
\newblock
  \href{https://arxiv.org/abs/1605.02205}{https://arxiv.org/abs/1605.02205}.

\bibitem[Dassios and Zhao, 2013]{dassios2013exact}
Dassios, A. and Zhao, H. (2013).
\newblock Exact simulation of {H}awkes process with exponentially decaying
  intensity.
\newblock {\em Electronic Communications in Probability}, 18(62):1--13.

\bibitem[Delbaen and Schachermayer, 1994]{DelbaenSchachermayer1994}
Delbaen, F. and Schachermayer, W. (1994).
\newblock A general version of the fundamental theorem of asset pricing.
\newblock {\em Mathematische Annalen}, 300(3):463--520.

\bibitem[Diebold and Mariano, 1995]{DieboldMariano1995}
Diebold, F.~X. and Mariano, R.~S. (1995).
\newblock Comparing predictive accuracy.
\newblock {\em Journal of Business \& Economic Statististics}, 13(3):253--263.

\bibitem[Diggle and Marron, 1988]{DiggleMarron1988}
Diggle, P. and Marron, J.~S. (1988).
\newblock Equivalence of smoothing parameter selectors in density and intensity
  estimation.
\newblock {\em Journal of the American Statistical Association},
  83(403):793--800.

\bibitem[Dong and Tse, 2017]{dong2014}
Dong, Y. and Tse, Y.-K. (2017).
\newblock Business time sampling scheme with applications to testing
  semi-martingale hypothesis and estimating integrated volatility.
\newblock {\em Econometrics}, 5(51):1--19.

\bibitem[Engle and Russell, 2005]{engle2005}
Engle, R.~F. and Russell, J.~R. (2005).
\newblock A discrete-state continuous-time model of financial transaction
  prices and times.
\newblock {\em Journal of Business \& Economic Statistics}, 23(2):166--180.

\bibitem[Fukasawa, 2010]{Fukasawa2010RV}
Fukasawa, M. (2010).
\newblock Realized volatility with stochastic sampling.
\newblock {\em Stochastic Processes and their Applications}, 120(6):829--852.

\bibitem[Fukasawa and Rosenbaum, 2012]{FukasawaRosenbaum2012}
Fukasawa, M. and Rosenbaum, M. (2012).
\newblock Central limit theorems for realized volatility under hitting times of
  an irregular grid.
\newblock {\em Stochastic Processes and their Applications},
  122(12):3901--3920.

\bibitem[Gabaix et~al., 2003]{gabaix2003}
Gabaix, X., Gopikrishnan, P., Plerou, V., and Stanley, H.~E. (2003).
\newblock A theory of power-law distributions in financial market fluctuations.
\newblock {\em Nature}, 423(6937):267--270.

\bibitem[Griffin and Oomen, 2008]{griffin2008}
Griffin, J.~E. and Oomen, R. C.~A. (2008).
\newblock Sampling returns for realized variance calculations: Tick time or
  transaction time?
\newblock {\em Econometric Reviews}, 27(1-3):230--253.

\bibitem[Hamilton and Jorda, 2002]{hamilton2002}
Hamilton, J.~D. and Jorda, O. (2002).
\newblock A model of the federal funds rate target.
\newblock {\em Journal of Political Economy}, 110(5):1135--1167.

\bibitem[Hansen and Lunde, 2006]{hansen2006}
Hansen, P.~R. and Lunde, A. (2006).
\newblock Realized variance and market microstructure noise.
\newblock {\em Journal of Business \& Economic Statistics}, 24(2):127--161.

\bibitem[Hansen et~al., 2011]{hansen2011model}
Hansen, P.~R., Lunde, A., and Nason, J.~M. (2011).
\newblock The model confidence set.
\newblock {\em Econometrica}, 79(2):453--497.

\bibitem[Harris, 1986]{Harris1986}
Harris, L. (1986).
\newblock A transaction data study of weekly and intradaily patterns in stock
  returns.
\newblock {\em Journal of Financial Economics}, 16(1):99--117.

\bibitem[Hawkes, 1971]{hawkes1971}
Hawkes, A.~G. (1971).
\newblock Spectra of some self-exciting and mutually exciting point processes.
\newblock {\em Biometrika}, 58(1):83--90.

\bibitem[Hawkes, 2018]{hawkes2018hawkes}
Hawkes, A.~G. (2018).
\newblock Hawkes processes and their applications to finance: a review.
\newblock {\em Quantitative Finance}, 18(2):193--198.

\bibitem[Hoga and Dimitriadis, 2023]{HogaDimi2022}
Hoga, Y. and Dimitriadis, T. (2023).
\newblock On testing equal conditional predictive ability under measurement
  error.
\newblock {\em Journal of Business \& Economic Statistics}, 41(2):364--376.

\bibitem[Hussain et~al., 2023]{Hussain2023}
Hussain, S.~M., Ahmad, N., and Ahmed, S. (2023).
\newblock Applications of high-frequency data in finance: a bibliometric
  literature review.
\newblock {\em International Review of Financial Analysis}, 102790.

\bibitem[Jacod, 2018]{Jacod2018}
Jacod, J. (2018).
\newblock Limit of random measures associated with the increments of a brownian
  semimartingale.
\newblock {\em Journal of Financial Econometrics}, 16(4):526--569.

\bibitem[Jacod et~al., 2009]{Jacod2009}
Jacod, J., Li, Y., Mykland, P.~A., Podolskij, M., and Vetter, M. (2009).
\newblock Microstructure noise in the continuous case: the pre-averaging
  approach.
\newblock {\em Stochastic Processes and their Applications}, 119(7):2249--2276.

\bibitem[Jacod et~al., 2017]{Jacod2017}
Jacod, J., Li, Y., and Zheng, X. (2017).
\newblock Statistical properties of microstructure noise.
\newblock {\em Econometrica}, 85(4):1133--1174.

\bibitem[Jacod et~al., 2019]{Jacod2019}
Jacod, J., Li, Y., and Zheng, X. (2019).
\newblock Estimating the integrated volatility with tick observations.
\newblock {\em Journal of Econometrics}, 208(1):80--100.

\bibitem[Jacod and Shiryaev, 2003]{ref:Jacod2003:Limit}
Jacod, J. and Shiryaev, A. (2003).
\newblock {\em Limit theorems for stochastic processes}, volume 288.
\newblock Springer Science \& Business Media.

\bibitem[Jones et~al., 1994]{jones1994}
Jones, C.~M., Kaul, G., and Lipson, M.~L. (1994).
\newblock Transactions, volume, and volatility.
\newblock {\em Review of Financial Studies}, 7(4):631--651.

\bibitem[Kalnina and Linton, 2008]{Kalnina2008}
Kalnina, I. and Linton, O. (2008).
\newblock Estimating quadratic variation consistently in the presence of
  endogenous and diurnal measurement error.
\newblock {\em Journal of Econometrics}, 147(1):47--59.

\bibitem[Laub et~al., 2021]{laub2021elements}
Laub, P.~J., Lee, Y., and Taimre, T. (2021).
\newblock {\em The elements of Hawkes processes}.
\newblock Springer.

\bibitem[Li and Linton, 2022]{Li2022remedi}
Li, Z.~M. and Linton, O. (2022).
\newblock A remedi for microstructure noise.
\newblock {\em Econometrica}, 90(1):367--389.

\bibitem[Liesenfeld et~al., 2006]{Liesenfeld2006}
Liesenfeld, R., Nolte, I., and Pohlmeier, W. (2006).
\newblock Modelling financial transaction price movements: A dynamic integer
  count model.
\newblock {\em Empirical Economics}, 30(4):795--825.

\bibitem[Liptser and Shiryayev, 2012]{Liptser2012}
Liptser, R. and Shiryayev, A.~N. (2012).
\newblock {\em Theory of Martingales}, volume~49 of {\em Mathematics and its
  Applications}.
\newblock Springer Science \& Business Media.

\bibitem[Liu et~al., 2015]{LiuPattonSheppard2015}
Liu, L.~Y., Patton, A.~J., and Sheppard, K. (2015).
\newblock {Does anything beat 5-minute RV? A comparison of realized measures
  across multiple asset classes}.
\newblock {\em Journal of Econometrics}, 187(1):293--311.

\bibitem[Meddahi, 2002]{meddahi2002}
Meddahi, N. (2002).
\newblock A theoretical comparison between integrated and realized
  volatilities.
\newblock {\em Journal of Applied Econometrics}, 17(5):479--508.

\bibitem[Monroe, 1978]{Monroe1987}
Monroe, I. (1978).
\newblock Processes that can be embedded in brownian motion.
\newblock {\em Annals of Probability}, 6(1):42--56.

\bibitem[Oomen, 2005]{oomen2005}
Oomen, R. C.~A. (2005).
\newblock Properties of bias-corrected realized variance under alternative
  sampling schemes.
\newblock {\em Journal of Financial Econometrics}, 3(4):555--577.

\bibitem[Oomen, 2006]{oomen2006}
Oomen, R. C.~A. (2006).
\newblock Properties of realized variance under alternative sampling schemes.
\newblock {\em Journal of Business \& Economic Statistics}, 24(2):219--237.

\bibitem[Patton, 2011a]{Patton2011RV}
Patton, A.~J. (2011a).
\newblock Data-based ranking of realised volatility estimators.
\newblock {\em Journal of Econometrics}, 161(2):284--303.

\bibitem[Patton, 2011b]{Patton2011}
Patton, A.~J. (2011b).
\newblock Volatility forecast comparison using imperfect volatility proxies.
\newblock {\em Journal of Econometrics}, 160(1):246--256.

\bibitem[Patton and Zhang, 2023]{Patton2023Bespoke}
Patton, A.~J. and Zhang, H. (2023).
\newblock Bespoke realized volatility: Tailored measures of risk for volatility
  prediction.
\newblock {\em Preprint}.
\newblock
  \href{https://dx.doi.org/10.2139/ssrn.4315106}{https://dx.doi.org/10.2139/ssrn.4315106}.

\bibitem[Plerou et~al., 2001]{plerou2001}
Plerou, V., Gopikrishnan, P., Gabaix, X., Nunes~Amaral, L., and Stanley, H.~E.
  (2001).
\newblock Price fluctuations, market activity and trading volume.
\newblock {\em Quantitative Finance}, 1(2):262--269.

\bibitem[Podolskij and Vetter, 2009]{PodolskijVetter2009}
Podolskij, M. and Vetter, M. (2009).
\newblock {Estimation of volatility functionals in the simultaneous presence of
  microstructure noise and jumps}.
\newblock {\em Bernoulli}, 15(3):634--658.

\bibitem[Politis and Romano, 1994]{PolitisRomano1994}
Politis, D.~N. and Romano, J.~P. (1994).
\newblock The stationary bootstrap.
\newblock {\em Journal of the American Statistical Association},
  89(428):1303--1313.

\bibitem[Press, 1967]{press1967}
Press, S.~J. (1967).
\newblock A compound events model for security prices.
\newblock {\em Journal of Business}, 40(2):317--335.

\bibitem[Reisenhofer et~al., 2022]{reisenhofer2022harnet}
Reisenhofer, R., Bayer, X., and Hautsch, N. (2022).
\newblock {HARNet: A convolutional neural network for realized volatility
  forecasting}.
\newblock {\em Preprint}.
\newblock
  \href{https://arxiv.org/abs/2205.07719}{https://arxiv.org/abs/2205.07719}.

\bibitem[Robert and Rosenbaum, 2012]{robert2012volatility}
Robert, C.~Y. and Rosenbaum, M. (2012).
\newblock Volatility and covariation estimation when microstructure noise and
  trading times are endogenous.
\newblock {\em Mathematical Finance: An International Journal of Mathematics,
  Statistics and Financial Economics}, 22(1):133--164.

\bibitem[Shephard and Yang, 2017]{ShephardYang2017}
Shephard, N. and Yang, J.~J. (2017).
\newblock Continuous time analysis of fleeting discrete price moves.
\newblock {\em Journal of the American Statistical Association},
  112(519):1090--1106.

\bibitem[Vetter and Zwingmann, 2017]{Vetter2017}
Vetter, M. and Zwingmann, T. (2017).
\newblock {A note on central limit theorems for quadratic variation in case of
  endogenous observation times}.
\newblock {\em Electronic Journal of Statistics}, 11(1):963 -- 980.

\bibitem[Wood et~al., 1985]{Wood1985}
Wood, R.~A., McInish, T.~H., and Ord, J.~K. (1985).
\newblock An investigation of transactions data for {NYSE} stocks.
\newblock {\em The Journal of Finance}, 40(3):723--739.

\bibitem[Zhang et~al., 2005]{zhang2005}
Zhang, L., Mykland, P.~A., and A{\"\i}t-Sahalia, Y. (2005).
\newblock A tale of two time scales: Determining integrated volatility with
  noisy high-frequency data.
\newblock {\em Journal of the American Statistical Association},
  100(472):1394--1411.

\end{thebibliography}

\vspace{1cm}

\appendix
\huge
\noindent \textbf{Appendix}
\normalsize 
\onehalfspacing

\section{\edit{Main Proofs}}
\label{sec:ProofsMain}

\edit{This appendix contains the proofs of the main results from Sections~\ref{sec:TTSV_model} and \ref{sec:EfficiencyFiniteSample}. The supporting lemmas, along with their proofs and the remaining proofs for the paper, are provided in Appendix~\ref{app:proofs} of the Supplementary Material.}

\begin{proof}[\bf \hypertarget{proof:unbiasedness}{Proof of Theorem \ref{thm:unbiasedness}}]
	By Proposition \ref{prop:martingale_price}, the price process $P$ is martingale. In the proof of Proposition \ref{prop:martingale_price} we show that $P$ is square-integrable, which implies that the process $Q = P^2-[P]$ is a martingale as well. For any pair of stopping times from the sampling scheme $\tau_{j-1}$ and $\tau_j$ we can apply the Optional Stopping Theorem because the stopping times are surely bounded by $T$ and we have that
	\begin{align}
		\E [r^2(\tau_{j-1}, \tau_j) | \F_{\tau_{j-1}}] & = \E[P_{\tau_{j}}^2 - P_{\tau_{j-1}}^2 | \F_{\tau_{j-1}}] = \E[[P]_{\tau_{j}} - [P]_{\tau_{j-1}} | \F_{\tau_{j-1}} ].
	\end{align}
	We then obtain that the realized variance estimator is unbiased for $\E[[P]_T]$ as
	\begin{align}
		\E [\RV(\boldsymbol{\tau})] &= \E \left[ \sum_{j=1}^Mr^2(\tau_{j-1}, \tau_{j}) \right]  =\E\left[\sum_{j=1}^M \E[ r^2(\tau_{j-1}, \tau_{j}) | \F_{\tau_{j-1}} ] \right]\\
		&=\E\left[\sum_{j=1}^M  \E[ [P]_{\tau_j} - [P]_{\tau_{j-1}} | \F_{\tau_{j-1}} ] \right] =\E[[P]_T],
	\end{align}
	where we use that $M$ is a.s. finite together with Lemma \ref{lem:random_sum_cond_exp} applied to the nonnegative squared returns and increments of the quadratic variation.
	
	To show that the realized variance estimator is unbiased for expected realized $\IV$, we use the assumption about the conditional distribution of the price increments in Assumption \ref{ass:filtration}. Denote the (stochastic) jump times in the interval by $t_1,t_2,...$ with $0\leq t_1<t_2<\dots\leq T$. By Lemma \ref{lem:random_sum_cond_exp} and the non-negativity of the squared-price increments we have that
	\begin{align}
		\E[[P]_T] & = \E\left[\sum_{0\le t_i \le T} (\Delta P_{t_i})^2 \right]  = \E\left[ \sum_{0\leq t_i\leq T} \E \left[\varsigma^2(t_i)U_i^2 \bigg|\mathcal{F}_{t_{i}-} \right]\right] \\ 
		& = \E \left[ \sum_{0\leq t_i\leq T}\varsigma^2(t_i)\right]  = \E\left[ \rIV(0,T) \right].
	\end{align}
	It remains only to show $\E [\rIV(0,T)] = \E[\IV(0,T)]$. This equality follows from the non-negativity and $\Fil$-predictability of $\varsigma$ and the characterization of the compensator (see \cite{ref:Jacod2003:Limit}[Theorem 3.17])).
\end{proof}

\begin{proof}[\bf \hypertarget{proof:MSE_IV}{Proof of Theorem \ref{thm:MSE_IV}}]
	We begin by writing the difference between the estimator and the estimation target at any time $t\in[0,T]$ as the sum of three martingales at time $t$:
	\begin{align}
		RV(\boldsymbol{\tau}, t) - \IV(0,t) & = A(t) + B(t) + C(t),
	\end{align}
	where $RV(\boldsymbol{\tau}, t) = \sum_{j=1}^M r^2(\tau_{j-1}\wedge t, \tau_j \wedge t)$, $A := RV(\boldsymbol{\tau}, \cdot) - [P]$, $B:=[P] - \rIV(0,\cdot)$ and $C = \rIV(0,\cdot) -  \IV(0,\cdot)$. That $A$ is a martingale follows almost directly from the result that $P^2-[P]$ is a martingale, which we show in the proof of Theorem \ref{thm:unbiasedness}. Showing that $B$ is a martingale follows by similar arguments as in the proof of Proposition \ref{prop:martingale_price}, showing that $P$ is a martingale. For a reference that $C$ is a martingale, see the proof of Theorem \ref{thm:unbiasedness}.\\
	
	We can write the first martingale $A$ in a more convenient form by noting that for any pair of stopping times $0\le \sigma \le \tau\le T$ we have
	\begin{align}
		(P_{\tau}-P_{\sigma})^2 - ([P]_{\tau} - [P]_{\sigma}) &  = 2 \sum_{\sigma < t_i \le \tau} (P_{t_i-} - P_{\sigma}) \Delta P_{t_i},
	\end{align}
	where the $t_i$ denote the (stochastic) jump times in the stochastic interval $(\sigma,\tau]$ and $P_{t-} := \lim_{s \uparrow t} P_s$. Hence we have that 
	\begin{equation}
		A(t) = \sum_{j=1}^M \sum_{\tau_{j-1}\wedge t < t_i \le \tau_j \wedge t} A_i(\tau_{j-1}),
	\end{equation}
	where $A_i(\tau_{j-1})=2 (P_{t_i-} -P_{\tau_{j-1}}) \varsigma(t_i)U_i$. Here we use the notation from Assumption \ref{ass:filtration} such that $\Delta P_{t_i} = \varsigma(t_i)U_i$ where $U_i$ is a random variable with a standard normal distribution conditional on the $\sigma$-algebra $\mathcal{F}_{t_i-}$. Similarly we have for $B$ and $C$ that
	\begin{align}
		B(t) & = \sum_{0\le t_i \le t} B_i\\
		C(t) & = \sum_{0\le t_i \le t} C_i,
	\end{align}
	where $B_i= \varsigma^2(t_i)U_i^2 - \varsigma^2(t_i)$ and $C_i=\varsigma^2(t_i) - \int_{t_{i-1}}^{t_i} \varsigma^2(r)\lambda(r)dr$. We have that 
	\begin{equation}
		\E \left[ \left(\text{RV}(\boldsymbol{\tau}) - \IV(0,T) \right)^2 \right] = \E[[\text{RV}(\boldsymbol{\tau}, \cdot) - \IV(0,\cdot) ]_T] = \E[[A+B+C]_T] \label{eq:expected_quadratic_variation_MSE}
	\end{equation}
	as long as $A+B+C$ is square integrable, which holds if $\E[[A]_T]$, $\E[[B]_T]$ and $\E[[C]_T]$ are finite. \\
	
	To compute the expected quadratic variation in \eqref{eq:expected_quadratic_variation_MSE} and show the appropriate bounds, we note the following properties of the increments of the processes $A$, $B$ and $C$, which follow immediately from Assumption \ref{ass:filtration} for any jump time $t_i$ such that $\tau_{j-1}<t_i\le \tau_j$:
	\begin{itemize}
		\item $\E[A_i(\tau_{j-1})|\mathcal{F}_{t_i-}] = 0 $;
		\item $\E[B_i|\mathcal{F}_{t_i-}] = 0$;
		\item $C_i$ is $\mathcal{F}_{t_i-}$-measurable and we have $\E[C_i|\mathcal{F}_{t_{i-1}}] = 0$, since $C_i = \int_{t_{i-1}}^{t_i}\varsigma^2(r) d\tilde{N}(r)$, where $\tilde{N}(t) = N(t) - \int_0^t \lambda(r) dr$;
		\item $\E[A_i(\tau_{j-1}) B_i | \mathcal{F}_{t_i-}] = 0$, since $\E[U_i(U_i^2 -1)|\mathcal{F}_{t_i-}]=0$;
		\item $\E[A_i(\tau_{j-1})C_i|\mathcal{F}_{t_i-}]=E[B_iC_i|\mathcal{F}_{t_i-}]=0$, since $C_i$ is $\mathcal{F}_{t_i-}$-measurable;
		\item $\E[B_i^2|\mathcal{F}_{t_i-}] = 2\varsigma^4(t_i)$, since $\E[(U_i^2-1)^2|\mathcal{F}_{t_i-}]=2$;
		\item $\E[C_i^2|\mathcal{F}_{t_{i-1}}] = \E \left[ \int_{t_{i-1}}^{t_i} \varsigma^4(r) dN(r)|\mathcal{F}_{t_{i-1}} \right] = \E \left[ \varsigma^4(t_i) |\mathcal{F}_{t_{i-1}} \right]$ by the Ito isometry of the stochastic integral.
	\end{itemize}
	We first derive the MSE result by computing the expected quadratic variation in \eqref{eq:expected_quadratic_variation_MSE} and then show the appropriate moment bounds. We apply Lemma \ref{lem:random_sum_cond_exp} multiple times to various sums and $\sigma$-algebras such that we can use the properties above and that $ \E \left[ \sum_{\tau_{j-1}<t_i\le\tau_j} A_i^2(\tau_{j-1}) \Big| \mathcal{F}_{\tau_{j-1}} \right] = \frac{2}{3} \E\left[ r^4(\tau_{j-1}, \tau_j) \big| \mathcal{F}_{\tau_{j-1}} \right]- 2\E\left[\IQ(\tau_{j-1},\tau_j) \big| \mathcal{F}_{\tau_{j-1}} \right]$ by Lemma \ref{lem:square_incr_RV_QV} and find the MSE result:
	\begin{align}
		&\E[(\text{RV}(\tau)-\IV(0,T))^2] = \E[[A+B+C]_T]\\
		&\quad=\E\left[\sum_{j=1}^M \sum_{\tau_{j-1}<t_i\le\tau_j} \{A_i^2(\tau_{j-1}) + 2 A_i(\tau_{j-1})B_i + 2 A_i(\tau_{j-1})C_i \} + \sum_{0\le t_i \le T}\{B_i^2 + C_i^2 + 2 B_i C_i\}  \right]\\
		& \quad = \E\left[\sum_{j=1}^M \E \left[ \sum_{\tau_{j-1}<t_i\le\tau_j} A_i^2(\tau_{j-1}) \Bigg| \mathcal{F}_{\tau_{j-1}} \right] \right] + \E\left[ \sum_{0\le t_i \le T} \E [ B_i^2 | \mathcal{F}_{t_i-}]  \right] + \E\left[ \sum_{0\le t_i \le T} \E[C_i^2 | \mathcal{F}_{t_{i-1}} ]  \label{eq:MSE_computation}\right]\\
		& \quad = \frac{2}{3} \E\left[\sum_{j=1}^M r^4(\tau_{j-1}, \tau_j) \right] +  \E\left[ \IQ(0,T) \right].
	\end{align}
	It remains to show that $\E[[A]_T]$, $\E[[B]_T]$ and $\E[[C]_T]$ are finite such that the equality in \eqref{eq:expected_quadratic_variation_MSE} holds and that the we can apply Lemma \ref{lem:random_sum_cond_exp} in equation \eqref{eq:MSE_computation}. By applying Lemma \ref{lem:random_sum_cond_exp} to the positive increments of the quadratic variations and by using the Burkholder-Davis-Gundy inequalities we have that
	\begin{align}
		\E[[A]_T] & = \frac{2}{3} \E\left[ \sum_{j=1}^M \E \left[ r^4(\tau_{j-1}, \tau_j) \big| \mathcal{F}_{\tau_{j-1}}\right] \right] - 2\E[\IQ(0,T)] \\ & \le C\E \left[ \sum_{j=1}^M \E \left[ \left([P]_{\tau_{j}}-[P]_{\tau_{j-1}}\right)^2 \bigg| \mathcal{F}_{\tau_{j-1}} \right] \right]\le C \E\left[ [P]_T^2 \right]
	\end{align}
	for some constant $C>0$, $\E[[B]_T] = 2\E[\IQ(0,T)]$ and $\E[[C]_T] = \E[\IQ(0,T)]$. By Assumption \ref{ass:filtration} $\E\left[ [P]_T^2 \right]<\infty$ and we have
	\begin{align}
		3\E\left[\IQ(0,T) \right] = 3\E\left[ \int_0^T\varsigma^4(r)dN(r) \right] = \E\left[\sum_{t_i\le T} (\Delta P_{t_i})^4\right] \le \E[[P]_T^2].
	\end{align}
\end{proof}

\begin{proof}[\bf \hypertarget{proof:EfficientSampling}{Proof of Theorem \ref{thm:EfficientSampling}}]
	The results follow from applying Lemma \ref{lem:min_squared_sum} to the MSE results in Theorem \ref{thm:MSE_IV} and Corollaries \ref{cor:MSE_jump_based} and \ref{cor:MSE_intensity_based}. Note that in the latter two cases the additional assumption on the independence between Brownian motion and the other processes implies that the remainder term $\E[R(\boldsymbol{\tau})]$ in the MSE results in \eqref{eq:MSE_realized_sampling} and \eqref{eq:MSE_intensity} is zero. Similarly, by additionally assuming that $N$ is a doubly stochastic Poisson process, the second remainder term $\E[\tilde{R}(\boldsymbol{\tau})]$ in \eqref{eq:MSE_intensity} is zero, since in that case $\E\left[ \int_{\tau_{j-1}}^{\tau_j}\varsigma^2(r)dN(r)\big| \mathcal{F}^{\lambda,\varsigma}_{\tau_j}\right] = \int_{\tau_{j-1}}^{\tau_j}\varsigma^2(r) \lambda(r) dr$.
\end{proof}

\pagebreak 
\huge
\noindent \textbf{Supplementary Material}
\normalsize 
\onehalfspacing

\bigskip 
\noindent 
This supplemental material contains a comparison of the TTSV model to discretized diffusions in Appendix \ref{sec:ComparisonDiscretization}, additional finite sample theory in a setting where sampling can use information from the end of the trading day in Appendix~\ref{sec:EfficientSamplingUptoT}, and a specific comparison to the results of \citet{oomen2006} in Appendix~\ref{sec:ComparisonOomen2006}.
\edit{All proofs, except those of the main results, are collected in Appendix~\ref{app:proofs}, while Appendix~\ref{sec:RemainderTerms} provides arguments regarding the remainder terms of Corollary~\ref{cor:MSE_jump_based}.
Finally, Appendix~\ref{sec:AdditionalResults} presents additional empirical results.}

\setcounter{figure}{0}
\setcounter{table}{0}
\setcounter{thm}{0}
\renewcommand\thefigure{\Alph{section}.\arabic{figure}}
\renewcommand\thetable{\Alph{section}.\arabic{table}}
\renewcommand\thethm{\Alph{section}.\arabic{thm}}

\section{A Comparison to Discretized Diffusions}
\label{sec:ComparisonDiscretization}

In this section, we compare the TTSV model to the ``discretized'' diffusion framework of \citet{Jacod2017, Jacod2019, DaXiu2021, Li2022remedi} as an alternative modeling choice that exhibits random observation times.

The proposed model for the underlying log-price process is a possibly discontinuous  It\^{o} semimartingale that can (under standard regularity assumptions for $b$, $\sigma$ and $\delta$) be written as\footnote{See \citet[Equation (2.2)]{Jacod2019} and the following assumptions for more details.} 
$$Q(t) = Q(0) + \int_0^tb(r)dr + \int_0^t \sigma(r) dB(r) + \int_{[0,t] \times E} \delta(r,z)p(dr,dz).$$
The crucial components that facilitate comparability to the TTSV model are the possibly random observation times of the log-price process.
Following \citet[p.3]{Jacod2019}, observations of the underlying log-price take place based on the (possibly irregularly spaced and random) observation times $0 = T(n,0) < T(n,1) < \dots$ for a triangular sequence $T(n,i)$ of finite times, where the ``stage $n$'' diverges in the asymptotic setting.
Further define 
$$N^n(t): = \sum_{i\geq1} \mathds{1}_{\{T(n,i)\leq t\}}, \qquad \text{ and } \qquad \Delta(n,i)=T(n,i)-T(n,i-1),$$  
such that $N^n(t)+1$ denotes the number of observations up to time $t$ and $\Delta(n,i)$ is the time between observation number $i-1$ and $i$.

Given the assumption that for all $i$, the $\Delta(n,i)$ are in an appropriate sense of the same order of magnitude as the deterministic and positive sequence $\Delta_n$ that converges to zero as $n$ diverges, the observations times $T(n,i)$ are such that for all $t$,
\begin{align}
	\label{eqn:ConvergenceModulatingProcess}
	\Delta_nN^n(t) \overset{\mathbb{P}}{\longrightarrow} \int_0^t\alpha(r)dr,
\end{align}
where $\alpha(t)$ is an appropriately regular and strictly positive It\^{o} semimartingale that, in a statistical sense, modulates the difference of the observation scheme from a regular equally-spaced (calendar time) grid. 
These conditions allow for flexible observation times such as equidistant sampling, (modulated) Poisson sampling schemes and time-changed regular sampling schemes \citep{Jacod2019}.

The log-prices $Q(t)$ can further be contaminated with (different specifications of) MMN as $\widetilde{Q}(T(n,i) ) = Q\big( T(n,i) \big) + \epsilon^n(i)$ for some noise term $\epsilon^n(i)$, resembling our specification in \eqref{eq:SimulateNoise}.
Therefore, similar to the TTSV model, the observed price is constant between observation points that are potentially irregularly spaced and random.

In comparison, the discretized diffusions and the TTSV model share the properties of having observed price paths that are constant between the random observation points with the technical difference that this is achieved by a time-change with a jump process in the TTSV model and by random observation times in the discretized diffusions.
This implies the conceptual difference that in the TTSV model, realized transactions drive price changes and in the discretized diffusion framework, transaction times are simply the observation times of the prices.

An important difference of the models arises in the interpretation of the observation times $T(n,i)$ in the discretized diffusions, where sparse sampling could be included as follows:
First, as  in \citet{Jacod2019}, the $T(n,i)$ can be interpreted as the observed transaction times.
To consider sparsely sampled returns (as is our main focus of interest), we would however require another layer of random times that represent the sampling schemes.

Second, one could directly consider the random times $T(n,i)$ as the sampling points. 
The current standard assumption on the sampling points (see e.g., Assumption (O)(ii) in \citet{Jacod2017}, Assumption O 2.(c) in \citet{Li2022remedi} and Assumption A on page 302 in the book \cite{ait2014high}) however imposes that the duration $\Delta(n,i)$ is conditionally independent from the entire filtration conditional on the information up to observation time $T(n,i-1)$.
This assumption rules out the consideration of the sampling schemes such as the \emph{realized} TTS and BTS variants and HTS, which we advocate in this paper.
Therefore, while the discretized diffusion literature imposes very weak assumptions on the price process and the noise distribution, relaxing the modeling assumption on the   observation times to account for ``realized'' sampling schemes would require additional work.
We mention that there is also literature such as \cite{Fukasawa2010RV} and \cite{robert2012volatility} that allow for more general dependence for the duration $\Delta(n,i)$ within the discretized diffusions framework, though this is under specific assumptions on the observation times (hitting times at the trading grid) and the noise process.

Hence, while both these modeling possibilities do not immediately show how the research question of finding optimal (sparse) sampling points could be analyzed within the setting of discretized diffusions, a derivation of similar results might in principle be feasible.
Furthermore, the discretization schemes might be promising alternatives for future research to e.g., robustify our findings to different (possibly weaker) modeling assumptions, \edit{or extensions to asymptotic results}.

We continue to examine in more detail how the discretization framework described above could produce similar results to ours reported in the main paper for the TTSV model.
For this, we consider the diffusion (that is later on discretized) 
\begin{align}
	\label{eqn:DiffusionComparisonTTSV}
 	Q(t) = Q(0) + \int_0^t \varsigma(r) \sqrt{\lambda(r)} dB(r),
\end{align}	
for some strictly positive It\^{o} processes $\varsigma(r)$ and $\lambda(r)$ that are also used for the corresponding specification of the TTSV model in \eqref{eq:TTSV_model}.
These models are related as both have a spot variance process of $\varsigma^2(r) \lambda(r)$.\footnote{A notable difference between the discretized diffusion in \eqref{eqn:DiffusionComparisonTTSV} and the TTSV model is that in the latter, the jump variance between two trading times at jump time $t_i$ is $\varsigma^2(t_i)$, whereas for the former, the price jump has a variance of $\int_{t_{i-1}}^{t_i}\varsigma^2(r)\lambda(r)dr$.}
Furthermore, if we discretize the diffusion in \eqref{eqn:DiffusionComparisonTTSV} with Poisson random times that follow a modulating process with $\alpha(t) = \lambda(t) \Delta_n$ in the sense of \eqref{eqn:ConvergenceModulatingProcess}, the count process of the discretization $N^n(t)$ resembles the jump process $N(t)$ of the TTSV model (for $n$ large enough in the sense of the asymptotic approximation in \eqref{eqn:ConvergenceModulatingProcess}).

If $P(\cdot)$ denotes the log-price of the TTSV model, under Assumptions \eqref{ass:filtration}--\eqref{ass:Moments}, we also get that the \emph{ex ante} (conditional on $\mathcal{F}_s$) conditional variance of the prices in the interval $[s,t]$ is the same for both processes as
\begin{align*}
	\E \left[ \big(P(t)-P(s) \big)^2 \;\middle|\;  \mathcal{F}_s \right]
	&= \E \left[ \int_s^t \varsigma^2(r) dN(r) \; \middle| \; \mathcal{F}_s \right]  
	= \E \left[ \int_s^t \varsigma^2(r) \lambda(r) dr \; \middle| \; \mathcal{F}_s \right] \\  
	&= \E \left[ \big(Q(t) - Q(s) \big)^2 \; \middle| \; \mathcal{F}_s \right].
\end{align*}
However, when considering the \emph{ex post} variance over the interval $[s,t]$ (i.e., conditioning on $\FintN_t$, thus implying knowledge of the intensities and the transaction/observation times) we get that
\begin{align}
	\label{eqn:TTSVVariance}
	\E \left[ \big( P(t) - P(s) \big)^2 \; \middle| \; \FintN_t \right] = \E \left[ \int_s^t \varsigma^2(r) dN(r) \; \middle| \; \FintN_t \right] 
	=\rIV(s,t)
\end{align}
under the TTSV model.
In the discretized diffusion, when defining the last observation time prior to time $t$ by $\tau(t):=\max\{s\leq t: \exists i\in \N : s = T(n,i)\}$, we however get that
\begin{align}
	\label{eqn:DiscretizedTTSVlikeVariance}
	\E \left[ \big( Q(t) - Q(s) \big)^2 \; \middle| \; \mathcal{F}^{\lambda,\varsigma,N^n}_t \right] 
	= \E \left[ \int_{\tau(s)}^{\tau(t)} \varsigma^2(r) \lambda(r) dr \; \middle| \; \mathcal{F}^{\lambda,\varsigma,N^n}_t \right] 
	= \IV \big( \tau(s),\tau(t) \big).
\end{align}
In this calculation, conditioning on $N^n(\cdot)$ corresponds to knowledge of the observation times, similar as conditioning on $N(\cdot)$ in the TTSV model.

While the right-hand side of \eqref{eqn:DiscretizedTTSVlikeVariance} equals the IV between the last observations before $s$ and $t$ respectively,  we obtain the \emph{realized} IV between $s$ and $t$ for the TTSV model under \eqref{eqn:TTSVVariance}. 
Hence, the comparison of \eqref{eqn:TTSVVariance} and \eqref{eqn:DiscretizedTTSVlikeVariance} illustrates that when employing jump-based sampling/observation schemes and by conditioning on $\FintN_t$, the \emph{realized} IV only arises under the TTSV model.
Consequently, with the choice of a discretized diffusion described in \eqref{eqn:DiffusionComparisonTTSV} and below, we would be unable to theoretically derive the \emph{realized} BTS scheme.
Notice that the \emph{realized} BTS scheme appears to be superior to the classical  \emph{intensity} BTS scheme in both, the estimation and forecasting setting of our empirical application in Section \ref{sec:Application} as can be seen in Table~\ref{tab:applSigPosNegValues} and Figure~\ref{fig:HARForecastResults_SQReturn}.
Since these results are obtained in the model-free empirical application, this illustrates that the TTSV model allows to develop theory for a new, efficient sampling scheme, which is practically relevant as it performs well in the empirical application.

\pagebreak 
\section{Efficient Sampling Using Information of the Entire Day}
\label{sec:EfficientSamplingUptoT}

In this section, we derive the conditional bias and MSE of the RV estimator based on general sampling schemes $\btau$ that are allowed to incorporate information up to the \emph{end of the trading day}, which are hence not necessarily stopping times. 
The use of the information of the entire day allows to fix the number of sampled returns of a sampling scheme to a deterministic number $M$ and corresponds to the empirical practice of computing RV at the end of the trading day, often with a fixed frequency (amount of samples) $M$.
In this way we can explicitly control the noise picked up be the RV estimator, when applying it to observed price data.

Since the sampling times considered here are no longer stopping times with respect to the filtration $\Fil$, we deviate from the setup in Section \ref{theory} and develop new theory in this section. 
We consider results pertaining to the bias and the MSE of RV, \emph{conditional} on the following information sets that are defined for all $t \in [0,T]$,
\begin{align*}
	\Fint_t &= 
	\sigma \big( \lambda(s), \varsigma(s); \quad 0 \le s \le t \big) \subset \mathcal{F}_t, \qquad \text{ and } \\
	\FintN_t &= 
	\sigma \big( \lambda(s), \varsigma(s), N(s); \quad 0 \le s \le t \big) \subset \mathcal{F}_t.
\end{align*} 
In a similar spirit, we distinguish between sampling schemes $\btau$ that are $\Fint_T$- and $\FintN_T$-measurable, where the latter ``realized'' or ``jump-based'' case allows for a dependence of the sampling times on the realized tick pattern of the particular day. Here, a sampling scheme is understood to be $\mathcal{G}$-measurable for some information set $\mathcal{G}$, if all the sampling times in $\btau$ are $\mathcal{G}$-measurable.
Opposed to the results of Theorem~\ref{thm:EfficientSampling} (a), the theory in this section cannot deal with sampling schemes that are allowed to use \emph{price} information in $\mathcal{F}_T$.

In order to get conditional results for the sampling schemes that use information of the entire day, we impose the following, additional assumptions:
\begin{ass}
	\label{ass:Independence}
	The process $\{B(n)\}_{n\geq0}$ is independent from $\{N(t)\}_{t\geq0}$ and from $\{\varsigma(t)\}_{t\geq0}$.
\end{ass}

\begin{ass}
	\label{ass:Moments}
	The expectations $\E \big[\int_t^T \varsigma^2(r) \lambda(r) dr \mid \mathcal{F}_t \big]$, $\E\left[\varsigma^4\left(t\right)\right]$ and $\E \big[\int_0^t\varsigma^4\left(r\right)\lambda\left(r\right)dr\big]$ exist and are finite for all $t \in [0,T]$.
\end{ass}

\begin{ass}
	\label{ass:intensity_restrictions}
	\begin{enumerate}[label=(\alph*)]
		\item The counting process $\{N(t)\}_{t\geq0}$ is a doubly stochastic Poisson process, adapted to $\mathcal{F}_t$, which has a positive, $\mathcal{F}_t$-measurable and continuous intensity $\{\lambda(t)\}_{t\geq0}$ such that $\int_0^t \lambda(s) ds < \infty$ a.s.\ for all $t\geq 0$;  see \citet[Chapter II.1]{bremaud1981} for details;
		\item The processes $\{N(t)\}_{t\geq0}$ and $\{\varsigma(t)\}_{t\geq0}$ are independent.
	\end{enumerate}
	\smallskip 
\end{ass}

\begin{thm} 
	\label{prop:bias}
	Let the sampling scheme $\btau$ be $ \FintN_T$-measurable.		
	\begin{enumerate}[label=(\alph*)]
		\item 
		\label{item:Bias_general}
		Under Assumptions \eqref{ass:filtration}--\eqref{ass:Moments}, it holds that 
		$\E \left[ \RV(\btau) \; \middle| \; \FintN_T \right]  = \rIV(0,T)$.
		
		\item 
		\label{item:Bias_Poisson}
		Under Assumptions \eqref{ass:filtration}--\eqref{ass:intensity_restrictions}, it holds that 
		$\E \left[ \RV(\btau)\; \middle| \; \Fint_T \right] = \IV(0,T)$.
	\end{enumerate}
\end{thm}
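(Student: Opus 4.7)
The plan is to exploit the subordinated representation \eqref{eq:representation} of $P(t)$ together with the conditional independence, given $\FintN_T$, of the tick-increments $\Ui$ that is guaranteed by Assumption \eqref{ass:Independence}. Part \ref{item:Bias_general} then becomes essentially an exchange-of-summation plus orthogonality argument, and part \ref{item:Bias_Poisson} reduces to part \ref{item:Bias_general} via the tower property, after which $\E[\rIV(0,T)\mid\Fint_T]$ is dispatched by the Campbell/martingale identity for doubly stochastic Poisson integrals.

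For part \ref{item:Bias_general}, using \eqref{eq:representation} I would write each intraday return as $r_{j,T} = \sum_{\tau_{j-1}^T < t_i \le \tau_j^T} \varsigma(t_i)\,\Ui$. Since $N(t_i)-N(t_{i-1})=1$ by construction, the $\Ui$ are Brownian increments over disjoint unit intervals in the subordinated time scale, so they form an i.i.d.\ family of standard normal random variables; by Assumption \eqref{ass:Independence} this family is independent of $\FintN_T$, while the tick times $t_i$ and the predictable values $\varsigma(t_i)$ are themselves $\FintN_T$-measurable. Squaring $r_{j,T}$ and taking the conditional expectation, the cross terms $\varsigma(t_i)\varsigma(t_k)\Ui U_k$ with $i \ne k$ vanish, leaving
\[ \E \left[ r_{j,T}^2 \;\middle|\; \FintN_T \right] = \sum_{\tau_{j-1}^T < t_i \le \tau_j^T} \varsigma^2(t_i). \]
Summing over the $\FintN_T$-measurable partition $\btau^T$ of $[0,T]$ counts each tick $t_i \in (0,T]$ exactly once, so the right-hand side collapses to $\Stil = \rIV(0,T)$.

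For part \ref{item:Bias_Poisson}, I would apply the tower property together with part \ref{item:Bias_general} to obtain
\[ \E \left[ \RV(\btau^T) \;\middle|\; \Fint_T \right] = \E \left[ \rIV(0,T) \;\middle|\; \Fint_T \right] = \E \left[ \int_0^T \varsigma^2(r)\,dN(r) \;\middle|\; \Fint_T \right]. \]
Under Assumption \eqref{ass:intensity_restrictions}, $N$ is a doubly stochastic Poisson process independent of $\varsigma$, so conditionally on $\Fint_T$ it admits the (now $\Fint_T$-determined) compensator $\int_0^{\cdot}\lambda(s)\,ds$, i.e., $N(t) - \int_0^t\lambda(s)\,ds$ is a martingale in this conditional world. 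Because $\varsigma^2$ is $\Fint_T$-measurable, continuous and hence predictable, and the integrability needed for the Campbell identity is supplied by Assumption \eqref{ass:Moments}, one obtains $\int_0^T \varsigma^2(r)\lambda(r)\,dr = \IV(0,T)$.

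The main obstacle is the measurability bookkeeping: one has to verify that the $\Ui$ remain i.i.d.\ standard normal and orthogonal in conditional expectation even when they are indexed by the \emph{random} arrival times $t_i$, so that the cross-term argument survives the simultaneous conditioning on $\varsigma$ and on $N$. This rests on Assumption \eqref{ass:Independence} combined with the construction of the subordinated Brownian motion $B\circ N$. A secondary, bookkeeping-style point is the treatment of a tick that lands exactly on a sampling point; under the half-open convention $(\tau_{j-1}^T,\tau_j^T]$ already used in \eqref{eq:UiRepresentation}, each tick in $(0,T]$ is attributed to a single return, so the telescoping is unambiguous.
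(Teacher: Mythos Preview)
Your proposal is correct and follows essentially the same route as the paper: both parts hinge on the representation $r_{j,T}=\sum_{\tau_{j-1}^T<t_i\le\tau_j^T}\varsigma(t_i)U_i$, the conditional normality $U_i\mid\FintN_T\sim\mathcal{N}(0,1)$, and for \ref{item:Bias_Poisson} the tower property followed by the compensator identity for the doubly stochastic Poisson integral. The only stylistic difference is that the paper kills the cross terms by an iterative ``peeling'' argument---splitting off $\varsigma(t_m)U_m$ and conditioning on $\FintN_T\vee\mathcal{F}_{t_m-}$ via the tower property---whereas you invoke the joint conditional independence of the $U_i$ from $\FintN_T$ directly; both are valid under Assumption~\eqref{ass:Independence}, and your observation that the measurability bookkeeping for the randomly indexed $U_i$ is the crux matches exactly what the paper's nested-conditioning step is designed to handle.
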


Part (b) of this theorem shows that for any $\FintN_T$-measurable sampling scheme, RV is an $\Fint_T$-conditionally unbiased estimator for IV under the TTSV model based on a doubly stochastic Poisson process $N(t)$ as specified in Assumption \eqref{ass:intensity_restrictions}.
When conditioning on $\FintN_T$ however, part (a) shows that for the general TTSV model, RV is conditionally unbiased for the realized IV, which can be interpreted as an $N(t)$-dependent refinement of IV.

While similar to Theorem~\ref{thm:unbiasedness}, there is no theoretical distinction between different sampling schemes $\btau$ in terms of a bias of the RV estimator (when either staying in setting (a) or (b) of Theorem \ref{prop:bias}), we continue by showing that similar to Theorem~\ref{thm:EfficientSampling}, the choice of $\btau$ entails a difference in the estimation efficiency.
For this, we derive a closed-form expression for the MSE of the RV estimator depending on the sampling grid $\btau$ with a finite amount of $M$ sampling points.

\begin{thm} $ $ \vspace{-0.2cm}
	\label{prop:MSE_tick} 
	\begin{enumerate}[label=(\alph*)]
		\item 
		Under Assumptions \eqref{ass:filtration}--\eqref{ass:Moments} and given that the sampling times $\btau$ are $\FintN_T$-measurable,  \\
		$\E \left[ \big( \RV({\btau}) - \IV(0,T) \big)^2 \;\middle|\; \FintN_T \right]  
		=  \big( \rIV(0,T) - \IV(0,T) \big)^2 + 2 \sum_{j=1}^{M} \rIV(\tau_{j-1},{\tau_j})^2$.
		
		\item 
		Under Assumptions  \eqref{ass:filtration}--\eqref{ass:intensity_restrictions} and given that the sampling times $\btau$ are $\Fint_T$-measurable, \\
		$\E\left[ \big( \RV({\btau}) - \IV(0,T) \big)^2 \;\middle|\; \Fint_T\right] 
		= 3 \IQ(0,T) + 2 \sum_{j=1}^{M} \IV(\tau_{j-1},{\tau_j})^2$,
		where \\
		$\IQ(s,t):=\int_{s}^{t}\varsigma^4(r)\lambda(r)dr$ denotes the Integrated Quarticity of the TTSV model.
	\end{enumerate}
\end{thm}

Part (a) of Theorem~\ref{prop:MSE_tick} provides the MSE result for $\FintN_T$-measurable sampling times
for general jump processes \emph{without} imposing the Poisson Assumption \eqref{ass:intensity_restrictions} such that it e.g., also applies to Hawkes processes.
In contrast, the Poisson restriction is required for part (b) as the proof relies on the zero-mean martingale property of the compensated jump process conditional on $\Fint_T$, which is only satisfied under Assumption \eqref{ass:intensity_restrictions}.

In both parts of Theorem \ref{prop:MSE_tick}, the MSE is bounded from below by the constant factors $\big( \rIV(0,T) - \IV(0,T) \big)^2$ and $3 \IQ(0,T)$, respectively. 
Most important for our purposes are however the terms $2\sum_{j=1}^{M(T)} \IV (\tau_{j-1},{\tau_j})^2$ and $2\sum_{j=1}^{M} \rIV (\tau_{j-1},{\tau_j})^2$, which depend on the sum of the squared intraday (realized) IVs according to the chosen sampling grid $\btau$.
Hence, the results of Theorem~\ref{prop:MSE_tick} align with Corollary~\ref{cor:MSE_jump_based} and Corollary~\ref{cor:MSE_intensity_based} and show that the sampling points should be chosen to homogenize the realized and classical IV, respectively.

As in Section \ref{sec:EfficiencyFiniteSample}, we see that by applying the Cauchy-Schwartz inequality, these terms are minimized by sampling times that are chosen such that the intraday returns become as homogeneous as possible in terms of their (realized) IV.
It is important to notice that Theorem \ref{prop:MSE_tick} is valid for any finite (and in practice user-chosen) value of sampling points $M$.
This allows the subsequent analysis of the finite sample efficiency of different sampling schemes through the terms  $2\sum_{j=1}^{M} \IV (\tau_{j-1},{\tau_j})^2$ and $2\sum_{j=1}^{M} \rIV (\tau_{j-1},{\tau_j})^2$, respectively.\footnote{\label{foot:rIVEfficiency}While choosing realized IV as the estimation target for part (a) would eliminate the first term $\big( \rIV(0,T) - \IV(0,T) \big)^2$, it would have leave the more important quantity $2\sum_{j=1}^{M} \rIV (\tau_{j-1},{\tau_j})^2$ unchanged, hence not affecting the relative finite sample efficiencies of different sampling schemes; see Appendix \ref{sec:ComparisonOomen2006} and in particular Table \ref{tab:MSEresults} for details.}

We continue to investigate the MSE for the specific (theoretical) sampling schemes introduced above.
The two MSE expressions in Theorem \ref{prop:MSE_tick} can be further simplified under the iBTS  and rBTS schemes as 
\begin{align}
	\label{eqn:HomogenousReturns}
	\sum_{j=1}^{M} \IV(\tau^{\mbox{\tiny iBTS}}_{j-1},{\tau^{\mbox{\tiny iBTS}}_j})^2 = \frac{\IV(0,T)^2}{M}
	\quad \text{ and } \quad
	\sum_{j=1}^{M} \rIV(\tau^{\mbox{\tiny rBTS}}_{j-1},{\tau^{\mbox{\tiny rBTS}}_j})^2 =  \frac{\rIV(0,T)^2}{M}.
\end{align}
This implies that the iBTS and rBTS schemes respectively make the distribution of the sampled intraday returns as homogeneous as possible, which we formalize in the following Corollary that follows directly from Theorem \ref{prop:MSE_tick}, equation \eqref{eqn:HomogenousReturns} and the Cauchy-Schwartz inequality.

\begin{cor}
	\label{cor:EfficientSamplingAppendix}
	$ $ \vspace{-0.2cm}
	\begin{enumerate}[label=(\alph*)]
		\item 
		Under Assumptions \eqref{ass:filtration}--\eqref{ass:Moments} and given that the sampling times $\btau$ are $\FintN_T$-measurable, \\
		$\E \left[ \big( \RV({\btau}) - \IV(0,T) \big)^2 \;\middle|\; \FintN_T \right]  \ge  \E \left[ \big( \RV({\btau^{\mbox{\tiny rBTS}}}) - \IV(0,T) \big)^2 \;\middle|\; \FintN_T \right]$, with equality if and only if $\btau \equiv \btau^{\mbox{\tiny rBTS}}$.

		\item 
		Under Assumptions  \eqref{ass:filtration}--\eqref{ass:intensity_restrictions} and given that the sampling times $\btau$ are $\Fint_T$-measurable, \\
		$\E \left[ \big( \RV({\btau}) - \IV(0,T) \big)^2 \;\middle|\; \Fint_T \right]  \ge  \E \left[ \big( \RV({\btau^{\mbox{\tiny iBTS}}}) - \IV(0,T) \big)^2 \;\middle|\; \Fint_T \right]$, with equality if and only if  $\btau \equiv \btau^{\mbox{\tiny iBTS}}$.
	\end{enumerate}
\end{cor}

This implies that for a fixed value of $M$, the rBTS scheme provides the smallest MSE among all possible $\FintN_T$-measurable sampling schemes.
Equivalently, if we only consider $\Fint_T$-measurable sampling, the iBTS scheme achieves the lowest MSE.
The proof techniques used in this section unfortunately do not allow for the consideration of the most general class of $\mathcal{F}_T$-measurable sampling, such that an ``end of the day variant'' of HTS cannot be considered here.

\section[A Comparison with the Results of Oomen (2006)]{A Comparison with the Results of \citet{oomen2006}}
\label{sec:ComparisonOomen2006}

In this section, we thoroughly relate the theory results of Appendix~\ref{sec:EfficientSamplingUptoT} to the results of \citet{oomen2006}, who uses a simplified version of the TTSV model with a constant tick variance process $\varsigma(t) = \varsigma_c$ and a non-homogeneous Poisson process $N(t)$.
He derives MSE expressions in his equations (9)--(10), which are in the spirit of our Theorem \ref{prop:MSE_tick} and Corollary \ref{cor:EfficientSamplingAppendix}.\footnote{
	The past literature on sampling schemes often uses inconsistent terminologies, which requires special care when comparing the results among different papers.
	E.g., \cite{oomen2006} refers to BTS as sampling with respect to the ``expected number of transactions'' and to TTS as sampling with respect to the ``realized number of transactions'', which matches our definitions of iTTS and rTTS.
	Furthermore, \cite{griffin2008} differentiate between the tick and transaction time sampling, where the former samples with respect to transactions with non-zero price changes.
}
This section illustrates how our results nest the ones of \citet{oomen2006} and additionally clarifies the specific settings under which the MSE results in \citet[Equations (9)--(10)]{oomen2006} can be derived.
For this, we impose Assumptions \eqref{ass:filtration}--\eqref{ass:intensity_restrictions} throughout this section.

In order to conduct a formal comparison with our results, we have to distinguish four settings with respect to the information set that is used for the sampling grids and the conditioning in the MSE (either $\Fint_T$  or $\FintN_T$), and with respect to the estimation target (either $\IV$ or $\rIV$), that we give in Table \ref{tab:combinations}. 
While settings (i) and (ii) allow for the comparison of $\FintN_T$-measurable sampling schemes, we should only compare $\Fint_T$-measurable sampling schemes in settings (iii) and (iv).
It is crucial to note that MSE comparisons between sampling schemes are only meaningful when carried out under the same setting.

\begin{table}[tbh]
	\centering
	\begin{tabular}{lcc}
		\toprule 
		Information Set $\backslash$ Target \qquad & $\rIV=\Stil$ & $\IV=\Sno$  \\
		\midrule
		$\FintN_T$  & (i) & (ii)\\
		$\Fint_T$  & (iii) & (iv)\\
		\bottomrule
	\end{tabular}
	\caption{Overview of the four considered settings in deriving MSE results.}
	\label{tab:combinations}
\end{table}

Table \ref{tab:MSEresults} reports the conditional MSE results, together with the efficient sampling schemes and their respective MSE for the four settings (i)--(iv). 
The upper Panel A gives results for the TTSV model (restricted to a doubly stochastic Poisson process $N(t)$), where the lower Panel B presents simplifications to the case $\varsigma(t) = \varsigma_c$, hence allowing for a direct comparison with the results of \citet{oomen2006}.
The MSE results under settings (ii) and (iv) are stated in our Theorem \ref{prop:MSE_tick}.
For the settings (i) and (iii), the results can be easily obtained from the proof of Theorem \ref{prop:MSE_tick}; in particular see the quadratic expansions in equations \eqref{eq:MSEdecomp} and \eqref{eq:MSEdecomp2}.
Further notice that the ranking of the sampling schemes is the same for settings (i) and (ii) and for settings (iii) and (iv), respectively, as the conditional MSEs only differ by a term that is invariant from the sampling scheme.

\begin{table}[tbh]
	\scriptsize 
	\centering
	\begin{tabular}{llcl}
		\toprule 
		\textbf{Setting} & \textbf{Conditional~MSE} & \textbf{Eff.~Sampl.} & \textbf{Cond.~MSE~of~Eff.~Sampl.}  \\
		\midrule 
		\\
		\multicolumn{4}{l}{\textbf{Panel A: TTSV model}} \\
		\midrule
		(i) & $2 \sum_{j=1}^{M} \rIV(\tau_{j-1}, \tau_j)^2$  & rBTS & $2 \rIV^2 / M$ \\
		(ii) & $2 \sum_{j=1}^{M} \rIV(\tau_{j-1}, \tau_j)^2 + (\rIV - \IV)^2$ & rBTS & $2 \rIV^2 / M + (\rIV - \IV)^2$ \\
		(iii) &  $2 \sum_{j=1}^{M} \IV(\tau_{j-1}, \tau_j)^2 + 2 \IQ$ & iBTS & $2 \IV^2/M + 2 \IQ$ \\
		(iv) & $2 \sum_{j=1}^{M} \IV(\tau_{j-1}, \tau_j)^2 + 3 \IQ $ & iBTS &  $2 \IV^2/M + 3 \IQ$ \\
		\midrule 
		\\
		\multicolumn{4}{l}{\textbf{Panel B: Model of \citet{oomen2006} with constant tick variance $\varsigma(t) = \varsigma_c$}} \\
		\midrule
		(i) & $2\varsigma_c^4 \sum_{j=1}^{M} \big( N(\tau_j) - N(\tau_{j-1}) \big)^2$ & rTTS = rBTS & $2 \varsigma_c^4 N(T)^2/M$ \\
		(ii) & $2\varsigma_c^4 \sum_{j=1}^{M}  \big( N(\tau_j) - N(\tau_{j-1}) \big)^2 + \varsigma_c^4 (N(T)-\Lambda(T))^2$ & rTTS = rBTS & $2 \varsigma_c^4 N(T)^2/M + \varsigma_c^4 (N(T)-\Lambda(T))^2$ \\
		(iii) & $2\varsigma_c^4\sum_{j=1}^{M} \big(\Lambda(\tau_j) - \Lambda(\tau_{j-1})\big)^2 + 2 \varsigma_c^4 \Lambda(T)$ & iTTS = iBTS & $2 \varsigma_c^4\Lambda(T)^2/M +2\varsigma_c^4\Lambda(T)$ \\
		(iv) & $2\varsigma_c^4 \sum_{j=1}^{M} \big(\Lambda(\tau_j) - \Lambda(\tau_{j-1})\big)^2 + 3 \varsigma_c^4 \Lambda(T)$ & iTTS = iBTS & $2 \varsigma_c^4 \Lambda(T)^2/M + 3 \varsigma_c^4 \Lambda(T)$ \\
		\bottomrule 
	\end{tabular}
	\caption{MSE results and efficient sampling schemes under the settings (i)--(iv) described in Table \ref{tab:combinations} for the general TTSV model in Panel A and for the simplified version of \citet{oomen2006} in Panel B. The table is expressed in terms of our notation, where we use the shorthands $\IV := \IV(0,T)$, $\rIV := \rIV(0,T)$, $\IQ := \IQ(0,T)$ and $\Lambda(t) := \int_0^t \lambda(s) \mathrm{d}s$ for $t \in [0, T]$.
	The efficient sampling schemes in settings (iii) and (iv) are taken among the $\Fint_T$-measurable sampling schemes (that are in particular not based on the realizations of the process $N(t)$).}
	\label{tab:MSEresults}
\end{table}

The results of Panel B of Table \ref{tab:MSEresults} are obtained as under the simplifications of \citet{oomen2006}, we get  $\rIV(\tau_{j-1}, \tau_j) = \varsigma_c^2 \cdot \big( N(\tau_j) - N(\tau_{j-1}) \big)$, $\IV(\tau_{j-1}, \tau_j) = \varsigma_c^2 \cdot \big( \Lambda(\tau_j) - \Lambda(\tau_{j-1})\big)$, and $\IQ(0,T) = \varsigma_c^4 \Lambda(T)$, where $\Lambda(t) = \int_0^t \lambda(s) \mathrm{d}s$ for $t \in [0, T]$.
The MSE result of \citet[Equation (9)]{oomen2006} for iTTS (denoted BTS in his paper) corresponds to the result derived in our setting (iv), whereas the MSE result for rTTS (denoted TTS in his notation) in his equation (10) corresponds to setting (i), hence rendering these conditional MSEs not directly comparable.
(Notice here that the notation $\Sigma$ in \citet{oomen2006} is unfortunately used for both, IV in his equation (9) and rIV in his equation (10).)
However, the conclusion that rTTS is more efficient than iTTS in his setting still holds true, but should formally be concluded from the MSE calculations under setting (ii) as \citet{oomen2006} allows for $\FintN_T$-measurable, jump-based sampling schemes and considers IV as the estimation target.

\pagebreak
\section{Proofs} 
\label{app:proofs}
We structure the proofs as follows:
Subsection~\ref{subsec:ProofMainPaper} contains the proofs for the results in the Sections~\ref{sec:TTSV_model} and \ref{sec:EfficiencyFiniteSample} \edit{apart from the proofs for the main results, which are contained in Appendix \ref{sec:ProofsMain}}. We give proofs for our results on sampling efficiency using information of the entire day in Appendix~\ref{sec:EfficientSamplingUptoT} in Subsection~\ref{subsec:ProofsEfficientSamplingUptoT}.

\subsection{\edit{Remaining} Proofs for the Results in the Sections~\ref{sec:TTSV_model} and \ref{sec:EfficiencyFiniteSample}}
\label{subsec:ProofMainPaper}

\begin{proof}[\bf \hypertarget{proof:martingale_price}{Proof of Proposition \ref{prop:martingale_price}}] 
	By Assumption \ref{ass:filtration}, the jump process has finite activity, such that $N_t-N_s<\infty$ a.s. for any $0\le s\le t\le T$. For each $n\in \mathbb{N}$ we define the stopping time $\rho_n:= \sup\{t\in[0,T]: N(t) < n\}$, which equals the $n$-th jump time or the final time $T$, if the process jumps has fewer than $n$ jumps. In particular note that $\mathbb{P}(\rho_n\to T)=1$, because $N$ is of finite activity. The stopped process $P^{\rho_n}=\{P_{\rho_n\wedge t}\}_{t\in [0,T]}$ is a martingale for each $n\in \mathbb{N}$, since we can condition on the $\sigma$-algebras just before the jump times $\mathcal{F}_{t_{i-}}$ and use Lemma \ref{lem:random_sum_cond_exp} such that
	\begin{align}
		\E\left[P^{\rho_n}_t-P^{\rho_n}_s|\mathcal{F}_s\right]&=\E \left[ \sum_{i=N_s+1\wedge n}^{N_t\wedge n}\varsigma(t_i)U_i \Bigg|\mathcal{F}_s \right]
		=\E \left[ \sum_{i=N_s+1\wedge n}^{N_t\wedge n}\varsigma(t_i)\E[U_i|\mathcal{F}_{t_i-}] \Bigg|\mathcal{F}_s \right]
		=0\label{eq:martingale},
	\end{align}
	which implies that $P$ is a local martingale. In particular in \eqref{eq:martingale}, we use the $\mathcal{F}_{t_{i-}}$-measurability of the tick volatility and the conditional distribution of $U_i$
	\begin{equation}
		\E[U_i|\mathcal{F}_{t_i-}]=\E[B(N(t_i))-B(N(t_{i-1}))|\mathcal{F}_{t_i-}]=0.
	\end{equation}
	The bound required for Lemma~\ref{lem:random_sum_cond_exp} holds, because
	\begin{align}
		\E[|P^{\rho_n}_t-P^{\rho_n}_s|]\le \sqrt{n} \sqrt{\E[[P]_T]}
	\end{align}
	by the Cauchy-Schwarz inequality and we assume $\E[[P]_T^2]<\infty$ in Assumption \ref{ass:filtration} which implies $\E[[P]_T]<\infty$ by Jensen's inequality. Since $\E[[P]_T]<\infty$, $P$ is square-integrable and this implies that $P$ is a true martingale.
\end{proof}

\phantomsection
\hypertarget{lem:random_sum_cond_exp}{}
\begin{lem} 
	\label{lem:random_sum_cond_exp}
	Consider a sequence of integrable random variables $A_1, A_2,...$, a sequence of $\sigma$-algebras $\mathcal{G}_1\subseteq \mathcal{G}_2,... \in \mathbb{F}$ and an almost surely finite integer-valued random variable $M$. Assume for each $j\in \mathbb{N}$ that $\{j \le M\} \in \mathcal{G}_j$.\footnote{Interpretation: the sequence of $\sigma$-algebras forms a discrete-time filtration $(\mathcal{G}_j)_{j=1,...}$ and $M+1$ is required to be a stopping time with respect to that filtration, i.e. $\{M=j-1\}\in \mathcal{G}_{j}$.} If  $A_j\ge0$ for each $j\in \mathbb{N}$ or there exist random variables $\bar{A}$ and $\tilde{A}$ such that $ \big| \sum_{j=1}^M A_j \big| \le \bar{A}$, $\E[\bar{A}]<\infty$, $\big| \sum_{j=1}^M \E[A_j|\mathcal{G}_j] \big| \le \tilde{A}$ and $\E[\tilde{A}]<\infty$, then we have for any $\sigma$-algebra $\mathcal{G} \subseteq \mathcal{G}_1$ that
	\begin{equation}
		\E\left[ \sum_{j=1}^M A_j \Bigg| \mathcal{G} \right] = \E\left[ \sum_{j=1}^M \E[A_j | \mathcal{G}_j] \Bigg| \mathcal{G} \right].
	\end{equation}
\end{lem}

\begin{proof} [\bf \hypertarget{proof:random_sum_cond_exp}{Proof of Lemma \ref{lem:random_sum_cond_exp}}]
	Because $M$ is almost surely finite, we have the almost sure convergence $\lim_{n\to \infty} M \wedge n = M$. The result now follows, as    
	\begin{align}
		\E\left[ \sum_{j=1}^M A_j  \Bigg| \mathcal{G} \right] & = \lim_{n\to \infty} \E\left[ \sum_{j=1}^{M\wedge n} A_j  \Bigg| \mathcal{G} \right]  = \lim_{n\to \infty} \E\left[ \sum_{j=1}^{n} \mathbf{1}_{\{j \le M \} } A_j  \Bigg| \mathcal{G}  \right]  = \lim_{n\to \infty} \sum_{j=1}^{n}  \E\left[ \mathbf{1}_{\{j \le M \} } A_j  \big| \mathcal{G}  \right] \\
		& = \lim_{n\to \infty} \sum_{j=1}^{n}  \E \left[\E \left[ \mathbf{1}_{\{j \le M \} } A_j |\mathcal{G}_j\right]  \big| \mathcal{G} \right]  = \lim_{n\to \infty} \sum_{j=1}^{n}  \E\left[ \mathbf{1}_{\{j \le M \} }\E\left[  A_j |\mathcal{G}_j\right]  \big| \mathcal{G} \right] \\
		&  = \lim_{n\to \infty} \E\left[ \sum_{j=1}^{M\wedge n}  \E\left[  A_j |\mathcal{G}_j\right]  \Bigg| \mathcal{G} \right]  = \E\left[ \sum_{j=1}^{M}  \E\left[  A_j |\mathcal{G}_j\right] \Bigg| \mathcal{G}  \right],
	\end{align}
	where the first and last equality follow from the Monotone Convergence Theorem in the case that $A_j\ge 0$ for each $j\in\mathbb{N}$ and from the Dominated Convergence Theorem for conditional expectations under the assumption of the existence of integrable bounding random variables.
\end{proof}

\begin{proof}[\bf \hypertarget{proof:vola_decomposition}{Proof of Proposition \ref{prop:vola_decomposition}}]
	\begin{align}
		\lim_{\delta \downarrow 0} \frac{1}{\delta} \E\left[ (P_{t+\delta} - P_t)^2 \big| \mathcal{F}_t \right] & = \lim_{\delta \downarrow 0} \frac{1}{\delta} \E\left[ [P]_{t+\delta} - [P]_t \big| \mathcal{F}_t \right] \\
		& = \lim_{\delta \downarrow 0} \E\left[ \frac{\IV(t, t+\delta)}{\delta} \Bigg| \mathcal{F}_t \right] \\
		& =  \E\left[ \lim_{\delta \downarrow 0} \frac{\IV(t, t+\delta)}{\delta} \Bigg| \mathcal{F}_t \right] \\
		& =  \varsigma^2(t+) \lambda(t+),
	\end{align}
	where we use the Dominated Convergence Theorem in the third step to exchange the limit and the expectation with the bound coming from the integrable random variable $Z(t)$ and we apply the Fundamental Theorem of Calculus in the last step to the right-sided derivative of $\IV(0,\cdot)$ at $t$ and use the right-continuity of the filtration.
\end{proof}

\begin{proof}[\bf \hypertarget{proof:IVvola}{Proof of Proposition \ref{prop:IVvola}}]
	This result is a special case of Theorem \ref{thm:unbiasedness} that appears in Section \ref{sec:EfficiencyFiniteSample} by choosing the trivial sampling scheme $\btau=\{0,T\}$.
\end{proof}

\begin{lem} \label{lem:square_incr_RV_QV}
	Under Assumption \ref{ass:filtration} for any pair of stopping times $0\le\sigma \le \tau \le T$
	\begin{align}
		\E \left[ ( (P_{\tau}-P_{\sigma})^2 - ([P]_{\tau} - [P]_{\sigma}) )^2| \mathcal{F}_{\sigma} \right] &= 
		\E \left[ \sum_{\sigma<t_i\le\tau} A_i^2(\sigma) \Bigg| \mathcal{F}_{\sigma} \right] \\
		&= \frac{2}{3} \E\left[ r^4(\sigma, \tau)| \mathcal{F}_{\sigma} \right] - 2\E \left[\IQ(\sigma, \tau)| \mathcal{F}_{\sigma} \right]\\
		&=2 \E\left[\rIV(\sigma,\tau)(2(P_\tau - P_\sigma)^2 - \rIV(\sigma,\tau)) | \mathcal{F}_{\sigma} \right] \\
		& \quad - 2 \E \left[ \IQ(\sigma, \tau)| \mathcal{F}_{\sigma} \right],
	\end{align}
	where $A_i(\sigma)=2 (P_{t_i-}-P_{\sigma})\varsigma(t_i)U_i$.
\end{lem}

\begin{proof}[\bf \hypertarget{proof:square_incr_RV_QV}{Proof of Lemma \ref{lem:square_incr_RV_QV}}]
	As noted in the proof of Theorem \ref{thm:MSE_IV}, we can write 
	\begin{equation}
		(P_{\tau}-P_{\sigma})^2 - ([P]_{\tau} - [P]_{\sigma}) = 2 \int_\sigma^\tau  (P_{r-} - P_{\sigma}) dP_r,
	\end{equation}
	as a stochastic integral with respect to the price process $P$. Using the It\^{o} isometry for the stochastic integral and the Optional Stopping Theorem it follows that
	
	\begin{align}
		\E \left[((P_{\tau}-P_{\sigma})^2 - ([P]_{\tau} - [P]_{\sigma}))^2 | \F_{\sigma} \right] & = \E \left[4\int_{\sigma}^{\tau} (P_{r-} - P_{\sigma})^2 d [P]_{r} \Bigg| \F_{\sigma} \right]\\
		& = \E \left[4 \sum_{\sigma < t_i \le \tau} (P_{t_i-} - P_{\sigma})^2 (\Delta P_{t_i})^2 \Bigg| \F_{\sigma} \right]\label{eq:variance_of_variance_price_increment}\\
		& = \E \left[ \sum_{\sigma<t_i\le\tau} A_i^2(\sigma) \Bigg| \mathcal{F}_{\sigma} \right].
	\end{align}
	By iteratively using the binomial formula, the fourth power of the intraday return $r(\sigma, \tau)$ can be written as
	\begin{align}
		(P_{\tau}-P_{\sigma})^4 & = \left(\sum_{\sigma < t_i \le \tau} \Delta P_{t_i} \right) ^ 4\\    
		& =  6 \sum_{\sigma < t_i \le \tau} (P_{t_{i}-} - P_{\sigma})^2 (\Delta P_{t_{i}})^2 + \sum_{\sigma < t_i \le \tau} (\Delta P_{t_{i}})^4 +  Q^{\sigma}( \tau - \sigma) \label{eq:fourth_power_price_increment}
	\end{align}
	where $Q^{\sigma}$ is a process defined by
	
	\begin{equation}
		Q^{\sigma}(t) = 4 \sum_{\sigma \le t_i \le \sigma + t \wedge T} (P_{t_{i}-} - P_{\sigma})^3 \Delta P_{t_{i}} + 4 \sum_{\sigma \le t_i \le \sigma + t \wedge T} (P_{t_{i}-} - P_{\sigma}) (\Delta P_{t_{i}})^3
	\end{equation} 
	for $t\ge 0$. If we show that $\E[Q^{\sigma}( \tau - \sigma) | \mathcal{F}_{\sigma}] = 0$ we can conclude from \eqref{eq:variance_of_variance_price_increment} and \eqref{eq:fourth_power_price_increment} that we have that 
	
	\begin{align}
		\E \left[((P_{\tau}-P_{\sigma})^2 - ([P]_{\tau} - [P]_{\sigma}))^2 \big| \F_{\sigma} \right]  & = \frac{2}{3} \E \left[(P_{\tau}-P_{\sigma})^4 - \sum_{\sigma < t_i \le \tau} (\Delta P_{t_{i}})^4   \Bigg| \F_{\sigma} \right],
	\end{align}
	which implies the result, since we can apply Lemma~\ref{lem:random_sum_cond_exp} to show that
	\begin{align}
		\E\left[\sum_{\sigma<t_i\le\tau} (\Delta P_{t_i})^4 \Bigg| \mathcal{F}_\sigma \right] = \E\left[\sum_{\sigma<t_i\le\tau} \E[(\Delta P_{t_i})^4 |\mathcal{F}_{t_i-}] \Bigg|\mathcal{F}_\sigma \right] =\E\left[\sum_{\sigma<t_i\le\tau} 3\varsigma(t_i)^4 \Bigg|\mathcal{F}_\sigma \right] = 3\E\left[ \IQ(\sigma,\tau) |\mathcal{F}_\sigma\right].
	\end{align}
	To show that $\E[Q^{\sigma}( \tau - \sigma) | \mathcal{F}_{\sigma}] = 0 $, we use the Optional Stopping Theorem. To this end, we begin by showing that $Q^{\sigma}$ is a martingale with respect to the filtration $\{\mathcal{F}_{\sigma+t \wedge T}\}_{t \ge 0}$. Clearly, $Q^{\sigma}$ is adapted to the filtration $\{\mathcal{F}_{\sigma+t \wedge T}\}_{t \ge 0}$ and $Q^{\sigma}(0)=0$ is integrable. The martingale property follows as
	
	\begin{align}
		&\E[ Q^{\sigma}(t)-Q^{\sigma}(s)|\mathcal{F}_{\sigma+s \wedge T}] \\
		&\quad=\E\left[4 \sum_{\sigma + s \wedge T < t_i \le \sigma + t \wedge T} (P_{t_{i}-} - P_{\sigma})^3 \Delta P_{t_{i}} \Bigg|\mathcal{F}_{\sigma+s \wedge T} \right]\\
		&\qquad+\E\left[4 \sum_{\sigma + s \wedge T < t_i \le \sigma + t \wedge T} (P_{t_{i}-} - P_{\sigma}) (\Delta P_{t_{i}})^3\Bigg|\mathcal{F}_{\sigma+s \wedge T} \right]\\
		\label{eq:Qeq3}
		&\quad=\E\left[4 \sum_{\sigma + s \wedge T < t_i \le \sigma + t \wedge T} (P_{t_{i}-} - P_{\sigma})^3 \E[\Delta P_{t_{i}} | \mathcal{F}_{t_i-}] \Bigg|\mathcal{F}_{\sigma+s \wedge T} \right]\\
		&\qquad+\E\left[4 \sum_{\sigma + s \wedge T < t_i \le \sigma + t \wedge T} (P_{t_{i}-} - P_{\sigma}) \E[(\Delta P_{t_{i}})^3 | \mathcal{F}_{t_i-}]\Bigg|\mathcal{F}_{\sigma+s \wedge T} \right]\\
		&\quad=\E\left[4 \sum_{\sigma + s \wedge T < t_i \le \sigma + t \wedge T} (P_{t_{i}-} - P_{\sigma})^3 \varsigma(t_i)\E[U_i | \mathcal{F}_{t_i-}] \Bigg|\mathcal{F}_{\sigma+s \wedge T} \right]\\
		&\qquad+\E\left[4 \sum_{\sigma + s \wedge T < t_i \le \sigma + t \wedge T} (P_{t_{i}-} - P_{\sigma}) \varsigma(t_i)^3\E[U_i^3 | \mathcal{F}_{t_i-}]\Bigg|\mathcal{F}_{\sigma+s \wedge T} \right]\\
		&\quad=0,
	\end{align}
	where in the last step we use that at each jump time the Brownian increments $U_i|\mathcal{F}_{t_i-}\sim\mathcal{N}(0,1)$ under Assumption~\ref{ass:filtration} such that $\E[U_i|\mathcal{F}_{t_i-}]=\E[U_i^3|\mathcal{F}_{t_i-}]=0$. In the second step, we apply Lemma \ref{lem:random_sum_cond_exp} in a similar way as in the proof of Proposition \ref{prop:martingale_price}. Note that $Q^\sigma(t)$ can be written in terms of $(P_t-P_\sigma)^4$, $\sum_{t_i\le t} (P_{t-}-P_\sigma)^2 \Delta P_{t_i}^2$ and $\sum_{t_i\le t} \Delta P_{t_i}^4 $ and it is possible to bound these latter terms in expectation by $\E[[P]_T^2]$ by applying the Burkholder-Davis-Gundy inequalities. The Optional Stopping Theorem now gives the desired result that 
	\begin{equation}
		\E[Q^{\sigma}(\tau-\sigma)|\mathcal{F}_{\sigma}]=\E[Q^{\sigma}(0)|\mathcal{F}_{\sigma}]=0.
	\end{equation}
	To show the last equation in the statement of the Lemma, we use the integration by parts formula for the stochastic integral and work out the resulting expectation by using the conditioning as in Lemma \ref{lem:random_sum_cond_exp}:
	\begin{align}
		&\E \left[4 \sum_{\sigma < t_i \le \tau} (P_{t_i-} - P_{\sigma})^2 (\Delta P_{t_i})^2 \Bigg| \F_{\sigma} \right] = 4\E \left[ \int_\sigma^\tau (P_{r-} - P_{\sigma})^2 d \rIV(\sigma,r) \Bigg| \F_{\sigma}\right]\\ 
		&\quad = 4 \E \Bigg[ (P_{\tau} - P_{\sigma})^2 \rIV(\sigma,\tau) - \int_\sigma^\tau \rIV(\sigma,r-) d( (P_{\cdot} - P_{\sigma})^2)_r -\left[ \rIV(\sigma,\cdot), (P_{\cdot} - P_{\sigma})^2\right]_\tau  \Bigg| \F_{\sigma}\Bigg] \\
		&\quad = 4 \E \Bigg[ (P_{\tau} - P_{\sigma})^2 \rIV(\sigma,\tau) - \int_\sigma^\tau \rIV(\sigma,r-) d[P]_r - \int_\sigma^\tau \varsigma^2(r) d[P]_r   \Bigg| \F_{\sigma}\Bigg] \\
		&\quad = 4 \E \Bigg[ (P_{\tau} - P_{\sigma})^2 \rIV(\sigma,\tau) - \frac{1}{2} \left( \rIV(\sigma, \tau)^2 - \int_\sigma^\tau \varsigma^4(r)dN(r) \right) - \int_\sigma^\tau \varsigma^4(r)dN(r) \Bigg| \F_{\sigma}\Bigg] \\
		&\quad = 2 \E\left[\rIV(\sigma,\tau)(2(P_\tau - P_\sigma)^2 - \rIV(\sigma,\tau)) | \mathcal{F}_{\sigma} \right] - 2 \E \left[ \IQ(\sigma, \tau)| \mathcal{F}_{\sigma} \right].
	\end{align}
	In the second equality we use that $\E\left[\int_\sigma^\tau \rIV(\sigma, r-) P_{r-} dP_r |\mathcal{F}_{\sigma} \right] = 0$, which follows from Lemma \ref{lem:random_sum_cond_exp}, and the integrability can be shown to follow from Assumption \ref{ass:filtration} in which we assume that $\E[[P]_T^2]$ and $\E[\rIV(0,T)^2]$ are finite. 
\end{proof}

\begin{proof}[\bf \hypertarget{proof:MSE_jump_based}{Proof of Corollary \ref{cor:MSE_jump_based}}]
	The contribution of the sampling scheme to the MSE is only through the first term in Equation \eqref{eq:MSE_computation}, which is computed in Lemma \ref{lem:square_incr_RV_QV}. Instead of conditioning on $\mathcal{F}_{\tau_{j-1}}$ in the first term in \eqref{eq:MSE_computation}, we now choose to condition on $\mathcal{F}^{\lambda,\varsigma,N}_{\tau_j}$ and we can still apply Lemma \ref{lem:random_sum_cond_exp}, because the sampling scheme $\tau$ is $\FilintN$-measurable such that we have that 
	\begin{align}
		&\E\left[\sum_{j=1}^M \E \left[ \sum_{\tau_{j-1}<t_i\le\tau_j} A_i^2(\tau_{j-1}) \Bigg| \mathcal{F}_{\tau_{j-1}} \right] \right] \\
		&\quad = 2\E\left[\sum_{j=1}^M \E \left[  \rIV(\tau_{j-1}, \tau_{j})\left(2\left(P_{\tau_{j}} - P_{\tau_{j-1}}\right)^2 - \rIV(\tau_{j}, \tau_{j-1})\right) \Big| \mathcal{F}^{\lambda, \varsigma, N}_{\tau_{j}} \right] \right]\\
		&\qquad - 2\E\left[\sum_{j=1}^M \E \left[  \IQ(\tau_{j}, \tau_{j-1}) \Big| \mathcal{F}^{\lambda, \varsigma, N}_{\tau_{j}} \right] \right]\\
		&\quad = 2\E\left[\sum_{j=1}^M \rIV(\tau_{j-1}, \tau_{j}) \left( 2 \E \left[  \left(P_{\tau_{j}} - P_{\tau_{j-1}}\right)^2 \Big| \mathcal{F}^{\lambda, \varsigma, N}_{\tau_{j}} \right] - \rIV(\tau_{j}, \tau_{j-1})\right)  \right] - 2\E\left[ \IQ(0, T) \right].
	\end{align}
	Under the assumption that $U_i^2$ for any $i=1,...,N(T)$ is independent of the paths of $\lambda$, $\varsigma$ and $N$, we have that
	
	\begin{align}
		\E \left[  \left(P(\tau_{j}) - P(\tau_{j-1})\right)^2 \Big| \mathcal{F}^{\lambda, \varsigma, N}_{\tau_{j}} \right] & = \rIV(\tau_{j-1}, \tau_j) + \E \left[  \left(P(\tau_{j}) - P(\tau_{j-1})\right)^2 - ([P]_{\tau_j} - [P]_{\tau_{j-1}})  \Big| \mathcal{F}^{\lambda, \varsigma, N}_{\tau_{j}} \right]
	\end{align}
	The result now follows by applying Lemma \ref{lem:random_sum_cond_exp} once more.
\end{proof}

\begin{proof}[\bf \hypertarget{proof:MSE_intensity_based}{Proof of Corollary \ref{cor:MSE_intensity_based}}]
	By the It\^{o} isometry for the stochastic integral we have that
	\begin{align}
		\E\left[ \rIV(\tau_{j-1}, \tau_j)^2 | \mathcal{F}_{\tau_{j-1}} \right] & = \E \left[ \IV(\tau_{j-1}, \tau_j)^2 + \IQ(\tau_{j-1}, \tau_j) + 2 \IV(\tau_{j-1}, \tau_j) \int_{\tau_{j-1}}^{\tau_j}\varsigma^2(r)d\tilde{N}(r) \Bigg| \mathcal{F}_{\tau_{j-1}} \right].
	\end{align}
	Using this result and applying Lemma \ref{lem:random_sum_cond_exp} to the MSE result in \eqref{eq:MSE_realized_sampling}, we find for an $\Filint$-adapted sampling scheme that
	\begin{align}
		2\E \left[\sum_{j=1}^M \rIV(\tau_{j-1}, \tau_{j})^2 \right] & = 2 \E \left[\sum_{j=1}^M \IV(\tau_{j-1}, \tau_{j})^2 \right] + 2 \E \left[\IQ(0,T) \right] + \E \left[\tilde{R}(\boldsymbol{\tau}) \right],
	\end{align}
	which implies the MSE result in \eqref{eq:MSE_intensity}.
\end{proof}

\begin{defn}  \label{def:M_of_A}
	For any random sequence $A = \{A_1,A_2,...\}$ taking values in $\mathbb{R}_{\ge0}^\infty$ such that eventually $A_j=0$ for large enough $j$, we define $M(A)=\min(m\in\mathbb{N}:A_j=0 \text{ for all } j > m)$.
\end{defn}

\begin{lem} \label{lem:min_squared_sum}
	Given two constants $\bar{M}\in \mathbb{N}$ and $Q\in\mathbb{R}_{>0}$, denote by $\mathcal{A}(\bar{M}, Q)$ the collection of all random sequences $A$ taking values in $\mathbb{R}_{\ge0}^\infty$ such that $M(A) > 0$ almost surely and $\E[M(A)]=\bar{M}$ and $\E\left[ \sum_{j=1}^{M(A)} A_j\right] = Q$, where $M(A)$ is defined in Definition \ref{def:M_of_A}. The minimization 
	
	\begin{equation}
		\min_{A\in \mathcal{A}(\bar{M}, Q)} \E\left[ \sum_{j=1}^{M(A)} A_j^2\right] \label{eq:minimization}
	\end{equation}
	is attained by the deterministic sequence $A^*$ such that $A^*_j = \frac{Q}{\bar{M}}$ for $j\le \bar{M}$ and $A^*_j=0$ for $j > \bar{M}$.
\end{lem}

\begin{proof}[\bf Proof of Lemma \ref{lem:min_squared_sum}]	
	A lower bound for the minimization objective in \eqref{eq:minimization} follows by applying the Cauchy-Schwarz inequality twice:
	\begin{align}
		\E\left[ \sum_{j=1}^{M(A)} A_j^2\right] & \ge \E\left[ \frac{\left(\sum_{j=1}^{M(A)} A_j\right)^2}{M(A)}\right]\\
		& \ge \frac{Q^2}{\E[M(A)]}\label{eq:lower_bound_cs}.
	\end{align}
	The first application of the Cauchy-Schwarz inequality is for the standard $l^2$ inner product for square-summable sequences and the second inequality is for the inner product for random variables given by $\E[XY]$, where we choose $X=\frac{\sum_{j=1}^{M(A)} A_j}{\sqrt{M(A)}}$ and $Y=\sqrt{M(A)}$. It is straightforward to show that $A^*$ satisfies the required conditions and that the lower bound in \eqref{eq:lower_bound_cs} is reached for $A^*$.
\end{proof}

\subsection{Proofs for Appendix~\ref{sec:EfficientSamplingUptoT}}
\label{subsec:ProofsEfficientSamplingUptoT}

\begin{proof}[\bf \hypertarget{proof:bias}{Proof of Theorem \ref{prop:bias}}]
	Let $\left\{t_i\right\}_{i=n}^{m}$ with $t_n<\ldots<t_m,  n,m\in\N$ and $n\leq m$ denote the sequence of arrival times in the interval $(\tau_{j-1},\tau_j]$. 
	Then, it holds that
	\begin{align}
		\begin{aligned}
			\label{eq:binom_sum}
			&\E\left[ r_{j}^2 \middle| \FintN_T \right]
			= \E\left[ \left(\int_{\tau_{j-1}}^{\tau_j} \varsigma\left(r\right)dB\left(N\left(r\right)\right)\right)^2 \middle| \FintN_T \right] 
			= \E\left[ \left(\sum_{\tau_{j-1}  <t_i\leq \tau_j } \varsigma\left(t_i\right) U_i\right)^2 \middle| \FintN_T \right] \\
			=\;&\E\left[ \left(\sum_{t_n\leq t_i\leq t_m} \varsigma \left(t_i\right)U_i\right)^2 \middle| \FintN_T\right] 
			= \E\left[ \left(\sum_{t_n\leq t_i\leq t_{m-1}} \varsigma\left(t_i\right) U_i+\varsigma\left(t_m\right)U_m\right)^2 \middle| \FintN_T \right] \\
			=\; &\E\left[\left(\sum_{t_n\leq t_i\leq t_{m-1}} \varsigma\left(t_i\right)U_i\right)^2  +\left(\varsigma\left(t_m\right)U_m\right)^2 +2\left(\sum_{t_n\leq t_i\leq t_{m-1}} \varsigma\left(t_i\right)U_i\right)\varsigma\left(t_m\right)U_m\middle|\FintN_T\right]. 
		\end{aligned}
	\end{align}
	From Assumption \eqref{ass:filtration} and the independence in Assumption \eqref{ass:Independence}, we obtain $U_i \mid \FintN_T \sim \mathcal{N}\left(0,1\right)$ and $U_i \mid \FintN_T\vee\mathcal{F}_{t_i-} \sim \mathcal{N}\left(0,1\right)$. Using the predictability of $\varsigma$ and the tower property, noting that $ \mathcal{F}^{\lambda, \varsigma}_T \subset \left( \FintN_T \vee \mathcal{F}_{t_m-} \right)$, it follows that 
	\begin{align*}
		&\E\left[\left(\sum_{t_n\leq t_i\leq t_{m-1}} \varsigma\left(t_i\right)U_i\right)\varsigma\left(t_m\right)U_m\middle|\FintN_T\right] \\ 		
		& = \E\left[ \E\left[ \left(\sum_{t_n\leq t_i\leq t_{m-1}} \varsigma\left(t_i\right)U_i\right)\varsigma\left(t_m\right)U_m \middle| \FintN_T \vee \mathcal{F}_{t_m-} \right]  \middle|\FintN_T\right] \\ 
		& = \E\left[ \left(\sum_{t_n\leq t_i\leq t_{m-1}} \varsigma\left(t_i\right)U_i\right) \varsigma\left(t_m\right) \E\left[ U_m \middle| \FintN_T \vee \mathcal{F}_{t_m-} \right]  \middle|\FintN_T\right] = 0,
	\end{align*}
	and thus, the third term in the last row of \eqref{eq:binom_sum} is zero. 
	Similarly,
	\begin{align*}
		\E\left[ \left(\varsigma\left(t_m\right)U_m\right)^2 \middle| \FintN_T \right]
		& = \E\left[ \E\left[ \left(\varsigma\left(t_m\right)U_m\right)^2 \middle| \FintN_T \vee \mathcal{F}_{t_m-} \right] \middle|\FintN_T\right] \\
		& = \E\left[ \varsigma^2\left(t_m\right) \E\left[ U_m^2 \middle| \FintN_T \vee \mathcal{F}_{t_m-} \right] \middle|\FintN_T\right] \\
		& = \E\left[ \varsigma^2\left(t_m\right) \middle|\FintN_T\right].
	\end{align*}
	Repeatedly splitting up the squared sum in \eqref{eq:binom_sum} hence yields
	\begin{align*}
		\E\left[ r_{j}^2\middle|\FintN_T\right] &=\E\left[\sum_{t_n\leq t_i\leq t_{m}} \varsigma^2\left(t_i\right)\middle|\FintN_T\right] 
		= \E\left[\sum_{\tau_{j-1}< t_i\leq \tau_j } \varsigma^2\left(t_i\right)\middle|\FintN_T\right]
		\\ &=\E\left[\int_{\tau_{j-1} }^{\tau_j }\varsigma^2\left(r\right)dN\left(r\right)\middle|\FintN_T\right]\\
		&=\int_{\tau_{j-1} }^{\tau_j }\varsigma^2\left(r\right)dN\left(r\right).
	\end{align*} 
	Summing up, we get that
	\begin{align*} \E\left[ \RV(\btau )\middle|\FintN_T\right] &=
		\E\left[ \sum_{j=1}^{M(T)} r_{j}^2\middle|\FintN_T\right] 
		= \int_{0}^{T}\varsigma^2\left(r\right)dN(r)  =  \rIV\left(0,T\right).
	\end{align*} 
	
	Given the additional Assumption \eqref{ass:intensity_restrictions} we use the Doob-Meyer decomposition of the jump process into the zero-mean martingale $\Tilde{N}(t)$ w.r.t $\mathcal{F}_t$ and the $\mathcal{F}_t$-predictable compensator $\int_0^t\lambda\left(r\right)dr$. For $\Tilde{N}\left(t\right)=N\left(t\right)-\int_0^t\lambda\left(r\right)dr$, \citet[Lemma L3, page 24]{bremaud1981} yields that $\int_0^T\varsigma^2(r) d\Tilde{N}(r)$ also has a zero-mean conditioning on $\Fint_T$.\footnote{With the more general jump process, the information set $\Fint_T$ could also contain the information of $N$ which would result in $\Tilde{N}$ being $\Fint_T$-measurable. The conditional expectation wouldn't be zero anymore.}
	Hence with the the tower property, we obtain:
	\begin{align*} 
		\E\left[ \RV(\btau )\middle|\Fint_T\right] &= \E\left[ \E\left[\RV(\btau )\middle|\FintN_T\right]\middle|\Fint_T\right]\\
		&= \E\left[\rIV(0,T)\middle|\Fint_T\right]= \E\left[\int_{0}^{T}\varsigma^2\left(r\right)dN(r)\middle|\Fint_T\right]\\
		&= \E\left[\int_0^T\varsigma^2\left(r\right)d\Tilde{N}\left(r\right)\middle|\Fint_T\right] + \E\left[\int_0^T\varsigma^2\left(r\right)\lambda\left(r\right)dr\middle|\Fint_T\right] \\
		&=\int_0^T\varsigma^2\left(r\right)\lambda\left(r\right)dr=\IV(0,T),
	\end{align*} 
	which finishes this proof.\\ \qedhere
\end{proof}

\begin{proof}[\bf \hypertarget{proof:MSE_tick}{Proof of Theorem \ref{prop:MSE_tick}}] 
	We begin by proving part (a):
	Given Assumptions \eqref{ass:filtration}, \eqref{ass:Independence} and \eqref{ass:Moments}, we get that
	\begin{align}
		\begin{aligned}
		& \E\left[\left(\RV(\btau )-\IV(0,T)\right)^2\middle|\FintN_T\right] \\
		&=\E\left[\left(\RV(\btau )-\rIV(0,T)+\rIV(0,T)-\IV(0,T)\right)^2\middle|\FintN_T\right]\\
		&=\E\left[\left(\RV(\btau )-\rIV(0,T)\right)^2\middle|\FintN_T\right]
		\\&\qquad+2\E\left[\left(\RV(\btau )-\rIV(0,T)\right)\left(\rIV(0,T)-\IV(0,T)\right)\middle|\FintN_T\right]
		\\&\qquad+\E\left[\left(\rIV(0,T)-\IV(0,T)\right)^2\middle|\FintN_T\right]\\
		&=\E\left[\left(\RV(\btau)-\rIV(0,T)\right)^2\middle|\FintN_T\right]+\left(\rIV(0,T)-\IV(0,T)\right)^2.
		\label{eq:MSEdecomp}
		\end{aligned}
	\end{align}
	The mixed term disappears since $\E\left[\RV(\btau)-\rIV(0,T)\middle|\FintN_T\right]=0$ and $\left(\rIV(0,T)-\IV(0,T)\right)$ is $\FintN_T$-measurable. 
	We proceed by calculating the first term. From the conditional unbiasedness in Theorem \ref{prop:bias}, it follows that
	\begin{align} 
		\begin{aligned} 
			\label{eq:IVdecomp}
			&\E\left[ \left(\RV(\btau)-\rIV\left(0,T\right)\right)^2\middle|\FintN_T\right] 
			\\ &~~~~~= \E\left[ \left(\RV(\btau)\right)^2-2\RV(\btau)\rIV\left(0,T\right)+\rIV\left(0,T\right)^2\middle|\FintN_T\right] 
			\\ &~~~~~= \E\left[ \left(\RV(\btau)\right)^2\middle|\FintN_T\right] -\rIV\left(0,T\right)^2.
		\end{aligned}
	\end{align}
	Applying the multinomial theorem, we get
	\begin{align}
		\label{eq:RV^2}
		\big( \RV(\btau) \big)^2  &= \left( \sum_{j=1}^{M(T)} r_{j}^2 \right)^2 
		=  \sum_{j=1}^{M(T)} r_{j}^4 +  \sum_{\substack{j,k = 1 \\ j \neq k}}^{M(T)} r_{j}^2 r_{k}^2. 
	\end{align}
	We now split the proof into three parts:
	
	\proofpart{1} We begin by analyzing the first term in \eqref{eq:RV^2}. 
	Let $\left\{t_i\right\}_{i=n}^{m}$ with $t_n<\ldots<t_m,  n,m\in\N$ and $n\leq m$ denote the series of jump times of the counting process $N$ in the interval $(\tau_{j-1},\tau_j]$. 
	By subsequently detaching the smallest term in the sums to the fourth power and applying the binomial theorem, we get for all $j=1,\ldots,M(T)$ that
	\begin{align} 
		\E\left[ r_{j}^4 \middle| \FintN_T\right] 
		&=  \E\left[ \left(\sum_{\tau_{j-1}< t_i\leq \tau_j} \varsigma\left(t_i\right)U_i\right)^4 \middle| \FintN_T\right] \nonumber \\
		&=  \E\left[ \left(\sum_{t_{n+1}\leq t_i\leq t_{m}} \varsigma\left(t_i\right)U_i\right)^4 + \varsigma^4\left(t_n\right)U^4_n\right. \nonumber \\
		&~~~~~~~\left.+4\left(\sum_{t_{n+1}\leq t_i\leq t_{m}} \varsigma\left(t_i\right)U_i\right)^3 \varsigma\left(t_n\right)U_n\right. \nonumber \\
		&~~~~~~~\left.+6\left(\sum_{t_{n+1}\leq t_i\leq t_{m}}\varsigma\left(t_i\right)U_i\right)^2\varsigma^2\left(t_n\right)U^2_n\right. \nonumber \\
		&~~~~~~~\left.+4\left(\sum_{t_{n+1}\leq t_i\leq t_{m}}\varsigma\left(t_i\right)U_i\right)\varsigma^3\left(t_n\right)U^3_n\middle|\FintN_T\right]\nonumber
		\\ &= \E\left[ 3\sum_{\tau_{j-1}< t_i\leq \tau_j}\varsigma^4\left(t_i\right)+6\sum_{\tau_{j-1}< t_i< \tau_j}\sum_{t_{i+1}\leq t_h\leq \tau_{j}}\varsigma^2\left(t_h\right)\varsigma^2\left(t_i\right)\middle|\FintN_T\right] \nonumber
		\\ &=\E\left[ 3\left(\sum_{\tau_{j-1}< t_i\leq \tau_j}\varsigma^2\left(t_i\right)\right)^2\middle|\FintN_T\right] \nonumber
		\\ &=3\rIV(\tau_{j-1},\tau_j)^2,         
		\label{eq:Erj4}
	\end{align}
	where we use Assumption \eqref{ass:Independence}, and especially, the moment structure of $U_i \mid \FintN_T \sim \mathcal{N}\left(0,1\right)$ resulting from Assumption \eqref{ass:filtration} and \eqref{ass:Independence}. 
	
	\proofpart{2} 
	We continue by simplifying the second term in \eqref{eq:RV^2}. 
	For the non-overlapping intervals $(\tau_{j-1},\tau_{j}]$ and $(\tau_{k-1},\tau_k]$ for $j\neq k$, it holds that
	\begin{align} 
		\begin{aligned} 
		\E\left[ r_{j}^2r_{k}^2\middle|\FintN_T\right]  &= \E\left[ \left(\sum_{\tau_{j-1}<t_i\leq \tau_j}\varsigma\left(t_i\right)U_i\right)^2\left(\sum_{\tau_{k-1}< t_i\leq \tau_{k}}\varsigma\left(t_i\right)U_i\right)^2\middle|\FintN_T\right]
		\\ &= \E\left[ \left(\sum_{\tau_{j-1}< t_i\leq \tau_j}\varsigma^2\left(t_i\right)\right)\left(\sum_{\tau_{k-1}< t_i\leq \tau_{k}}\varsigma^2\left(t_i\right)\right)\middle|\FintN_T\right] 
		\\ &= \E\left[ \left(\int_{\tau_{j-1}}^{\tau_j}\varsigma^2\left(r\right)dN\left(r\right)\right)\left(\int_{\tau_{k-1}}^{\tau_{k}}\varsigma^2\left(r\right)dN\left(r\right)\right)\middle|\FintN_T\right] 
		\\ &= \left(\int_{\tau_{j-1}}^{\tau_j}\varsigma^2\left(r\right)dN\left(r\right)\right)\left(\int_{\tau_{k-1}}^{\tau_{k}}\varsigma^2\left(r\right)dN\left(r\right)\right) 
		\\ &= \rIV(\tau_{j-1},\tau_j)\rIV(\tau_{k-1},\tau_k),
		\label{eq:Erj2rk2}
		\end{aligned}
	\end{align} 
	due to the independence of $\varsigma\left(t_i\right)$ and $U_i$. 

	\proofpart{3} 
	We proceed by inserting the results from \eqref{eq:Erj4} and \eqref{eq:Erj2rk2} into equation \eqref{eq:RV^2} and summing them up according to \eqref{eq:IVdecomp}. We get
	\begin{align*} 
		\E\left[ \left(\RV(\btau)-\rIV(0,T)\right)^2  \middle| \FintN_T\right]&= \E\left[ \left(\RV(\btau)\right)^2 \middle| \FintN_T\right] - \rIV\left(0,T\right)^2 \\
		= \, &3 \sum_{j=1}^{M(T)}  \rIV\left(\tau_{j-1},{\tau_j}\right)^2\\
		&~~~~+\sum_{\substack{j,k = 1 \\ j \neq k}}^{M}  \rIV\left(\tau_{j-1},{\tau_j}\right)\rIV\left({\tau_{k-1}},{\tau_{k}}\right) - \rIV\left(0,T\right)^2 \nonumber\\
		= \, &2 \sum_{j=1}^{M(T)}  \rIV\left(\tau_{j-1},{\tau_j}\right)^2+\rIV\left(0,T\right)^2 - \rIV\left(0,T\right)^2 \\
		= &2\sum_{j=1}^{M(T)}\rIV\left(\tau_{j-1},{\tau_j}\right)^2.
	\end{align*}
	Inserting this result into \eqref{eq:MSEdecomp} then yields the claim (a):
	\begin{align}
		\E\left[\left(\RV(\btau)-\IV(0,T)\right)^2\middle|\FintN_T\right]&=\left(\rIV(0,T)-\IV(0,T)\right)^2+2\sum_{j=1}^{M(T)}\rIV(\tau_{j-1},\tau_j)^2.
	\end{align}
	
	We proceed to show the claim (b): 
	\noindent
	Let Assumptions \eqref{ass:filtration}--\eqref{ass:intensity_restrictions} hold. We calculate the conditional MSE of $\RV(\btau)$ on $\Fint_T$ by taking the conditional expectation of the result in claim (a). With the tower property the following holds:
	\begin{align}
		\begin{aligned} 
		& \E\left[\left(\RV(\btau)-\IV(0,T)\right)^2\middle|\Fint_T\right] \\
		&=\E\left[\E\left[\left(\RV(\btau)-\IV(0,T)\right)^2\middle|\FintN_T\right]\Fint_T\right]\\
		&=\E\left[\left(\rIV(0,T)-\IV(0,T)\right)^2+2\sum_{j=1}^{M(T)}\rIV(\tau_{j-1},\tau_j)^2\middle|\Fint_T\right]\\
		&=\E\left[\left(\rIV(0,T)-\IV(0,T)\right)^2\middle|\Fint_T\right] + 2\sum_{j=1}^{M(T)}\E\left[\rIV(\tau_{j-1},\tau_j)^2\middle|\Fint_T\right].
		\label{eq:MSEdecomp2}
		\end{aligned} 
	\end{align}
	We begin by calculating the first term. Note that the following result only applies to sampling schemes $\btau$ that are $\Fint_T$-measurable.
	We denote the compensated jump process by $\Tilde{N}(t) = N(t)-\int_0^t\lambda(r)dr$, and get
	\begin{align*}
		\E\left[\rIV(\tau_{j-1},\tau_j)^2\middle|\Fint_T\right] &= \E\left[\left(\int_{\tau_{j-1}}^{\tau_j}\varsigma^2\left(r\right)dN\left(r\right)\right)^2\middle|\Fint_T\right]
		\\ &= \E\left[\left(\int_{\tau_{j-1}}^{\tau_j}\varsigma^2\left(r\right)d\Tilde{N}\left(r\right)+\int_{\tau_{j-1}}^{\tau_j}\varsigma^2\left(r\right)\lambda\left(r\right)dr\right)^2\middle|\Fint_T\right]
		\\ &= \E\left[\left(\int_{\tau_{j-1}}^{\tau_j}\varsigma^2\left(r\right)d\Tilde{N}\left(r\right)\right)^2+2\int_{\tau_{j-1}}^{\tau_j}\varsigma^2\left(r\right)d\Tilde{N}\left(r\right)\int_{\tau_{j-1}}^{\tau_j}\varsigma^2\left(r\right)\lambda\left(r\right)dr \right.
		\\ &~~~~~~~~~~+\left.\left(\int_{\tau_{j-1}}^{\tau_j}\varsigma^2\left(r\right)\lambda\left(r\right)dr\right)^2\middle|\Fint_T\right]
		\\ &= \E\left[\left(\int_{\tau_{j-1}}^{\tau_j}\varsigma^2\left(r\right)d\Tilde{N}\left(r\right)\right)^2\middle|\Fint_T\right]
		\\ &~~~~~~~~~~+2\E\left[\int_{\tau_{j-1}}^{\tau_j}\varsigma^2\left(r\right)d\Tilde{N}\left(r\right)\middle|\Fint_T\right] \IV\left(\tau_{j-1},\tau_j\right)
		\\ &~~~~~~~~~~+\IV\left(\tau_{j-1},\tau_j\right)^2.
	\end{align*}
	The second term above is zero due to the zero-mean martingale property of $\int_0^t\varsigma^2\left(r\right)d\Tilde{N}\left(r\right)$ w.r.t $\Fint_T$ based on Assumption \eqref{ass:intensity_restrictions} (see \citet[Lemma L3, page 24]{bremaud1981}).\footnote{The martingale property is w.r.t. the filtration $\mathcal{G}_t:=\Fint_T\vee \mathcal{F}_t$, i.e. with respect to the filtration of the smallest $\sigma$-algebras containing both $\Fint_T$ and $\mathcal{F}_t$. We specifically need the zero-mean property which is fulfilled in case of a doubly stochastic Poisson process since the trades arrive independently and are can not be recovered from the evolution of $\lambda$.} 
	To further simplify the first term, we need the quadratic variation $\big[\Tilde{N}\big]_t$ since by the It\^{o}'s isometry for martingales it holds that
	$$
	\E\left[\left(\int_{\tau_{j-1}}^{\tau_j}\varsigma^2\left(r\right)d\Tilde{N}\left(r\right)\right)^2\middle|\Fint_T\right]=\E\left[\int_{\tau_{j-1}}^{\tau_j}\varsigma^4\left(r\right)d\left[\Tilde{N}\right]_r\middle|\Fint_T\right].
	$$
	Further let $0=s_0<s_1<\ldots<s_n=t$ denote a partition of $[0,t]$ such that 
	$$\max_{1\leq k\leq n}\left|s_{k}-s_{k-1}\right|\to0$$ as $n\to\infty$.
	Then, using that $N\left(t\right)$ is a pure jump process and that $t\mapsto\int_{0}^{t}\lambda\left(r\right)dr$ is continuous, we have that 
	\begin{align*} 
		\left[\Tilde{N}\right]_t &= \plim_{n\to\infty}\sum_{k=1}^n\left(\Tilde{N}\left(s_k\right)-\Tilde{N}\left(s_{k-1}\right)\right)^2
		\\ &= \plim_{n\to\infty}\sum_{k=1}^n\left(N\left(s_k\right)-N\left(s_{k-1}\right)+\int_{s_{k-1}}^{s_k}\lambda\left(r\right)dr\right)^2
		\\ &= \plim_{n\to\infty} \sum_{k=1}^n \left\{\big(N\left(s_k\right)-N\left(s_{k-1}\right)\big)^2 + \left(\int_{s_{k-1}}^{s_k}\lambda\left(r\right)dr\right)^2\right\}
		\\ &= \left[N\right]_t+ \left[\int_{0}^{\cdot}\lambda\left(r\right)dr\right]_t
		= \sum_{0<s\leq t}\left(N\left(s\right)-N\left(s-\right)\right)^2
		\\ &= \sum_{0<s\leq t}\left(N\left(s\right)-N\left(s-\right)\right)
		= N\left(t\right).
	\end{align*}
	Hence, it follows that 
	\begin{align*}
		\E\left[\left(\int_{\tau_{j-1}}^{\tau_j}\varsigma^2\left(r\right)d\Tilde{N}\left(r\right)\right)^2\middle|\Fint_T\right] &= \E\left[\int_{\tau_{j-1}}^{\tau_j}\varsigma^4\left(r\right)dN\left(r\right)\middle|\Fint_T\right]
		\\ &= \E\left[\int_{\tau_{j-1}}^{\tau_j}\varsigma^4\left(r\right)d\Tilde{N}\left(r\right)+\int_{\tau_{j-1}}^{\tau_j}\varsigma^4\left(r\right)\lambda\left(r\right)dr\middle|\Fint_T\right]
		\\ &= \E\left[\int_{\tau_{j-1}}^{\tau_j}\varsigma^4\left(r\right)\lambda\left(r\right)dr\middle|\Fint_T\right]
		\\ &= \IQ\left(\tau_{j-1},\tau_j\right),
	\end{align*}
	where we apply the martingale property of $\int_0^t\varsigma^4\left(r\right)d\Tilde{N}\left(r\right)$. We again use the assumption that the sampling scheme $\btau$ is $\Fint_T$-measurable here.
	
	The first term in \eqref{eq:MSEdecomp2} now simplifies the following way:
	\begin{align*}
		\E\left[\left(\rIV(0,T)-\IV(0,T)\right)^2\middle|\Fint_T\right]&=\E\left[\left(\Stil-\Sno\right)^2\middle|\Fint_T\right]
		\\&=\E\left[\left(\int_0\varsigma^2\left(r\right)d\Tilde{N}\left(r\right)\right)^2\middle|\Fint_T\right]
		\\&=\IQ\left(0,T\right).     
	\end{align*}
	For the second term in \eqref{eq:MSEdecomp2} we accordingly find
	\begin{align*}
		2\sum_{j=1}^{M(T)}\E\left[\rIV(\tau_{j-1},\tau_j)^2\middle|\Fint_T\right]&=2\sum_{j=1}^{M(T)}\IV(\tau_{j-1},\tau_j)^2+2\sum_{j=1}^{M(T)}\IQ\left(\tau_{j-1},\tau_j\right)
		\\&=2\sum_{j=1}^{M(T)}\IV(\tau_{j-1},\tau_j)^2+2\IQ(0,T).
	\end{align*}
	Summing the results up yields claim (b) and finishes the proof:
	\begin{align*}
		\E\left[\left(\RV(\btau)-\IV(0,T)\right)^2\middle|\Fint_T\right]&=3\IQ(0,T)+2\sum_{j=1}^{M(T)}\IV(\tau_{j-1},\tau_j)^2.
	\end{align*}
\end{proof}

\pagebreak 
\section{Approximating the remainder term in Corollary~\ref{cor:MSE_jump_based}}
\label{sec:RemainderTerms}

Consider the remainder term from Corollary~\ref{cor:MSE_jump_based} given by
\begin{equation*}
    R(\boldsymbol{\tau}) = 4\sum_{j=1}^M \left((P_{\tau_{j}}-P_{\tau_{j-1}})^2-([P]_{\tau_{j}}-[P]_{\tau_{j-1}})\right) \text{rIV}(\tau_{j-1},\tau_{j}).
\end{equation*}
We start to only consider the initial term of the sum above and refer to the first sampling time as $\tau$.
Then, we have that
\begin{align}
    (P_\tau^2-[P]_\tau ) \cdot \text{rIV}(0,\tau) & = \sum_{0\le t_i < t_j \le \tau} 2 \varsigma(t_i) \varsigma(t_j) U_i U_j \sum_{0 \le t_k \le \tau} \varsigma^2(t_k) \nonumber \\
    & = \sum_{0\le t_i < t_j \le \tau} \sum_{0 \le t_k \le \tau}2 \varsigma(t_i) \varsigma(t_j) U_i U_j  \varsigma^2(t_k), 
    \label{eq:remainder_summation}
\end{align}
where we used that $P(t) = \sum_{0\le t_j \le t} \varsigma(t_j)U_j$. 
We will now see that in expectation, many of the terms in \eqref{eq:remainder_summation} are zero. 
Namely, for all $t_k$ such that $t_k\le t_j$, we can condition on $\Fil_{t_j-}$ (and apply Lemma \ref{lem:random_sum_cond_exp}) such that 
\begin{align*}
    \E[(P_\tau^2-[P]_\tau ) \cdot \text{rIV}(0,\tau)] & =  \E \left[\sum_{0\le t_i < t_j \le \tau} \sum_{t_j < t_k \le \tau}2 \varsigma(t_i) \varsigma(t_j) U_i U_j  \varsigma^2(t_k) \right].
\end{align*}
This last expression shows that the dependence between the tick variance after the jump time $t_j$, i.e., $\varsigma^2(t_k)$, and the product $\varsigma(t_i) \varsigma(t_j) U_i U_j$ is important. 

For the sake of argument, suppose that there exists a $j' \in \mathbb{N}$ such that for each $k\ge j + j'$, the tick variance $\varsigma^2(t_{k})$ is independent from $\varsigma(t_j)$ and $U_j$. 
This characterizes that the dependence of the $\varsigma$ process on its past and on the past of the price-changes dies out after some time (similar to $k$-dependence or $\alpha$-mixing). 
Then, we can also condition on $\Fil_{t_j-}$ and use the independence $\varsigma(t_k) \perp \varsigma(t_j), U_j$, if $t_k\ge t_{j+j'}$, as well as the independence of $\varsigma(t_i) \perp \varsigma(t_j), U_j$, if $t_i\le t_{j-j'}$, such that
\begin{align}
    \E[(P_\tau^2-[P]_\tau ) \cdot \text{rIV}(0,\tau)] & =  \E \left[\sum_{0 < t_j \le \tau} \left\{ \sum_{t_{j-j'} < t_i < t_j} 2 \varsigma(t_i) \varsigma(t_j) U_i U_j \sum_{t_j < t_k \le t_{j+j'} \wedge \tau} \varsigma^2(t_k) \right\}\right].
    \label{eqn:RemainderApproxDependence}
\end{align}
For many of the $t_j$'s in the above sum, the terms do not depend on the sampling time $\tau$. This is only the case if $t_j$ is such that $t_{j+j'}\ge \tau$. So we can approximate 
\begin{equation}
    \E[(P_\tau^2-[P]_\tau ) \cdot \text{rIV}(0,\tau)] \approx \E \left[\sum_{0 < t_j \le \tau} \left\{ \sum_{t_{j-j'} < t_i < t_j} 2 \varsigma(t_i) \varsigma(t_j) U_i U_j \sum_{t_j < t_k \le t_{j+j'}} \varsigma^2(t_k) \right\}\right],
    \label{eqn:RemainderApproxSparse}
\end{equation}
where the approximation is accurate if the sampling time $\tau$ is large in comparison to the time it takes for the dependence between the tick variance and the past price changes and the past tick variance to become negligible. 

Generalizing the previous argument to all sampling points, we get the following approximation for the entire remainder term,
\begin{equation}
    \E[R(\boldsymbol{\tau})] \approx \E \left[\sum_{0 < t_j \le T} \left\{ \sum_{t_{j-j'} < t_i < t_j} 2 \varsigma(t_i) \varsigma(t_j) U_i U_j \sum_{t_j < t_k \le t_{j+j'}} \varsigma^2(t_k) \right\}\right].
    \label{eqn:RemainderApprox}
\end{equation}
Most importantly, the approximation in \eqref{eqn:RemainderApprox} does not depend on the employed sampling scheme such that we conjecture that the efficiency result of Theorem~\ref{thm:EfficientSampling} (b) continues to hold under mild forms of dependencies, as can be seen from our simulation results.
Notice again that the accuracy of the above approximations depends on a dependence structure that dies out quick enough (in \eqref{eqn:RemainderApproxDependence}) and a relatively sparse sampling frequency (in \eqref{eqn:RemainderApproxSparse}).
Although some arguments in this section are informal, they offer valuable intuition and can serve as a foundation for more rigorous mathematical analysis in future research.

\pagebreak 
\section{Additional Empirical Results}
\label{sec:AdditionalResults}

$ $
\begin{figure}[h!] 
	\centering
	\includegraphics[width=1\textwidth]{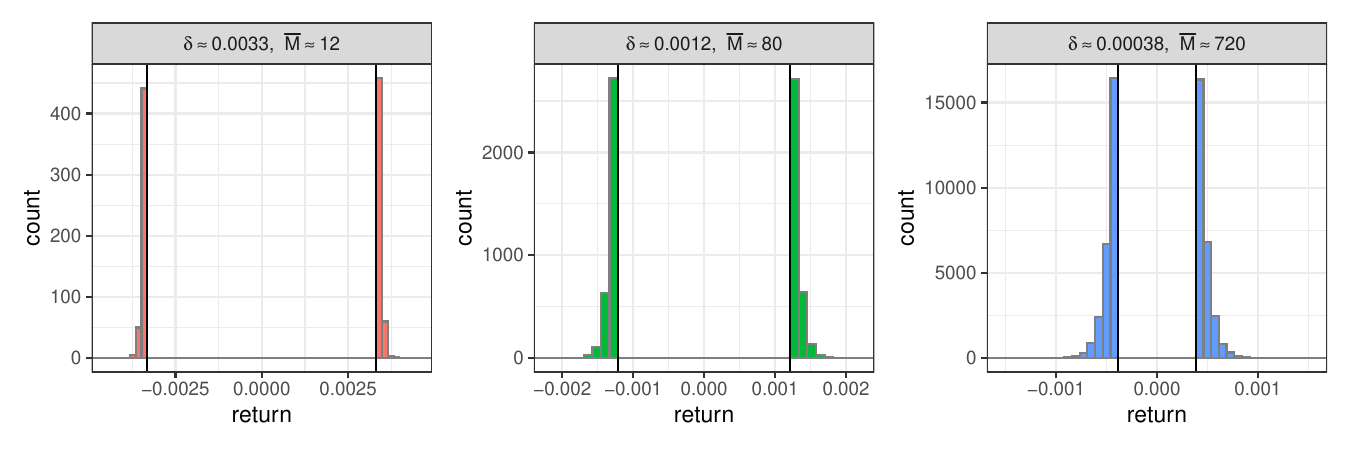}
	\caption{
        Histograms of the simulated HTS returns at different values of $\delta$ in \eqref{eq:HTS_Implementation} (and corresponding average values $\overline{M}$ shown in the plot titles).
        Here, we see the ``overshooting'' effect of the HTS returns in discrete price processes that becomes more severe for smaller values of $\delta$.
	}
	\label{fig:HTS_Overshooting}
\end{figure}

\begin{figure}[tb] 
	\centering
	\includegraphics[width=1\textwidth]{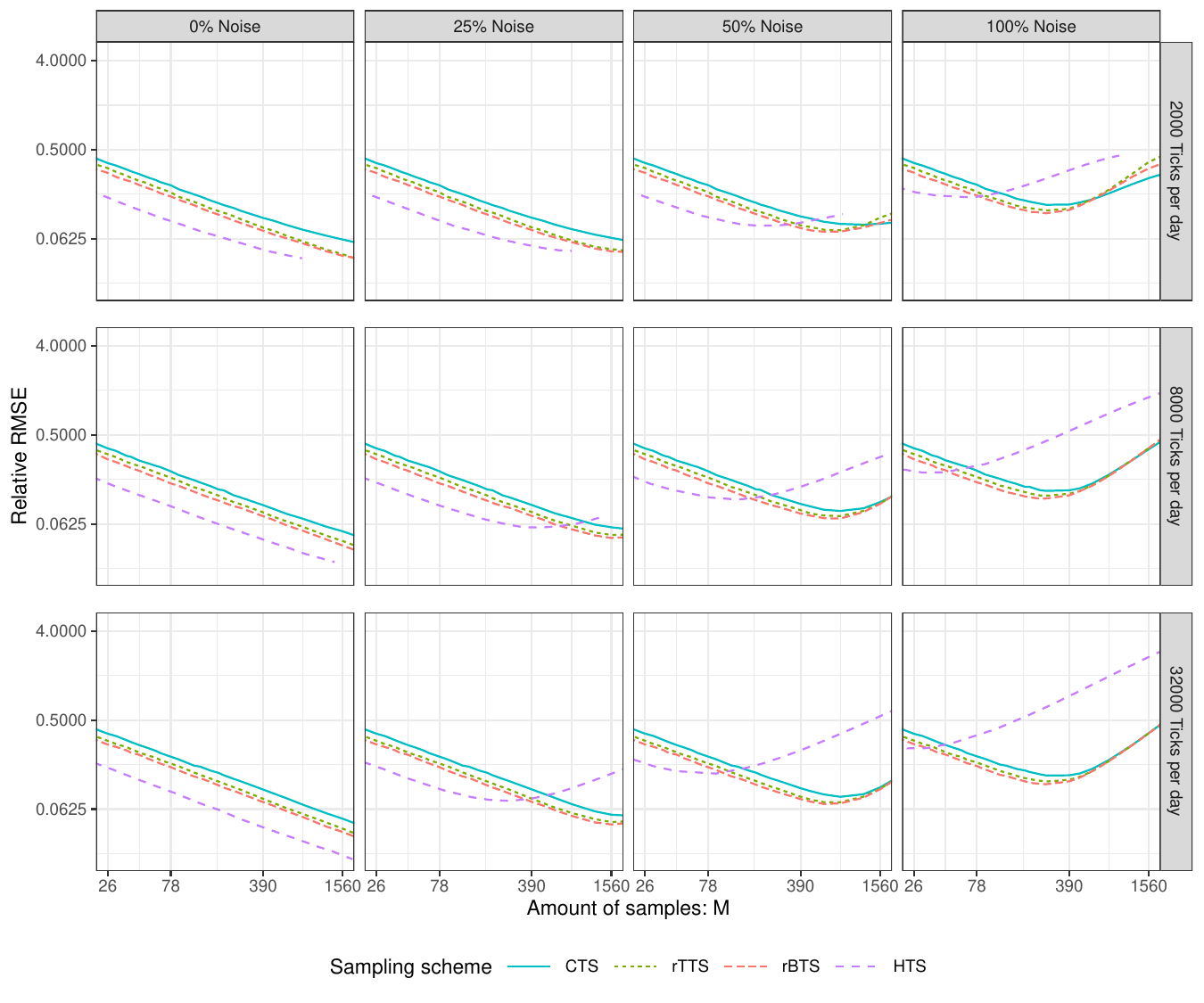}
    \caption{
		Relative RMSE of the RV estimator under the Hawkes-type TTSV process using different sampling schemes in color plotted against the (for HTS average) sampling frequencies $M$ on the horizontal axis.
        The plot columns refer to the (i.i.d.) noise magnitude described below \eqref{eq:SimulateNoise} and the plot rows refer to different amounts of expected ticks per day.}
	\label{fig:RMSE_TickPerDay}
\end{figure}

\begin{figure}[bt] 
	\centering
	\includegraphics[width=1\textwidth]{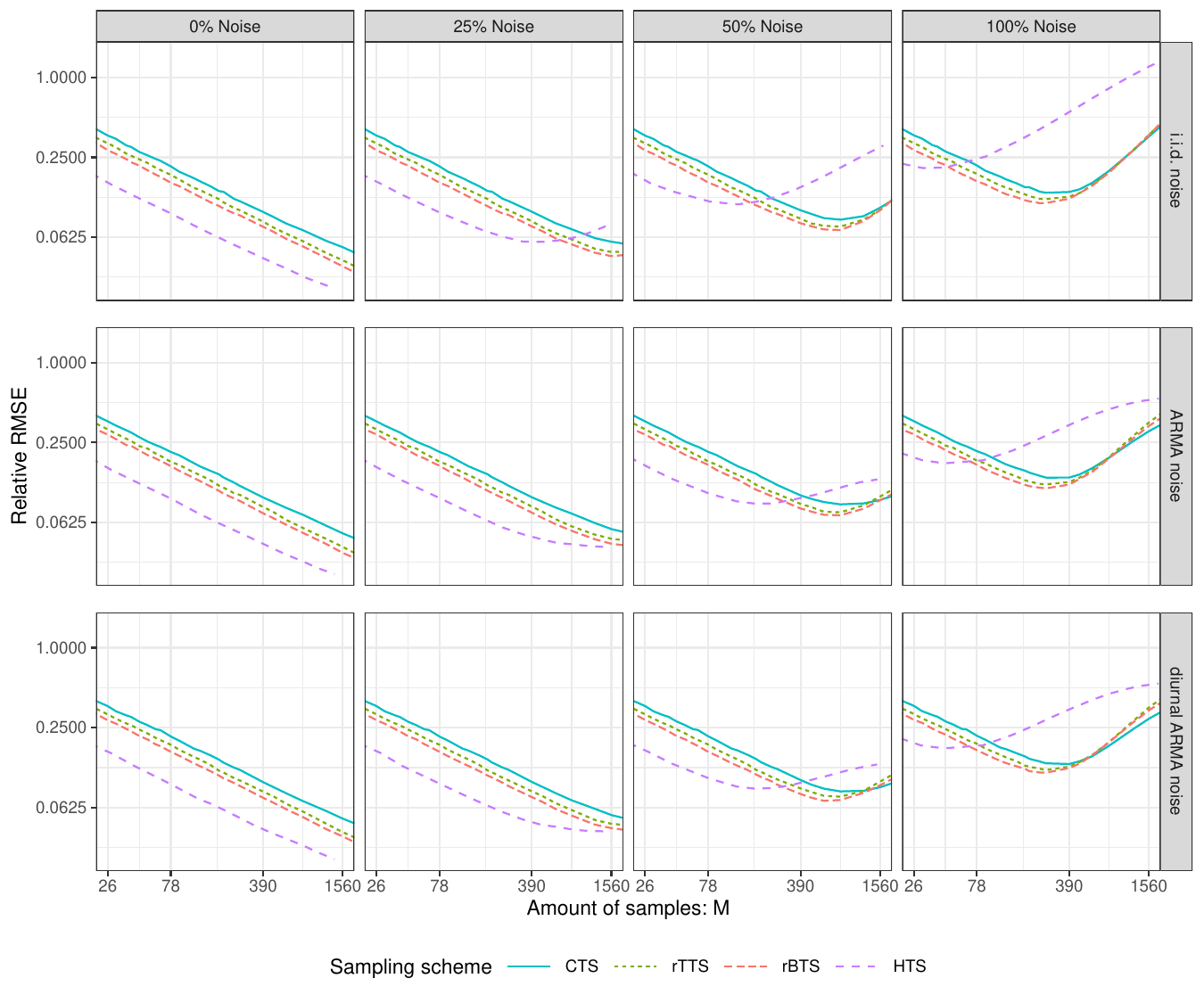}
    \caption{
		Relative RMSE of the RV estimator under the Hawkes-type TTSV process using different sampling schemes in color plotted against the (for HTS average) sampling frequencies $M$ on the horizontal axis.
        The plot rows refer to different specifications of the noise process and the plot columns refer to the noise magnitude described below \eqref{eq:SimulateNoise}.}
	\label{fig:RMSE_NoiseProcess}
\end{figure}

\begin{figure}[tb] 
	\centering
	\includegraphics[width=1\textwidth]{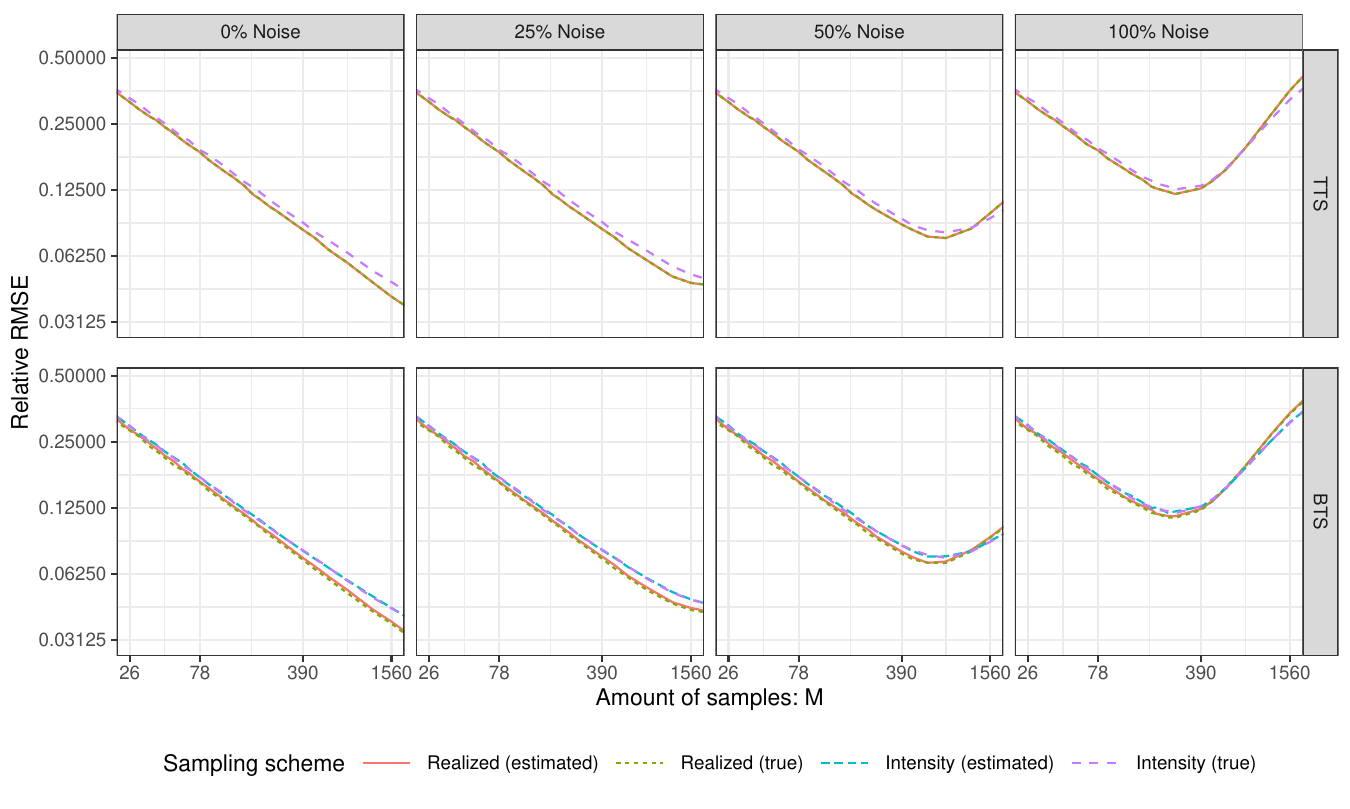}
    \caption{
		Relative RMSE of the RV estimator (for TTS and BTS in the plot rows) plotted against the sampling frequencies $M$ and for different realized and intensity based sampling schemes in color. The ``estimated'' schemes refer to estimation of the underlying intensities whereas the ``true'' versions employ the true (oracle) intensities.
	}
	\label{fig:RMSE_Intensity}
\end{figure}

\begin{figure}[tb] 
	\centering
	\includegraphics[width=1\textwidth]{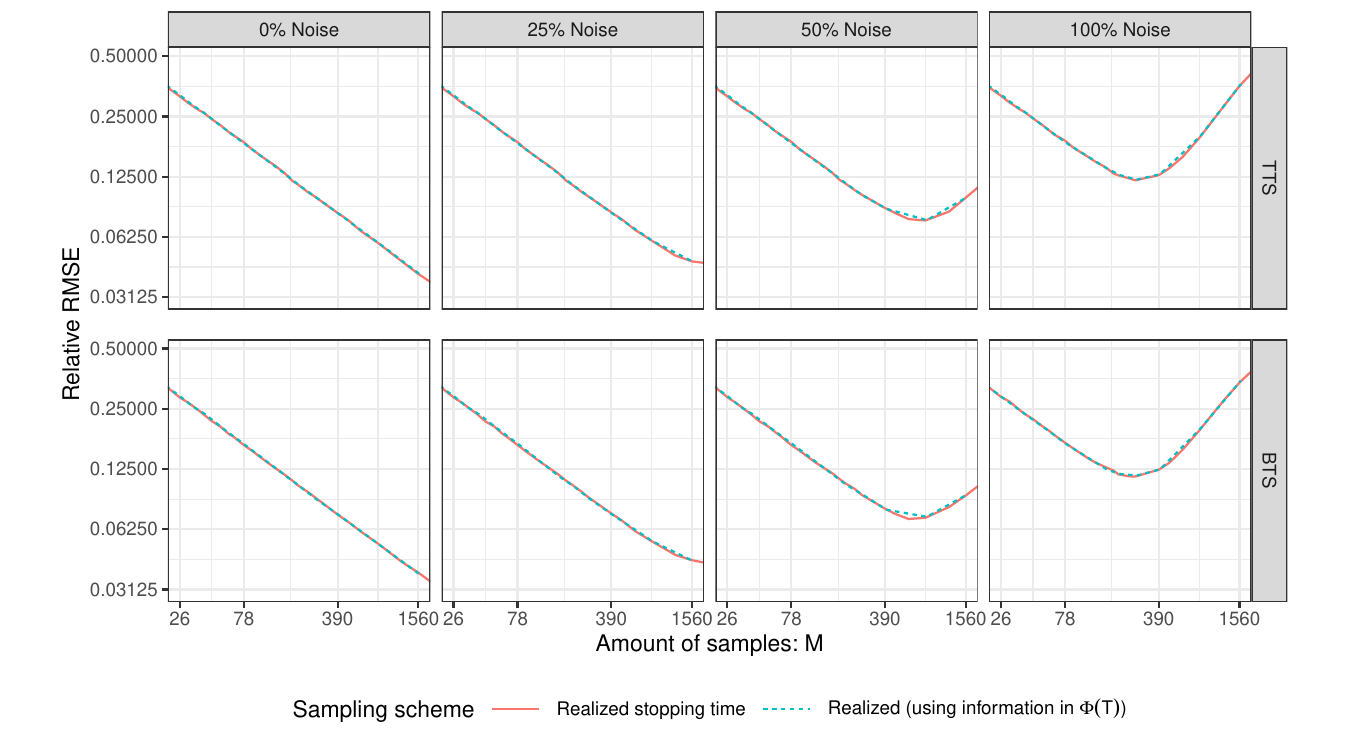}
	\caption{
		Relative RMSE of the RV estimator (for TTS and BTS in the plot rows) plotted against the (average) sampling frequencies $M$, where the colored lines refer to the stopping time versions (that generate random values for $M$) and the versions that use information $\Phi(T)$ to fix $M$; see the discussion in Section~\ref{sec:SamplingSchemes}.
	}
	\label{fig:RMSE_Stopping}
\end{figure}

\begin{table}[tb]
	\centering  
	\begin{tabular}{llrrlrrllrrlrrrr}
		\toprule
        \multicolumn{7}{c}{Sampling vs. CTS} & $\qquad \qquad$ &  \multicolumn{7}{c}{Sampling vs. rBTS}  \\
		\cmidrule{1-7}  	\cmidrule{9-15} 
		&& \multicolumn{2}{c}{MSE} &&  \multicolumn{2}{c}{QLIKE} &&&& \multicolumn{2}{c}{MSE} &&  \multicolumn{2}{c}{QLIKE} \\
		\cmidrule{3-4}  	\cmidrule{6-7}  	\cmidrule{11-12} 	\cmidrule{14-15} 	
		Sampling &  & pos & neg  &  & pos & neg& & Sampling & &pos & neg &  & pos & neg \\ 
		\midrule
        \multicolumn{15}{l}{\textbf{Panel A}: Matching $M_\delta$ to $M$ separately for every \textit{day and asset}:} \\
        &&&&&&&& CTS &  & 0 & 56 &  & 2 & 90 \\  
        rTTS &  & 46 & 0 &  & 64 & 8 && rTTS &  & 3 & 42 &  & 0 & 89 \\ 
        iBTS &  & 43 & 1 &  & 95 & 0 && iBTS &  & 4 & 29 &  & 14 & 27 \\ 
        rBTS &  & 56 & 0 &  & 90 & 2 \\ 
        HTS &  & 56 & 3 &  & 86 & 4  && HTS &  & 33 & 19 &  & 73 & 10 \\ 
        \midrule
        \multicolumn{15}{l}{\textbf{Panel B}: Matching (monthly average of) $M_\delta$ to $M$, separately on every \textit{month and asset}:} \\
        &&&&&&&& CTS &  & 0 & 56 &  & 2 & 90 \\ 
        rTTS &  & 45 & 0 &  & 65 & 8 && rTTS &  & 3 & 42 &  & 0 & 89 \\ 
        iBTS &  & 44 & 1 &  & 95 & 0 && iBTS &  & 4 & 29 &  & 14 & 27 \\ 
        rBTS &  & 56 & 0 &  & 90 & 2 \\ 
        HTS &  & 49 & 2 &  & 86 & 4 && HTS &  & 32 & 14 &  & 71 & 10 \\ 
        \midrule
        \multicolumn{15}{l}{\textbf{Panel C}: Matching (all-time average of)  $M_\delta$ to $M$, separately for every \textit{asset}:} \\
        &&&&&&&& CTS &  & 0 & 56 &  & 2 & 90 \\ 
        rTTS &  & 45 & 0 &  & 64 & 8 &&   rTTS &  & 4 & 42 &  & 0 & 89 \\ 
        iBTS &  & 43 & 1 &  & 95 & 0 &&   iBTS &  & 4 & 29 &  & 14 & 26 \\ 
        rBTS &  & 56 & 0 &  & 90 & 2 \\ 
        HTS &  & 47 & 4 &  & 85 & 4 &&   HTS &  & 31 & 15 &  & 67 & 10 \\ 
        \midrule
        \multicolumn{15}{l}{\textbf{Panel D}: Matching (average over days and assets) $M_\delta$ to $M$:} \\
        &&&&&&&& CTS &  & 0 & 55 &  & 2 & 90 \\ 
        rTTS &  & 47 & 0 &  & 64 & 8 && rTTS &  & 4 & 42 &  & 0 & 89 \\ 
        iBTS &  & 43 & 1 &  & 95 & 0 && iBTS &  & 4 & 29 &  & 14 & 27 \\ 
        rBTS &  & 56 & 0 &  & 90 & 2 \\ 
        HTS &  & 44 & 4 &  & 85 & 6 &&  HTS &  & 30 & 16 &  & 64 & 12 \\ 
		\bottomrule
	\end{tabular}
    \caption{Percentage values of significantly positive (``pos'') and negative (``neg'')  MSE and QLIKE loss differences between the sampling schemes mentioned in the column ``Sampling'' against the one in the title using the method of \cite{Patton2011RV}.
	The percentage values are computed over the 27 assets and the seven employed values of $M$ for the respective estimators.
    The four panels A--D correspond to different methods how the daily varying $M_\delta$ of HTS is matched to the fixed values of $M$ of the other sampling schemes, which is further described in footnote \ref{fn:HTS_Matching}. Panel A corresponds to the results presented in Table~\ref{tab:applSigPosNegValues}.}
	\label{tab:applSigPosNegValuesAggregation}
\end{table}

\begin{figure}[p]
	\centering
	\includegraphics[width=0.9\textwidth]{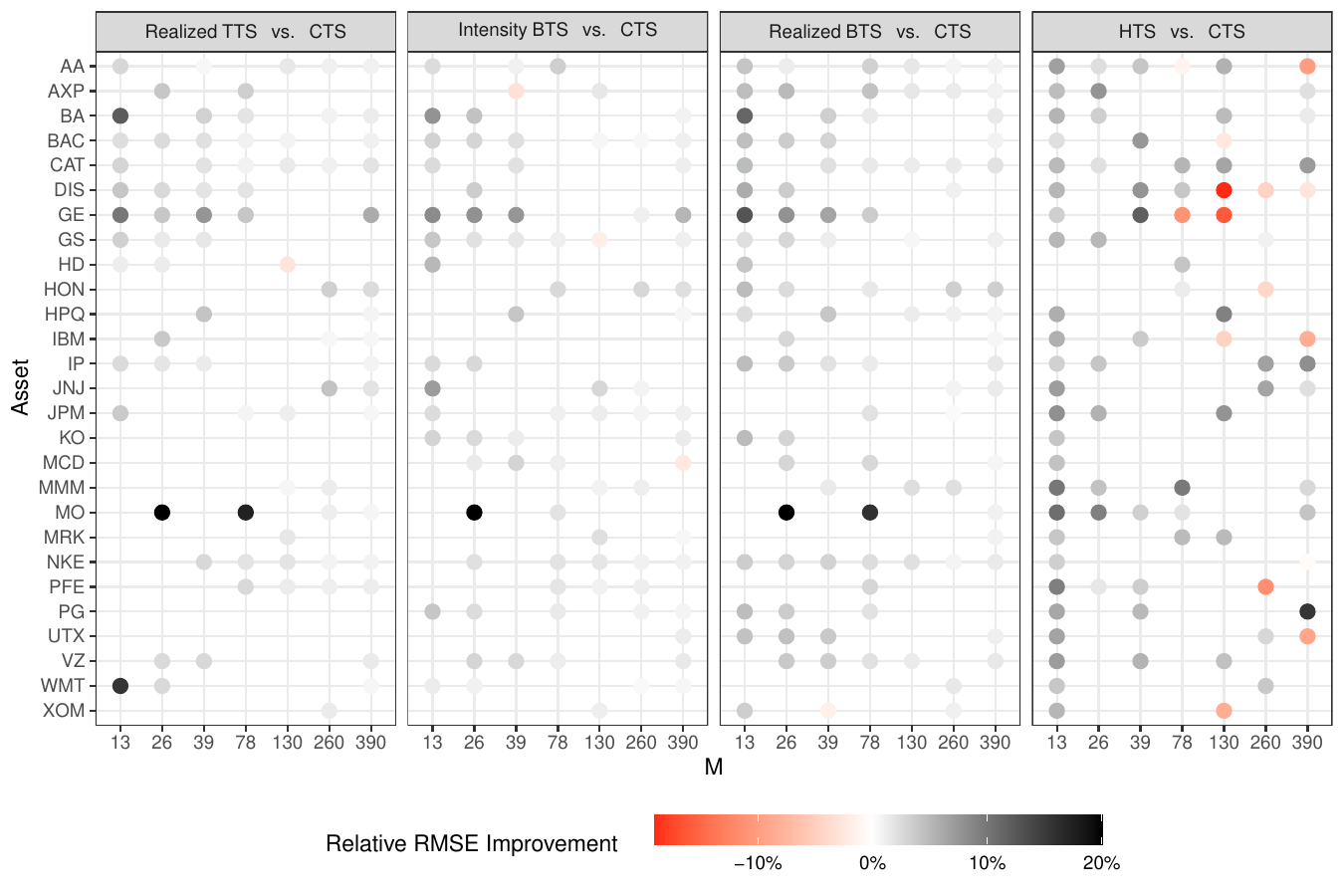}
	\includegraphics[width=0.9\textwidth]{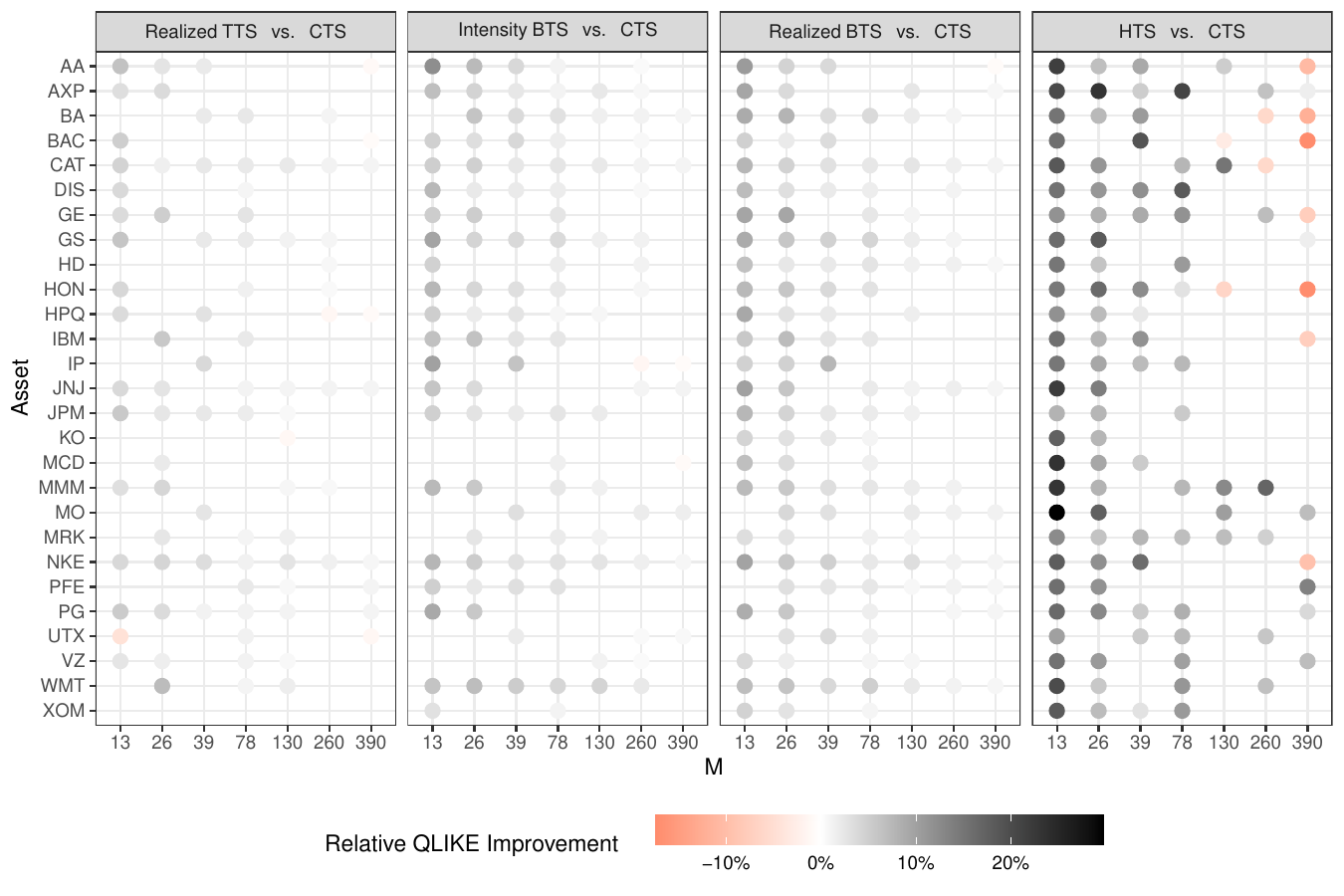}
	\caption{RMSE (top) and QLIKE (bottom) loss differences for the RV estimator based on different sampling schemes and a range of sampling frequencies $M$ for the 27 considered assets.
	Each point corresponds to a (at the $5\%$ level) significant  loss difference of the corresponding RV estimator to a \emph{benchmark CTS RV estimator} with the same sampling frequency.
    For the evaluation proxy, we use daily squared returns here.
	Insignificant loss differences are omitted.
	The color scale of the points shows the relative improvement in terms of RMSE/QLIKE, where black (red) colors refer to an improvement (decline).}
	\label{fig:appl_RV_eval_vsCTS_SqRet}
\end{figure}

\begin{figure}[p]
	\centering
	\includegraphics[width=0.9\textwidth]{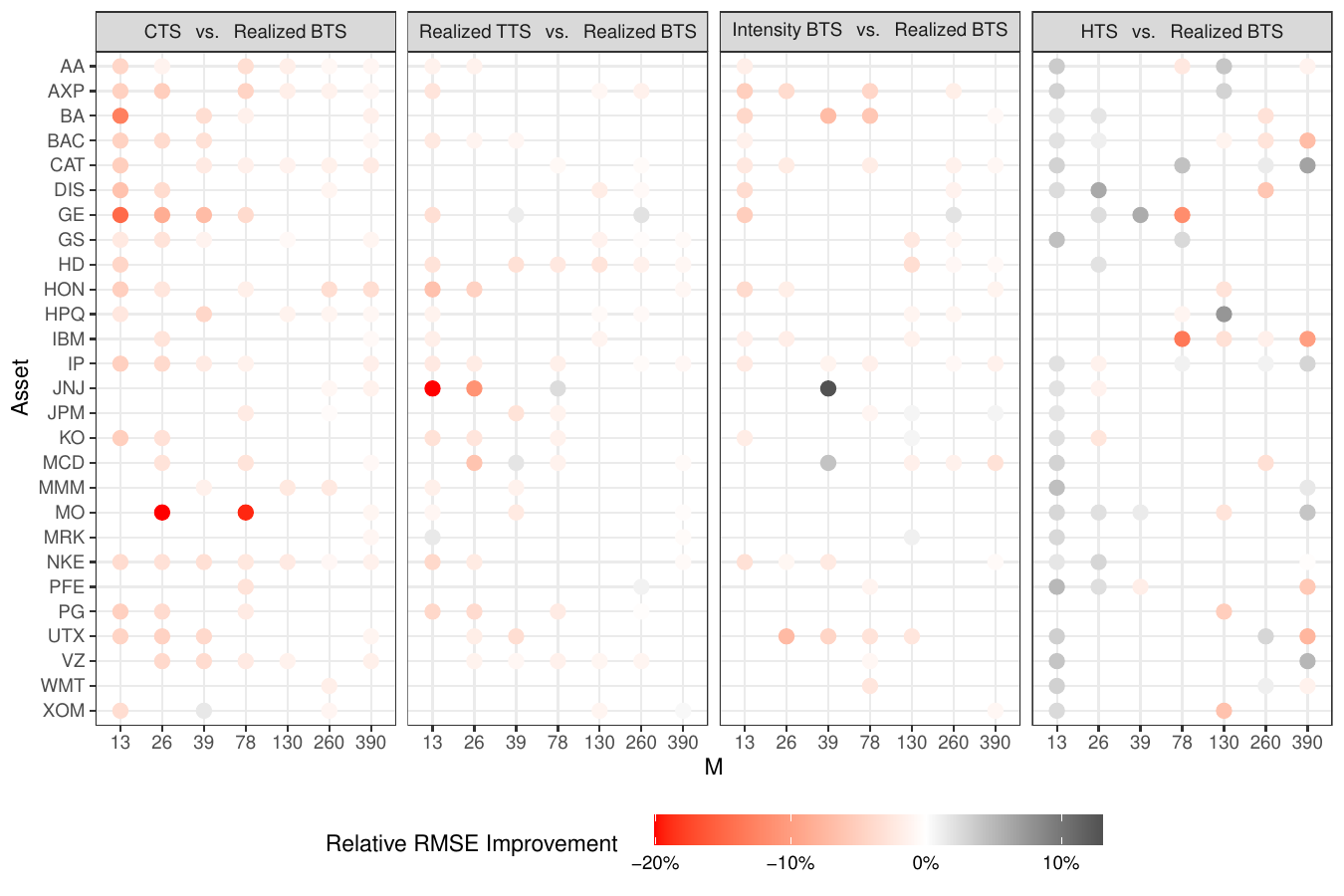}
	\includegraphics[width=0.9\textwidth]{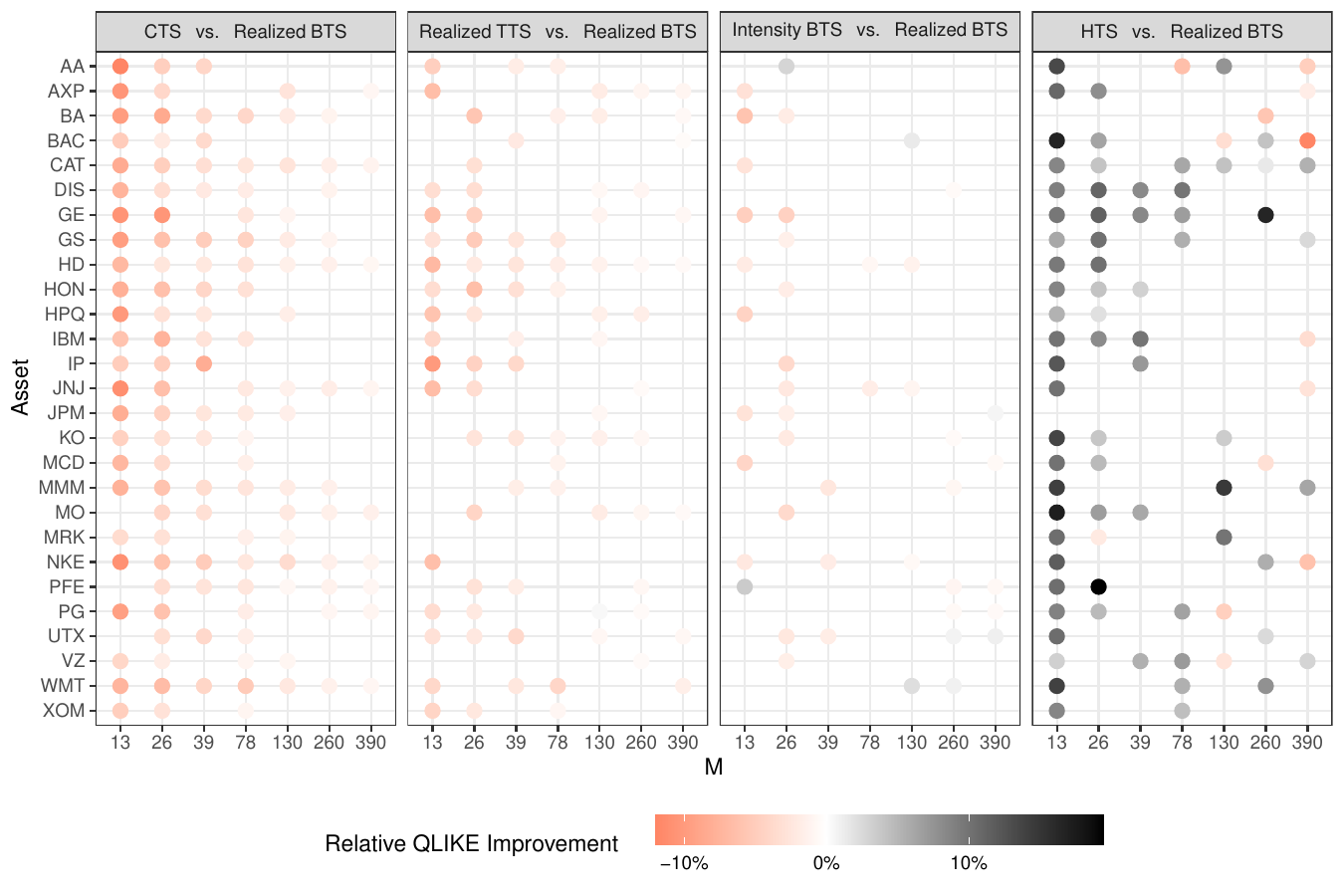}
	\caption{RMSE (top) and QLIKE (bottom) loss differences for the RV estimator based on different sampling schemes and a range of sampling frequencies $M$ for the 27 considered assets.
	Each point corresponds to a (at the $5\%$ level) significant  loss difference of the corresponding RV estimator to a \emph{benchmark rBTS RV estimator} with the same sampling frequency.
    For the evaluation proxy, we use daily squared returns here.
	Insignificant loss differences are omitted.
	The color scale of the points shows the relative improvement in terms of RMSE/QLIKE, where black (red) colors refer to an improvement (decline).}
	\label{fig:appl_RV_eval_vsrBTS_SqRet}
\end{figure}

\begin{figure}[tb]
	\centering
	\includegraphics[width=0.9\textwidth]{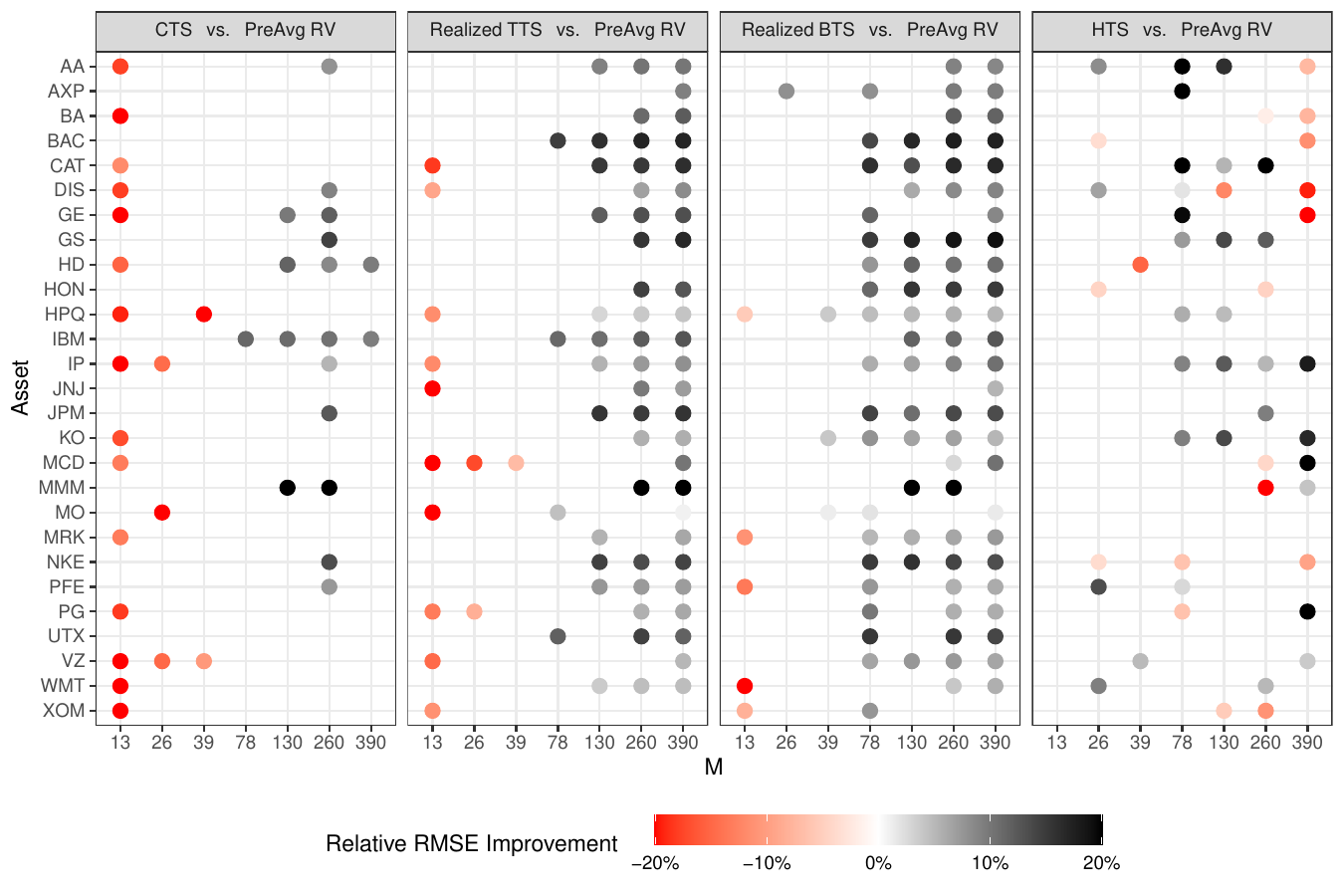}
	\caption{RMSE loss differences for the RV estimator based on different sampling schemes and a range of sampling frequencies $M$ for the 27 considered assets.
    Here, unlike Figure~\ref{fig:appl_RV_eval_vs_PAVG}, we use the (leaded) CTS RV estimator with $M=78$ as the proxy in the evaluation framework of \citet{Patton2011RV}.
	Each point corresponds to a (at the $5\%$ level) significant  loss difference of the corresponding RV estimator to a \emph{benchmark pre-averaging RV estimator} using all tick-level returns.
	Insignificant loss differences are omitted.
	The color scale of the points shows the relative improvement in terms of RMSE, where black (red) colors refer to an improvement (decline).}
	\label{fig:appl_RV_eval_vs_PAVG_CTSProxy}
\end{figure}	

\begin{figure}[tb]
	\centering
	\includegraphics[width=0.9\textwidth]{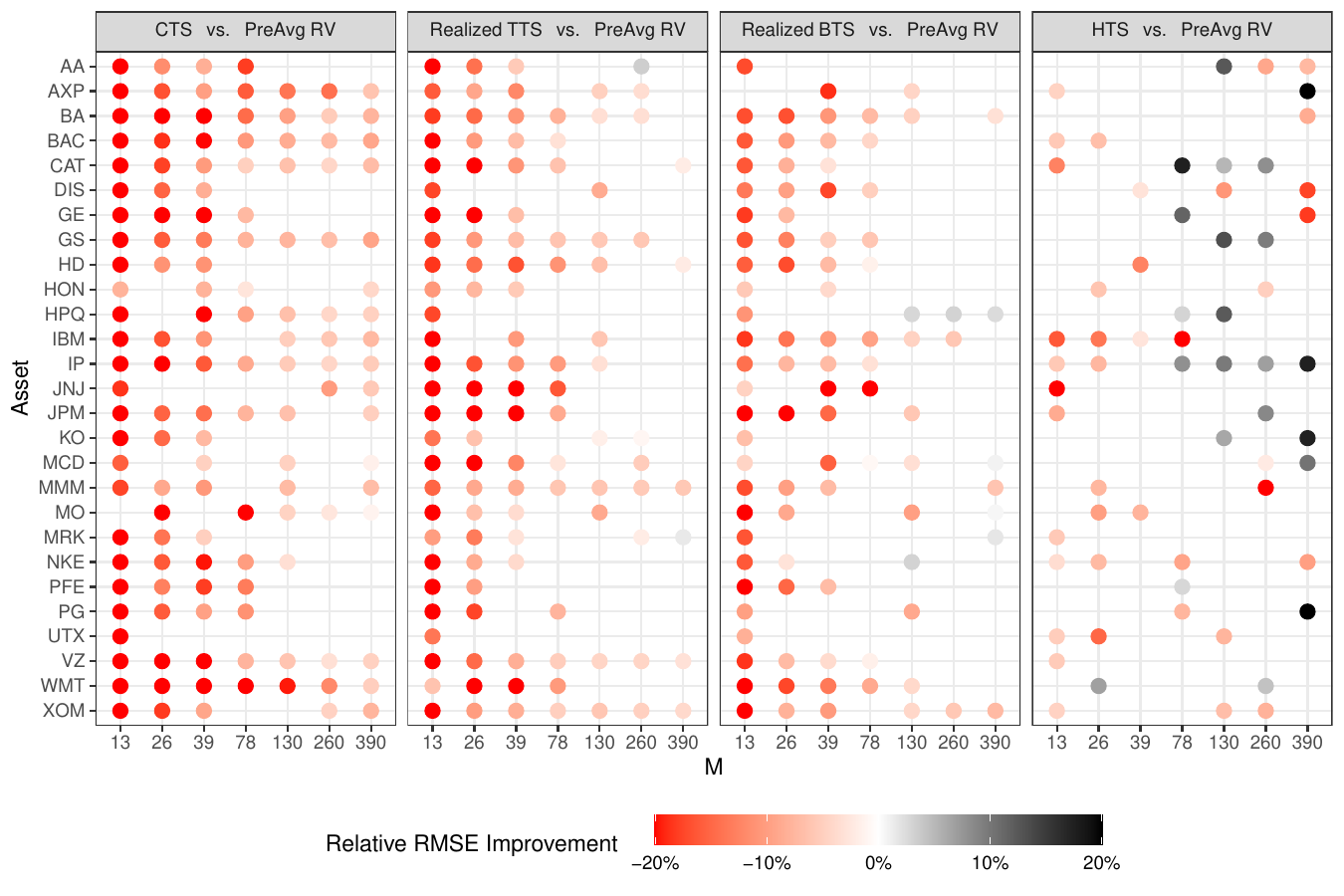}
	\caption{RMSE loss differences for the RV estimator based on different sampling schemes and a range of sampling frequencies $M$ for the 27 considered assets.
    Here, unlike Figure~\ref{fig:appl_RV_eval_vs_PAVG}, we use the (leaded) pre-averaging RV estimator as the proxy in the evaluation framework of \citet{Patton2011RV}.
	Each point corresponds to a (at the $5\%$ level) significant  loss difference of the corresponding RV estimator to a \emph{benchmark pre-averaging RV estimator} using all tick-level returns.
	Insignificant loss differences are omitted.
	The color scale of the points shows the relative improvement in terms of RMSE, where black (red) colors refer to an improvement (decline).}
	\label{fig:appl_RV_eval_vs_PAVG_PAVGProxy}
\end{figure}

\end{document}